\documentclass[11pt]{article}

\usepackage{amssymb}
\usepackage{amsfonts}
\usepackage{amsmath}
\usepackage{amsthm}
\usepackage[english]{babel}
\usepackage{latexsym}
\usepackage{color}
\usepackage{enumerate}
\usepackage{nicefrac}
\usepackage{mathrsfs}
\usepackage{comment}
\usepackage{bm}
\usepackage{bbm}
\usepackage[hmargin=2cm,vmargin=2cm]{geometry}

\usepackage{centernot}

\usepackage[affil-it]{authblk}
\setcounter{Maxaffil}{3}

\theoremstyle{plain}
\newtheorem{theorem}{Theorem}[section]
\newtheorem{lemma}[theorem]{Lemma}
\newtheorem{proposition}[theorem]{Proposition}

\newtheorem{example}[theorem]{Example}
\newtheorem{remark}[theorem]{Remark}
\theoremstyle{definition}
\newtheorem{definition}[theorem]{Definition}

\theoremstyle{remark}

\numberwithin{equation}{section}

\newcommand{\one}{\mathbbm 1}

\newcommand{\agg}{S}

\newcommand{\ind}{\delta}

\newcommand{\ba}{\begin{array}{ll}}
\newcommand{\bal}{\begin{array}{ll}}
\newcommand{\ea}{\end{array}}

\newcommand{\E}{\mathbb{E}}

\newcommand{\probp}{\mathbb{P}}
\newcommand{\probq}{\mathbb{Q}}
\newcommand{\R}{\mathbb{R}}
\newcommand{\N}{\mathbb{N}}

\newcommand{\cF}{{\mathcal{F}}}

\newcommand{\cB}{\mathop {\rm bar}}
\newcommand{\cS}{{\mathcal{S}}}
\newcommand{\cA}{\mathcal{A}}
\newcommand{\cC}{\mathcal{C}}
\newcommand{\cD}{\mathcal{D}}

\newcommand{\cK}{\mathcal{K}}

\newcommand{\cL}{\mathcal{L}}
\newcommand{\cM}{\mathcal{M}}

\newcommand{\cQ}{\mathcal{Q}}
\newcommand{\cU}{\mathcal{U}}

\newcommand{\cX}{{\mathcal{X}}}

\newcommand{\cl}{\mathop{\rm cl}\nolimits}

\newcommand{\VaR}{\mathop {\rm VaR}\nolimits}

\newcommand{\ES}{\mathop {\rm ES}\nolimits}

\def\keywords{\vspace{.5em}
{\noindent\textbf{Keywords}:\,\relax%
}}

\makeatletter
\def\@fnsymbol#1{\ensuremath{\ifcase#1\or 1\or 2\or 3\or 4\or 5\or 6\or 7\or 8\else\@ctrerr\fi}}
\makeatother


\newcommand{\baf}{\begin{array}{llcl}}

\newcommand{\ucl}{\mathop{\rm usc}\nolimits}

\newcommand{\dom}{\mathop{\rm dom}\nolimits}
\newcommand{\cE}{{\mathcal{E}}}


\begin{document}

\title{
Dual representations for systemic risk measures based on acceptance sets
}

\author{
\sc{Maria Arduca}
}
\affil{Department of Statistics and Quantitative Methods, University of Milano-Bicocca}
\author{
\sc{Pablo Koch-Medina},
\sc{Cosimo Munari}
}
\affil{Center for Finance and Insurance and Swiss Finance Institute, University of Zurich, Switzerland}

\date{\today}

\maketitle

\begin{abstract}
\noindent
We establish dual representations for systemic risk measures based on acceptance sets in a general setting. We deal with systemic risk measures of both ``first allocate, then aggregate'' and ``first aggregate, then allocate'' type. In both cases, we provide a detailed analysis of the corresponding systemic acceptance sets and their support functions. The same approach delivers a simple and self-contained proof of the dual representation of utility-based risk measures for univariate positions.
\end{abstract}

\keywords{systemic risk, macroprudential regulation, risk measures, dual representations}

\parindent 0em \noindent


\section{Introduction}

The study of risk measures for multivariate positions was first developed by Burgert and R\"{u}schendorf~\cite{BurgertRueschendorf2006}, R\"{u}schendorf~\cite{Rueschendorf2006}, Ekeland and Schachermayer~\cite{EkelandSchachermayer2011}, and Ekeland et al.~\cite{EkelandGalichonHenry2012} and extended to the set-valued case by Jouini et al.~\cite{JouiniMeddebTouzi2004}, Hamel and Heyde~\cite{HamelHeyde2010}, Hamel et al.~\cite{HamelHeydeRudloff2011}, and Molchanov and Cascos~\cite{MolchanovCascos2016}. In this literature, multivariate positions are typically interpreted as (random) portfolios of financial assets. In recent years, there has been significant interest in extending the theory of risk measures to a systemic risk setting, in which multivariate positions represent the (random) vector of future capital positions, i.e.\ assets net of liabilities, of financial institutions. In this setting, one considers a system of $d$ financial institutions whose respective capital positions at a fixed future date is represented by a random vector
\[
X=(X_1,\dots,X_d).
\]
The bulk of the literature assumes that a macroprudential regulator specifies an ``aggregation function''
\[
\Lambda:\R^d\to\R
\]
by means of which the system is summarized into a single (univariate) aggregated position $\Lambda(X)$. The simplest aggregation function is given by $\Lambda(x)=\sum_{i=1}^dx_i$ and corresponds to aggregating the entire system into a single consolidated balance sheet. The regulator also specifies a set $\cA$ of ``acceptable'' aggregated positions: The level of systemic risk of the financial system is deemed acceptable whenever $\Lambda(X)$ belongs to the prescribed acceptance set $\cA$. Two main classes of systemic risk measures based on aggregation functions and acceptance sets have been studied in the literature.

\smallskip

A first branch of the literature adopts a so-called ``first allocate, then aggregate'' approach, which is the macroprudential equivalent of the fundamental idea introduced in the context of  microprudential regulation by Artzner et al.~\cite{ArtznerDelbaenEberHeath1999}: To ensure that the financial system has an acceptable level of systemic risk, the macroprudential regulator can require each of the member institutions to raise a suitable amount of capital. Such a requirement is represented by a vector $m=(m_1,\dots,m_d)\in\R^d$, where $m_i$ corresponds to the amount of capital raised by institution $i$. This leads to a systemic risk measure of the form
\[
\rho(X) = \inf\left\{\sum_{i=1}^dm_i \,; \ m\in\R^d, \ \Lambda(X+m)\in\cA\right\}.
\]
The quantity $\rho(X)$ corresponds to the minimum amount of aggregate capital that needs to be injected into the financial system to ensure acceptability. This type of systemic risk measures has been studied in Feinstein et al.~\cite{FeinsteinRudloffWeber2017}, Armenti et al.~\cite{ArmentiCrepeyDrapeauPapapantoleon2018}, Ararat and Rudloff~\cite{AraratRudloff}, and Biagini et al.~\cite{BiaginiFouqueFrittelliMeyerBrandis2019a, BiaginiFouqueFrittelliMeyerBrandis2019b} (where random allocations of the aggregate capital requirement are also considered).

\smallskip

A second branch of the literature advocates a ``first aggregate, then allocate'' approach and studies systemic risk measures of the form
\[
\widetilde{\rho}(X) = \inf\{m\in\R \,; \ \Lambda(X)+m\in\cA\}.
\]
In this case, the quantity $\widetilde{\rho}(X)$ represents the minimal amount of capital that has to be added to the aggregated position to reach acceptability. In contrast to $\rho$, the operational interpretation of $\widetilde{\rho}$ is not straightforward since it is unclear how much each of the member institutions should contribute to the aggregate amount of required capital or, if the outcome of the above risk measure is interpreted as a bail-out cost, which institution should obtain which amount. Such systemic risk measures have been studied in Chen et al.~\cite{ChenIyengarMoallemi2013}, Kromer et al.~\cite{KromerOverbeckZilch2016}, and Ararat and Rudloff~\cite{AraratRudloff}.

\smallskip

The main objective of this note is to establish dual representations for the above systemic risk measures in a general setting with special emphasis on systemic risk measures of the ``first allocate, then aggregate'' type. By doing so, we provide a unifying perspective on the existing duality results in the literature. More precisely, we consider an arbitrary probability space $(\Omega,\cF,\probp)$ and assume that the multivariate positions belong to a space of $d$-dimensional random vectors $\cX$ that is in duality with another space of $d$-dimensional random vectors $\cX'$ through the pairing
\[
(X,Z) \mapsto \sum_{i=1}^d\E_\probp[X_iZ_i]
\]
for $X\in\cX$ and $Z\in\cX'$. This setup is general enough to cover all the interesting examples encountered in the literature. Dual representations for $\rho$ have been studied by Armenti et al.~\cite{ArmentiCrepeyDrapeauPapapantoleon2018} and Biagini et al.~\cite{BiaginiFouqueFrittelliMeyerBrandis2019b} in the setting of Orlicz hearts and acceptance sets based on (multivariate) utility functions and by Ararat and Rudloff~\cite{AraratRudloff} in the setting of bounded random variables with only mild restrictions on the acceptance set. The strategy in~\cite{ArmentiCrepeyDrapeauPapapantoleon2018} is to apply Lagrangian techniques while that of~~\cite{AraratRudloff} is to rely on the dual representation of $\widetilde{\rho}$, which is tackled by using Fenchel-Moreau techniques for composed maps. Similarly to~\cite{BiaginiFouqueFrittelliMeyerBrandis2019b}, the point of departure of this note is to observe that $\rho$ can be written as
\[
\rho(X) = \inf\{\pi(m) \,; \ m\in\R^d, \ X+m\in\Lambda^{-1}(\cA)\}
\]
where the ``acceptance set'' $\Lambda^{-1}(\cA)$ and the ``cost functional'' $\pi:\R^d\to\R$ are given by
\[
\Lambda^{-1}(\cA)=\{X\in\cX \,; \ \Lambda(X)\in\cA\}, \ \ \ \ \pi(m) = \sum_{i=1}^dm_i.
\]
This shows that $\rho$ belongs to the class of ``multi-asset risk measures'' introduced in Frittelli and Scandolo~\cite{FrittelliScandolo2006} and thoroughly studied in Farkas et al.~\cite{kochmunari_multiasset}. Under suitable conditions on $\Lambda$ and $\cA$, the general duality results obtained in those papers yield the representation
\[
\rho(X) = \sup\left\{\sigma(\probq_1,\dots,\probq_d)-
\sum_{i=1}^d\E_{\probq_i}[X_i] \,; \ \probq_1,\dots,\probq_d\ll\probp, \ \bigg(\frac{d\probq_1}{d\probp},\dots,\frac{d\probq_d}{d\probp}\bigg)\in\cX'\right\}
\]
where the objective function is given by
\[
\sigma(\probq_1,\dots,\probq_d)=\inf_{X\in\Lambda^{-1}(\cA)}\sum_{i=1}^d\E_{\probq_i}[X_i].
\]
The map $\sigma$ corresponds to the (lower) support function of the systemic acceptance set $\Lambda^{-1}(\cA)$ and plays a fundamental role in the dual representation. The main technical contribution of this note is to provide a detailed analysis of these objects. In particular, we devote some effort to obtain a more explicit description of $\sigma$ in terms of the primitives $\Lambda$ and $\cA$. It is worth noting that, in the systemic setting, the closedeness of the acceptance set $\Lambda^{-1}(\cA)$ does not necessarily imply the lower semicontinuity of $\rho$, which is a necessary condition for $\rho$ to admit a dual representation. Hence, it is important to provide conditions on the primitives $\Lambda$ and $\cA$ to ensure that $\rho$ is lower semicontinuous in the first place. To this effect, we rely on the abstract results in Farkas et al.~\cite{kochmunari_multiasset}, where an effort was made to derive properties of risk measures, such as lower semicontinuity, from the properties of the underlying acceptance sets. In particular, the duality results for acceptance sets obtained there build the starting point for our analysis of the systemic acceptance set $\Lambda^{-1}(\cA)$.

\smallskip

On the other side, dual representations for $\widetilde{\rho}$ have been studied by Chen et al.~\cite{ChenIyengarMoallemi2013} in a finite-dimensional setting, by Ararat and Rudloff~\cite{AraratRudloff} in the setting of bounded random vectors, and by Kromer et al.~\cite{KromerOverbeckZilch2016} at our level of generality. In line with those papers, our starting point is to observe that $\widetilde{\rho}$ can be expressed as a standard cash-additive risk measure (namely the cash-additive risk measure associated with $\cA$, which we denote by $\rho_\cA$) applied to the aggregated univariate position as
\[
\widetilde{\rho}(X) = \rho_\cA(\Lambda(X)).
\]
However, instead of working out the Fenchel-Moreau representation of a composition of maps, we exploit the standard dual representation for $\rho_\cA$ to derive the desired representation in a direct way. In this case, conditions to ensure the lower semicontinuity of $\widetilde{\rho}$ are also easier to formulate.

\smallskip

Finally, to illustrate the convenience of our approach to duality based on acceptance sets, we provide a simple and self-contained proof of the dual representation for utility-based risk measures for univariate positions, which can be viewed as special systemic risk measures where $d=1$ and $\Lambda$ is a von Neumann-Morgenstern utility function.


\section{The setting}
\label{sect: rho introduced}

In this section, we recall the necessary terminology and notation from probability theory and convex analysis and describe our setting. For a general presentation of convex duality we refer to Aliprantis and Border~\cite{AliprantisBorder2006} and Z\u{a}linescu~\cite{Zalinescu2002}.


\subsection{Basic notation}

Throughout, we fix a probability space $(\Omega,\cF,\probp)$. We denote by $L^0$ the vector space of (equivalence classes with respect to almost-sure equality of) random variables, i.e.\ Borel-measurable functions $X:\Omega\to\R$. As usual, we do not distinguish explicitly between an equivalence class and any of its representatives. For every $p\in[1,\infty]$ we denote by $L^p$ the space of $p$-integrable random variables if $p<\infty$ and the space of bounded random variables if $p=\infty$.

\smallskip

A (nonzero) linear subspace $\cL\subset L^0$ is said to be {\em admissible} if it is a Banach lattice (with respect to the almost-sure partial order) such that $L^\infty\subset\cL\subset L^1$. In this case, we set
\[
\cL' := \{Z\in L^0 \,; \ \E\left[|XZ|\right]<\infty, \ \forall X\in\cL\},
\]
where $\E$ denotes the expectation with respect to $\probp$. Note that we always have $L^\infty\subset\cL'\subset L^1$.

\begin{example}
The class of admissible spaces contains Orlicz spaces, which include the standard examples encountered in the literature. A nonconstant function $\Phi:[0,\infty)\to[0,\infty]$ is said to be an Orlicz function if it is convex, nondecreasing, right-continuous at $0$, and satisfies $\Phi(0)=0$. The convex conjugate of $\Phi$ is the function $\Phi^\ast:[0,\infty)\to[0,\infty]$ given by
\[
\Phi^\ast(t) := \sup_{s\in[0,\infty)}\{st-\Phi(s)\}.
\]
It is easy to see that $\Phi^\ast$ is also an Orlicz function. The Orlicz space associated with $\Phi$ is
\[
L^\Phi := \left\{X\in L^0 \,; \ \E\left[\Phi\left(\frac{|X|}{\lambda}\right)\right]<\infty \ \mbox{for some $\lambda\in(0,\infty)$}\right\}.
\]
The corresponding Orlicz heart is defined by
\[
H^\Phi := \left\{X\in L^0 \,; \ \E\left[\Phi\left(\frac{|X|}{\lambda}\right)\right]<\infty \ \mbox{for every $\lambda\in(0,\infty)$}\right\}.
\]
These spaces are Banach lattices with respect to the Luxemburg norm
\[
\|X\|_\Phi := \inf\left\{\lambda\in(0,\infty) \,; \ \E\left[\Phi\left(\frac{|X|}{\lambda}\right)\right]\leq1\right\}.
\]
The norm dual of $L^\Phi$ cannot be identified with a subspace of $L^0$ in general. However, the norm dual of $H^\Phi$ can always be identified with $L^{\Phi^\ast}$ provided that $\Phi$ is finite valued (otherwise $H^\Phi=\{0\}$). We say that $\Phi$ satisfies the $\Delta_2$ condition if there exist $r\in(0,\infty)$ and $k\in(0,\infty)$ such that $\Phi(2s)<k\Phi(s)$ for every $s\in[r,\infty)$. It is well-known that, if $\Phi$ is $\Delta_2$, then $L^\Phi=H^\Phi$. In a nonatomic setting the converse implication also holds.

\smallskip

Every $L^p$ space, with $p\in[1,\infty]$, can be viewed as an Orlicz space. Indeed for every $p\in[1,\infty)$ we have $L^\Phi=L^p$ if we set $\Phi(s)=s^p$ for every $s\in[0,\infty)$. In this case, the Luxemburg norm coincides with the standard $L^p$ norm and we have $L^\Phi=H^\Phi$. Moreover, we have $L^\Phi=L^\infty$ if we define $\Phi(s)=0$ for $s\in[0,1]$ and $\Phi(s)=\infty$ otherwise. In this case, the Luxemburg norm coincides with the standard $L^\infty$ norm and we have $H^\Phi=\{0\}$.

\smallskip

The following statements hold; see e.g.\ Edgar and Sucheston~\cite{EdgarSucheston1992} or Meyer-Nieberg~\cite{MeyerNieberg1991}:
\begin{enumerate}[(1)]
  \item $\cL=L^\Phi$ is admissible and $\cL'=L^{\Phi^\ast}$;
  \item $\cL=H^\Phi$ is admissible if $\Phi$ is finite valued, in which case $\cL'=L^{\Phi^\ast}$;
  \item $\cL=L^p$ is admissible if $p\in[1,\infty]$, in which case $\cL'=L^{\frac{p}{p-1}}$ (with the convention $\frac{1}{0}=\infty$ and $\frac{\infty}{\infty}=1$).
\end{enumerate}
\end{example}

\medskip

Fix $m\in\N$. We always consider the standard inner product $\langle\cdot,\cdot\rangle:\R^m\times\R^m\to\R$ defined by
\[
\langle a,b\rangle := \sum_{i=1}^ma_ib_i.
\]
We denote by $L^0(\R^m)$ the vector space of all $m$-dimensional random vectors. We say that a linear subspace $\cL\subset L^0(\R^m)$ is {\em admissible} whenever
\[
\cL = \cL_1\times\cdots\times\cL_m
\]
with admissible $\cL_1,\dots,\cL_m\subset L^0$. Note that, being the product of Banach lattices, the space $\cL$ is also a Banach lattice. In particular, the lattice operations on $\cL$ are understood component by component. As above, we define
\[
\cL' := \cL'_1\times\cdots\times\cL'_m.
\]
The pair $(\cL,\cL')$ is equipped with the bilinear form $(\cdot\vert\cdot):\cL\times\cL'\to\R$ given by
\[
(X\vert Z) := \E[\langle X,Z\rangle] = \sum_{i=1}^m\E[X_iZ_i].
\]
The coarsest topology on $\cL$ making the linear functional $X\mapsto(X\vert Z)$ continuous for every $Z\in\cL'$ is denoted by $\sigma(\cL,\cL')$. Similarly, the coarsest topology on $\cL'$ making the linear functional $Z\mapsto(X\vert Z)$ continuous for every $X\in\cL$ is denoted by $\sigma(\cL',\cL)$. Equipped with these topologies, $\cL$ and $\cL'$ are locally-convex topological vector spaces (which are also Hausdorff because the above form is separating).

\smallskip

The following standard notions from convex analysis will be freely used throughout the note. For a subset $\cS\subset L^0(\R^m)$ we denote by $\cS_+$, respectively $\cS_{++}$, the set of all random vectors in $\cS$ with components that are almost-surely nonnegative, respectively strictly-positive. The domain of finiteness of a map $f:\cL\to[-\infty,\infty]$ is defined by
\[
\dom(f) := \{X\in\cL \,; \ f(X)\in\R\}.
\]
The {\em (lower) support function} of a (nonempty) set $\cA\subset\cL$ is the map $\sigma_\cA:\cL'\to[-\infty,\infty)$ defined by
\[
\sigma_\cA(Z) := \inf_{X\in\cA}\E[\langle X,Z\rangle].
\]
Its domain of finiteness is denoted by $\cB(\cA)$ and called the {\em barrier cone}, i.e.
\[
\cB(\cA) := \dom(\sigma_\cA) = \{Z\in\cL' \,; \ \sigma_\cA(Z)>-\infty\}.
\]
If $\cA$ is $\sigma(\cL,\cL')$-closed and convex, the Hahn-Banach Theorem yields the representation
\begin{equation}
\label{eq: external characterization}
\cA = \bigcap_{Z\in\cL'}\{X\in\cL \,; \ \E[\langle X,Z\rangle]\geq\sigma_\cA(Z)\} = \bigcap_{Z\in\cB(\cA)}\{X\in\cL \,; \ \E[\langle X,Z\rangle]\geq\sigma_\cA(Z)\}.
\end{equation}
We say that $\cA$ is {\em monotone} if $\cA+\cL_+\subset\cA$. In this case, we have $\cB(\cA)\subset\cL'_+$.

\smallskip

Consider a nonconstant map $f:\cL\to(-\infty,\infty]$. The {\em convex conjugate} of $f$ is the map $f^\ast:\cL'\to(-\infty,\infty]$ given by
\[
f^\ast(Z) := \sup_{X\in\cL}\{\E[\langle X,Z\rangle]-f(X)\}.
\]
The Fenchel-Moreau Theorem states that, if $f$ is convex and $\sigma(\cL,\cL')$-lower semicontinuous, then
\[
f(X) = \sup_{Z\in\cL'}\{\E[\langle X,Z\rangle]-f^\ast(Z)\}
\]
for every $X\in\cL$. We will find it convenient to use the concave counterpart to convex duality. Here, consider a nonconstant map $g:\cL\to[-\infty,\infty)$. The {\em concave conjugate} of $g$ is the map $g^\bullet:\cL'\to[-\infty,\infty)$ defined by
\[
g^\bullet(Z) := \inf_{X\in\cL}\{\E[\langle X,Z\rangle]-g(X)\}=-(-g)^\ast(-Z).
\]
Hence, the Fenchel-Moreau Theorem implies that, if $g$ is concave and $\sigma(\cL,\cL')$-upper semicontinuous, then
\[
g(X) = \inf_{Z\in\cL'}\{\E[\langle X,Z\rangle]-g^\bullet(Z)\}
\]
for every $X\in\cL$. The {\em indicator function} of a set $\cA\subset\cL$ is the map $\ind_\cA:\cL\to[0,\infty]$ given by
\[
\ind_\cA(X):=
\begin{cases}
0 & \mbox{if} \ X\in\cA,\\
\infty & \mbox{otherwise}.
\end{cases}
\]
Note that, for every $Z\in\cL'$, we have $\sigma_\cA(Z)=(-\ind_\cA)^\bullet(Z)=-\ind_\cA^\ast(-Z)$.


\subsection{Financial systems and systemic risk}

We consider a one-period economy in which uncertainty at the terminal date is modeled by the probability space $(\Omega,\cF,\probp)$. In this economy, we assume the existence of a {\em financial system} consisting of $d$ member institutions (for completeness we also allow for the case $d=1$). The possible terminal capital positions, i.e.\ assets net of liabilities, of these $d$ institutions belong to an admissible space
\[
\cX = \cX_1\times\cdots\times\cX_d \subset L^0(\R^d).
\]
For every $X=(X_1,\dots,X_d)\in\cX$ the random variables $X_1,\dots,X_d$ correspond to the capital positions of the various member institutions. Since $\cX$ contains all bounded random vectors, the space $\R^d$ can be naturally viewed as a linear subspace of $\cX$. We denote by $e$ the constant random vector with all components equal to $1$, i.e.
\[
e := (1,\dots,1) \in \R^d.
\]

\smallskip

The impact of the financial system on systemic risk is measured through an {\em impact map}
\[
\agg : \cX \to \cE
\]
where $\cE$ is a suitable admissible subspace of $L^0$. Hence, for every $X\in\cX$, the random variable $\agg(X)$ is interpreted as an indicator of the systemic risk posed by $X$; see Remark~\ref{ex: Lambda}.

\begin{definition}
We say that $\agg$ is {\em admissible} if it satisfies the following five properties:
\begin{enumerate}
\item[($\textrm{S}$1)] {\em Discrimination}: $\agg$ is not constant;
\item[($\textrm{S}$2)] {\em Normalization}: $\agg(0)=0$;
\item[($\textrm{S}$3)] {\em Monotonicity}: $\agg(X)\geq\agg(Y)$ for all $X,Y\in\cX$ such that $X\geq Y$;
\item[($\textrm{S}$4)] {\em Concavity}: $\agg(\lambda X+(1-\lambda)Y)\geq\lambda\agg(X)+(1-\lambda)\agg(Y)$ for all $X,Y\in\cX$ and $\lambda\in[0,1]$;
\item[($\textrm{S}$5)] {\em Semicontinuity}: The map $X\mapsto\E[\agg(X)W]$ is $\sigma(\cX,\cX')$-upper semicontinuous for every $W\in\cE'_+$.
\end{enumerate}
\end{definition}

\smallskip

The next proposition provides a number of sufficient conditions for the technical assumption (S5) to hold. Recall that the lattice operations on $\cX$ are performed component by component. Here, we use the standard notation for the limit superior of a sequence of random variables.

\begin{definition}
We say that $\agg$ has the {\em Fatou property} if for every sequence $(X^n)\subset\cX$ and every $X\in\cX$
\[
X^n\to X \ \mbox{a.s.}, \ \sup_{n\in\N}|X^n|\in\cX \ \implies \ \agg(X) \geq \limsup_{n\to\infty}\agg(X^n).
\]
We say that $\agg$ is {\em surplus invariant} if $\agg(X)=\agg(\min(X,0))$ for every $X\in\cX$.
\end{definition}

\smallskip

\begin{proposition}
\label{prop: conditions for S5}
Assume that (S3) and (S4) hold. Then, (S5) holds in any of the following cases:
\begin{enumerate}[(i)]
  \item $\cX'_i$ is the norm dual of $\cX_i$ for every $i\in\{1,\dots,d\}$.
  \item $\cX_i=L^{\Phi_i}$ with $\Phi_i^\ast$ being $\Delta_2$ (e.g.\ $\cX_i=L^\infty$) for every $i\in\{1,\dots,d\}$ and $\agg$ has the Fatou property.
  \item $\agg$ is surplus invariant and has the Fatou property.
\end{enumerate}
\end{proposition}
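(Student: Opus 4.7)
Fix $W \in \cE'_+$ and set $f_W: \cX \to \R$, $f_W(X) := \E[\agg(X) W]$. By (S3), (S4), and $W \in \cE'_+$, the map $f_W$ is real-valued, concave, and nondecreasing. The goal in each case is to establish $\sigma(\cX, \cX')$-upper semicontinuity of $f_W$; equivalently, the closedness in $\sigma(\cX, \cX')$ of the convex monotone upper level sets $L_c := \{X \in \cX : f_W(X) \geq c\}$.

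In case (i), the assumption that $\cX'_i$ is the norm dual of $\cX_i$ makes $\sigma(\cX, \cX')$ the weak topology of the Banach space $\cX$. Moreover, admissibility requires $\cX'_i \subset L^0$, which combined with the norm-duality forces each $\cX_i$ to have order-continuous norm (ruling out, e.g., $L^\infty$). By the extended Namioka-Klee theorem, every real-valued monotone concave function on such a space is norm-continuous; hence $L_c$ is norm-closed and, being convex, weakly closed via Hahn-Banach.

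In case (ii), the $\Delta_2$ condition on $\Phi_i^\ast$ yields $L^{\Phi_i^\ast} = H^{\Phi_i^\ast}$, so Orlicz duality identifies $\cX$ with the norm dual of $\cX'$ and renders $\sigma(\cX, \cX')$ the weak-star topology. By the Krein-Smulian theorem, I would verify that $L_c \cap \overline{B}_r$ is weak-star closed for every $r > 0$. Given a norm-bounded sequence $(X^n) \subset L_c$ with $X^n \to X$ weak-star, I would extract via a Komlos-type argument a subsequence whose Cesaro averages $C_k$ converge to $X$ almost surely with $\sup_k |C_k| \in \cX$ (the $\Delta_2$ assumption provides the dominating bound). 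Concavity of $\agg$ gives $\agg(C_k) \geq k^{-1}\sum_j \agg(X^{n_j})$, the Fatou property of $\agg$ yields $\agg(X) \geq \limsup_k \agg(C_k)$ almost surely, and reverse Fatou against $W \in \cE'_+$ then delivers $f_W(X) \geq \limsup_n f_W(X^n) \geq c$.

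In case (iii), surplus invariance together with (S3) gives $\agg(X) = \agg(\min(X, 0)) \leq \agg(0) = 0$, whence $f_W \leq 0$. This uniform upper bound substitutes for the dual Banach space structure of case (ii): a convex monotone subset of an admissible $\cX$ that is bounded above and satisfies a Fatou-closedness property (inherited here directly from the Fatou property of $\agg$ together with $W \geq 0$) is $\sigma(\cX, \cX')$-closed by a Delbaen-style closedness criterion as extended to this general admissible setting in Farkas et al.~\cite{kochmunari_multiasset}. The principal technical obstacle is case (ii), where the transition from weak-star convergence to almost-sure convergence requires both the Krein-Smulian reduction to norm-bounded sets and a Komlos-type subsequence extraction, while case (iii) bypasses this difficulty by exploiting surplus invariance to force the uniform upper bound $f_W \leq 0$ and invoke a Fatou-closedness criterion that does not require $\cX$ to be a dual Banach space.
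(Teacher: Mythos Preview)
Your case (i) matches the paper's argument. Cases (ii) and (iii), however, each contain a genuine gap.

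For (ii), two steps are not justified. First, Krein--Smulian reduces the task to showing that $L_c \cap \overline{B}_r$ is weak-star closed, but the weak-star topology on bounded sets is metrizable only when the predual $\cX'$ is separable, so testing closedness with sequences is in general insufficient. Second, and more seriously, the claim that $\sup_k |C_k| \in \cX$ follows from the $\Delta_2$ condition on $\Phi_i^\ast$ is unsupported: norm-boundedness of the Ces\`aro averages does not imply that their pointwise supremum lies in $L^{\Phi_i}$, and without this dominating element you cannot invoke the Fatou property of $\agg$. The paper avoids both issues by first showing (via Fatou's lemma) that $f_W$ is upper semicontinuous with respect to dominated almost-sure convergence, then identifying $\cX'$ with the order-continuous dual $\cX^\sim_n$, and finally invoking Delbaen and Owari~\cite{DelbaenOwari2019} together with the multivariate extension in Leung and Tantrawan~\cite{LeungTantrawan2018}, which establish directly that order upper semicontinuity upgrades to $\sigma(\cX,\cX^\sim_n)$-upper semicontinuity under the stated hypotheses. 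The proofs in those references are considerably more delicate than a bare Komlos extraction.

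For (iii), the mechanism you identify does not do the work. A uniform upper bound $f_W \leq 0$ does not by itself convert Fatou closedness of the level sets into $\sigma(\cX,\cX')$-closedness in a general admissible space, and no such criterion appears in Farkas et al.~\cite{kochmunari_multiasset}. What the paper actually uses is that $f_W$ itself inherits surplus invariance from $\agg$, and then applies Theorem~21 of Gao and Munari~\cite{GaoMunari2017}, whose argument exploits surplus invariance structurally (through a truncation of the negative part) rather than merely through the bound it happens to imply. Incidentally, your deduction $\agg(0)=0$ relies on (S2), which the proposition does not assume.
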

\begin{proof}
Throughout the proof fix $W\in\cE'_+$ and define a functional $\varphi_W:\cX\to\R$ by setting
\[
\varphi_W(X) := \E[\agg(X)W].
\]
Note that $\varphi_W$ is concave and nondecreasing by (S3) and (S4). Assume that {\em (i)} holds. In this case, we can apply the Extended Namioka-Klee Theorem from Biagini and Frittelli~\cite{BiaginiFrittelli2009} to infer that $\varphi_W$ is upper semicontinuous (in fact, continuous) with respect to the norm topology on $\cX$. As the space $\cX'$ coincides with the norm dual of $\cX$ by assumption, it follows from Corollary 5.99 in Aliprantis and Border~\cite{AliprantisBorder2006} that $\varphi_W$ is also $\sigma(\cX,\cX')$-upper semicontinuous.
\smallskip

We make some preliminary observations before proceeding with the proof of \textit{(ii)} and \textit{(iii)}. First, we note that the Fatou property of $S$ implies that $\varphi_W$ is sequentially upper semicontinuous with respect to order convergence, i.e.\ dominated almost-sure convergence. Indeed, consider a sequence $(X^n)\subset\cX$ that converges almost surely to some $X\in\cX$ and such that $\sup_{n\in\N}|X^n|\leq M$ for some $M\in\cX$. Since $|\agg(X^n)|\leq\max(|\agg(M)|,|\agg(-M)|)$ for every $n\in\N$ by (S3), it follows from the Fatou property of $\agg$ and from the Fatou Lemma that
\[
\varphi_W(X) \geq \E\Big[\limsup_{n\to\infty}\agg(X^n)W\Big] \geq \limsup_{n\to\infty}\E[\agg(X^n)W] = \limsup_{n\to\infty}\varphi_W(X^n),
\]
as claimed. Second, Theorem 2.6.4 in Meyer-Nieberg~\cite{MeyerNieberg1991} tells us that, for every $i\in\{1,\dots,d\}$, the order-continuous dual of $\cX_i$, i.e.\ the space of linear functionals that are continuous with respect to order convergence, coincides with $\cX'_i$. This implies that the order-continuous dual of $\cX$ also coincides with $\cX'$. Denote by $\cX^\sim_n$ the order-continuous dual of $\cX$. We establish (S5) by showing that the upper semicontinuity of $\varphi_W$ with respect to order convergence implies its $\sigma(\cX,\cX^\sim_n)$-upper semicontinuity.

\smallskip

Assume that {\em (ii)} holds. If $d=1$, the desired assertion follows from Theorem 4.4 in Delbaen and Owari~\cite{DelbaenOwari2019} (see also Theorem 3.2 in Delbaen~\cite{Delbaen2002} for the bounded case and Theorem 3.7 in Gao et al.~\cite{GaoLeungXanthos2019} for the Orlicz case in a nonatomic setting). This result can be extended to a multivariate setting by using the results in Leung and Tantrawan~\cite{LeungTantrawan2018}. We use their notation and terminology. Observe first that the constant vector $e$ is a strictly-positive element in $\cX^\sim_n$. Second, note that all the spaces $\cX_i$'s are monotonically complete by Theorem 2.4.22 in~\cite{MeyerNieberg1991}, admit a special modular by Example 3.1 in~\cite{LeungTantrawan2018}, and their norm duals are order continuous by Remark 3.5 in~\cite{DelbaenOwari2019}. This implies that $\cX$ is also monotonically complete, admits a special modular, and its norm dual is order continuous. As a result, we can apply Theorem 3.4 in~\cite{LeungTantrawan2018} to conclude that $\cX$ satisfies property (P1) of that paper. This property implies that every concave functional on $\cX$ that is upper semicontinuous with respect to order convergence, as our $\varphi_W$, is automatically $\sigma(\cX,\cX^\sim_n)$-upper semicontinuous. This delivers the desired result.

\smallskip

Finally, assume that {\em (iii)} holds. In this case, the functional $\varphi_W$ is surplus invariant in the sense of Gao and Munari~\cite{GaoMunari2017}. Since $\varphi_W$ is concave and upper semicontinuous with respect to order convergence, we can apply Theorem 21 in~\cite{GaoMunari2017} to infer that $\varphi_W$ is $\sigma(\cX,\cX^\sim_n)$-upper semicontinuous. As above, this delivers the desired result.
\end{proof}

\smallskip

\begin{remark}
\label{ex: Lambda}
(i) In the literature, the impact map is typically derived from an {\em aggregation function}
\[
\Lambda:\R^d\to\R
\]
by setting $\agg(X)=\Lambda(X)$ for every $X\in\cX$. We refer to the literature cited in the introduction for a discussion of concrete examples. Clearly, the choice of $\Lambda$ limits the choice of the space $\cE$ since, for instance, one needs to ensure that the random variables $\Lambda(X)$'s are integrable. This is typically done either by working in a space of bounded positions or by working in an Orlicz space where the Orlicz functions are defined in terms of $\Lambda$. To avoid having to worry about this aspect, we have defined the impact map as a map between abstract spaces.

\smallskip

(ii) If $\Lambda$ is assumed to be nonconstant, nondecreasing, concave, and to satisfy $\Lambda(0)=0$, then the corresponding $\agg$ clearly fulfills properties (S1)-(S4). Moreover, as $\Lambda$ is automatically continuous by concavity, $\agg$ has automatically the Fatou property. Hence, we can use Proposition~\ref{prop: conditions for S5} to ensure property (S5).
\end{remark}

\smallskip

We now assume that the regulator has defined acceptable levels of systemic risk by specifying a set
\[
\cA\subset\cE
\]
called the {\em acceptance set}: The financial system with capital positions $X\in\cX$ is deemed to have an acceptable level of systemic risk if the  systemic risk indicator $\agg(X)$ belongs to $\cA$.

\begin{definition}
We say that $\cA$ is {\em admissible} if it satisfies the following properties:
\begin{enumerate}
  \item[($\textrm{A}$1)] {\em Discrimination}: $\agg^{-1}(\cA)$ is a nonempty proper subset of $\cX$;
  \item[($\textrm{A}$2)] {\em Normalization}: $0\in\cA$;
  \item[($\textrm{A}$3)] {\em Monotonicity}: $\cA+\cE_+\subset\cA$;
  \item[($\textrm{A}$4)] {\em Convexity}: $\lambda\cA+(1-\lambda)\cA\subset\cA$ for every $\lambda\in[0,1]$;
  \item[($\textrm{A}$5)] {\em Closedness}: $\cA$ is $\sigma(\cE,\cE')$-closed.
\end{enumerate}
\end{definition}

\smallskip

The next proposition highlights a variety of situations where assumption (A5) is always satisfied.

\begin{definition}
We say that $\cA$ is {\em Fatou closed} if for every sequence $(U_n)\subset\cA$ and every $U\in\cE$
\[
U_n\to U \ \mbox{a.s.}, \ \sup_{n\in\N}|U_n|\in\cE \ \implies \ U\in\cA.
\]
We say that $\cA$ is {\em law invariant} if for every $U\in\cA$ and every $V\in\cE$ with the same probability distribution as $U$ we have $V\in\cA$. Moreover, we say that $\cA$ is {\em surplus invariant} if for every $U\in\cA$ and every $V\in\cE$ such that $\min(V,0)=\min(U,0)$ we have $V\in\cA$.
\end{definition}

\smallskip

\begin{proposition}
Assume that (A3) and (A4) hold. Then, (A5) holds in any of the following cases:
\begin{enumerate}[(i)]
  \item $\cE'$ is the norm dual of $\cE$ and $\cA$ is norm closed.
  \item $\cE=L^\Phi$ with $\Phi^\ast$ being $\Delta_2$ (e.g.\ $\cE=L^\infty$) and $\cA$ is Fatou closed.
  \item $\cE=L^\Phi$ with $(\Omega,\cF,\probp)$ nonatomic and $\cA$ is law invariant and Fatou closed.
  \item $\cA$ is surplus invariant and Fatou closed.
\end{enumerate}
\end{proposition}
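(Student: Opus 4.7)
I would mimic the architecture of the proof of Proposition~\ref{prop: conditions for S5}, translating each closedness statement about $\cA$ into a lower semicontinuity statement about the associated cash-additive risk measure and then invoking the same closure theorems from the literature. Case~(i) is direct: by (A4), $\cA$ is convex, and since it is also norm closed by hypothesis, Mazur's theorem (Corollary 5.99 in Aliprantis and Border~\cite{AliprantisBorder2006}) yields $\sigma(\cE,\cE')$-closedness.

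For cases (ii), (iii), and (iv), the first step is to reduce the question to a functional statement through the cash-additive risk measure
\[
\rho_\cA(U) := \inf\{m\in\R \,; \ U+m\in\cA\}, \qquad U\in\cE.
\]
By (A2)--(A4), $\rho_\cA$ is convex, monotone nonincreasing, cash additive, and not identically $\pm\infty$. Moreover, Fatou closedness of $\cA$ combined with monotonicity yields the sublevel-set identity $\cA=\{U\in\cE \,; \ \rho_\cA(U)\leq 0\}$ (indeed, if $\rho_\cA(U)\leq 0$ then $U+\tfrac{1}{n}\in\cA$ for every $n\in\N$, and as $U+\tfrac{1}{n}\to U$ almost surely while being dominated by $U+1\in\cE$, Fatou closedness forces $U\in\cA$). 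Hence $\sigma(\cE,\cE')$-closedness of $\cA$ reduces to $\sigma(\cE,\cE')$-lower semicontinuity of $\rho_\cA$. Similarly, Fatou closedness of $\cA$ translates to the classical Fatou property of $\rho_\cA$, while law invariance and surplus invariance of $\cA$ transfer directly to the corresponding invariance properties of $\rho_\cA$.

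At this point the three remaining cases are each handled by the same closure results invoked in the proof of Proposition~\ref{prop: conditions for S5}. Under $\Phi^\ast$ being $\Delta_2$, Theorem~4.4 in Delbaen and Owari~\cite{DelbaenOwari2019} upgrades the Fatou property of $\rho_\cA$ to $\sigma(\cE,\cE')$-lower semicontinuity, yielding (ii). Under nonatomicity and law invariance, the Orlicz-space extension of Jouini--Schachermayer--Touzi due to Gao, Leung and Xanthos~\cite{GaoLeungXanthos2019} performs the same upgrade for law-invariant risk measures, yielding (iii). Under surplus invariance, Theorem~21 in Gao and Munari~\cite{GaoMunari2017} does likewise, yielding (iv).

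The main obstacle I anticipate is purely bookkeeping: verifying cleanly that the sublevel-set reduction works (which relies precisely on the Fatou closedness of $\cA$, as noted above) and that the invariance properties of $\cA$ transfer to $\rho_\cA$. Once these translations are in place, each of cases (ii)--(iv) follows by direct citation, so no argument beyond those already used for Proposition~\ref{prop: conditions for S5} is needed.
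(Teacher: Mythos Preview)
Your argument is correct, but you take a detour that the paper avoids. For cases (ii)--(iv) you pass to the cash-additive functional $\rho_\cA$, verify that Fatou closedness/law invariance/surplus invariance of $\cA$ transfer to the corresponding properties of $\rho_\cA$, invoke the functional lower-semicontinuity theorems (Theorem~4.4 in~\cite{DelbaenOwari2019}, the JST-type result, Theorem~21 in~\cite{GaoMunari2017}), and then recover $\cA=\{\rho_\cA\leq 0\}$ as a sublevel set. The paper instead cites the \emph{set-level} companions of those same results directly: Theorem~5.98 in~\cite{AliprantisBorder2006} for (i), Theorem~4.1 in~\cite{DelbaenOwari2019} for (ii), Corollary~4.6 in~\cite{GaoLeungMunariXanthos2018} for (iii), and Theorem~8 in~\cite{GaoMunari2017} for (iv). Each of these asserts that a convex, monotone, Fatou-closed set (with the additional invariance as appropriate) is already $\sigma(\cE,\cE')$-closed, so no translation through $\rho_\cA$ is needed. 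Your route buys nothing extra and costs you the bookkeeping you yourself flag (the sublevel-set identity, the transfer of invariances, and the appeal to (A2), which incidentally is not among the hypotheses of the proposition). Two small citation slips: Mazur for sets is Theorem~5.98, not Corollary~5.99, in~\cite{AliprantisBorder2006}; and for law invariance on general Orlicz spaces the relevant reference is~\cite{GaoLeungMunariXanthos2018}, not~\cite{GaoLeungXanthos2019}.
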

\begin{proof}
The desired assertion holds under {\em (i)} by Theorem 5.98 in Aliprantis and Border~\cite{AliprantisBorder2006}; under {\em (ii)} by Theorem 4.1 in Delbaen and Owari~\cite{DelbaenOwari2019} (see also Theorem 3.2 in Delbaen~\cite{Delbaen2002} in the bounded case and Theorem 3.7 in Gao et al.~\cite{GaoLeungXanthos2019} in the Orlicz case in a nonatomic setting); under {\em (iii)} by Corollary 4.6 in Gao et al.~\cite{GaoLeungMunariXanthos2018}; under {\em (iv)} by Theorem 8 in Gao and Munari~\cite{GaoMunari2017}.
\end{proof}


\section{``First allocate, then aggregate''-type systemic risk measures}

In this section we focus on systemic risk measures of ``first allocate, then aggregate'' type. After discussing some conditions for their representability, we establish a general dual representation and provide a detailed analysis of the properties of the corresponding systemic acceptance sets and ``penalty functions''. Throughout the section we fix an admissible impact map $\agg$ and an admissible acceptance set $\cA$.


\subsection{The systemic risk measure $\rho$}

``First allocate, then aggregate''-type systemic risk measures are defined as follows (we adopt the usual convention $\inf\emptyset=\infty$):

\begin{definition}
We define the map $\rho:\cX\to[-\infty,\infty]$ by setting
\begin{equation}
\label{eq:srm_def}
\rho(X) := \inf\left\{\sum_{i=1}^dm_i \,; \ m\in\R^d, \ \agg(X+m)\in\cA\right\}.
\end{equation}
\end{definition}

\smallskip

The above map determines the minimum amount of aggregate capital that can be allocated to the member institutions to ensure that the level of systemic risk of the financial system is acceptable. We start by observing that~\eqref{eq:srm_def} can be rewritten as
\begin{equation}
\label{eq: multi-asset representation}
\rho(X)=\inf\{\pi(m) \,; \ m\in\R^d, \ X+m\in\agg^{-1}(\cA)\}, \ \ \ \ \pi(m) = \sum_{i=1}^dm_i.
\end{equation}
As a result, $\rho$ belongs to the broad class of risk measures introduced in Frittelli and Scandolo~\cite{FrittelliScandolo2006} and thoroughly studied in Farkas et al.~\cite{kochmunari_multiasset}. We exploit this link in a systematic way. The first proposition collects some basic properties of the ``systemic acceptance set'' $\agg^{-1}(\cA)$ and of the risk measure $\rho$.

\begin{proposition}
\label{prop: properties rho}
\begin{enumerate}[(i)]
    \item The set $\agg^{-1}(\cA)$ is monotone, convex, $\sigma(\cX,\cX')$-closed, and contains $0$.
    \item The systemic risk measure $\rho$ is nonincreasing, convex, and satisfies $\rho(0)\leq0$. Moreover, $\rho$ satisfies the multivariate version of cash-additivity, i.e.
\[
\rho(X+m) = \rho(X)-\sum_{i=1}^dm_i
\]
for every $X\in\cX$ and every $m\in\R^d$.
\end{enumerate}
\end{proposition}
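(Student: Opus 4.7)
For part \textit{(i)}, I would verify the four properties of $\agg^{-1}(\cA)$ one at a time. The fact that $0\in\agg^{-1}(\cA)$ is immediate from (S2) together with (A2). Monotonicity follows by combining (S3) with (A3): if $X\in\agg^{-1}(\cA)$ and $Y\in\cX_+$, then $\agg(X+Y)\geq\agg(X)\in\cA$, and (A3) puts $\agg(X+Y)$ back into $\cA$. Convexity is analogous but uses that $\agg$ is only concave: (S4) gives $\agg(\lambda X+(1-\lambda)Y)\geq\lambda\agg(X)+(1-\lambda)\agg(Y)$, the right-hand side lies in $\cA$ by (A4), and (A3) then yields $\agg(\lambda X+(1-\lambda)Y)\in\cA$. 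None of these three steps is delicate.

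The substantive part is the $\sigma(\cX,\cX')$-closedness of $\agg^{-1}(\cA)$, which is where both (S5) and (A5) really get used. The plan is to combine the external representation~\eqref{eq: external characterization} of $\cA$ with the upper semicontinuity of the functionals $X\mapsto\E[\agg(X)W]$. Concretely, since $\cA$ is convex, $\sigma(\cE,\cE')$-closed, and monotone (so $\cB(\cA)\subset\cE'_+$), the Hahn--Banach theorem gives
\[
\cA = \bigcap_{W\in\cB(\cA)}\{U\in\cE \,; \ \E[UW]\geq\sigma_\cA(W)\},
\]
and consequently
\[
\agg^{-1}(\cA) = \bigcap_{W\in\cB(\cA)}\{X\in\cX \,; \ \E[\agg(X)W]\geq\sigma_\cA(W)\}.
\]
Each $W\in\cB(\cA)$ lies in $\cE'_+$, so (S5) ensures that $X\mapsto\E[\agg(X)W]$ is $\sigma(\cX,\cX')$-upper semicontinuous. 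Hence each set on the right is $\sigma(\cX,\cX')$-closed, and the intersection inherits closedness. This is the only non-routine step; the main obstacle is remembering to use monotonicity of $\cA$ to restrict $\cB(\cA)\subset\cE'_+$ so that (S5) applies.

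For part \textit{(ii)}, the properties of $\rho$ follow readily from part \textit{(i)} and the representation~\eqref{eq: multi-asset representation}. The inequality $\rho(0)\leq 0$ is witnessed by the admissible allocation $m=0$, since $\agg(0)=0\in\cA$. Monotonicity of $\rho$ follows from monotonicity of $\agg^{-1}(\cA)$: if $X\geq Y$ and $Y+m\in\agg^{-1}(\cA)$, then $X+m\geq Y+m$ is still in $\agg^{-1}(\cA)$, so every allocation feasible for $Y$ is feasible for $X$. Convexity of $\rho$ follows by combining any two feasible allocations $m$ and $n$ for $X$ and $Y$ into $\lambda m+(1-\lambda)n$ for $\lambda X+(1-\lambda)Y$, using convexity of $\agg^{-1}(\cA)$ and linearity of $\pi(m)=\sum_i m_i$, and then taking the infimum over both allocations. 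Finally, cash-additivity is a change of variables: writing $k=m+n$ inside the definition of $\rho(X+m)$ converts the allocation constraint into $\agg(X+k)\in\cA$ while shifting the objective by $-\sum_i m_i$.
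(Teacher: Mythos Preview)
Your proof is correct and follows exactly the same route as the paper: the easy checks for $0\in\agg^{-1}(\cA)$, monotonicity, and convexity are dismissed as straightforward there, and closedness is obtained precisely via the external representation~\eqref{eq: external characterization} combined with $\cB(\cA)\subset\cE'_+$ and (S5). For part~(ii) the paper merely calls the properties straightforward and refers to Lemma~2 in~\cite{kochmunari_multiasset}, so your explicit verification is more detailed but entirely in line with the intended argument.
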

\begin{proof}
{\em (i)} It is straightforward to prove that $\agg^{-1}(\cA)$ contains $0$ and that it is monotone and convex. To show $\sigma(\cX,\cX')$-closedeness, it is enough to recall that $\cB(\cA)\subset\cE'_+$ and use~\eqref{eq: external characterization} to get
\[
\agg^{-1}(\cA) = \{X\in\cX \,; \ \agg(X)\in\cA\} = \{X\in\cX \,; \ \E[\agg(X)W]\geq\sigma_\cA(W), \ \forall W\in\cB(\cA)\}.
\]
The claim follows immediately from (S5).

\smallskip

{\em (ii)} The stated properties of $\rho$ are straightforward; see also Lemma~2 in Farkas et al.~\cite{kochmunari_multiasset}.
\end{proof}


\subsection{Properness and lower semicontinuity of $\rho$}

In order to admit a dual representation, the risk measure $\rho$ needs to be proper and lower semicontinuous. We highlight a number of sufficient conditions for this to be the case. We start with a simple characterization of properness provided we already know that $\rho$ is lower semicontinuous. Recall that, by definition, $\rho$ is proper if it never attains the value $-\infty$ and is finite at some point.

\begin{proposition}
\label{prop: properness rho}
If $\rho$ is $\sigma(\cX,\cX')$-lower semicontinuous, then $\rho$ is proper if and only if $\rho(0)>-\infty$.
\end{proposition}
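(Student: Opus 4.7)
My plan is as follows. The forward implication is essentially a tautology: properness by definition rules out the value $-\infty$ everywhere, so in particular $\rho(0) > -\infty$. All the work lies in the converse.

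For the converse, I assume $\rho(0) > -\infty$ and aim to show both that $\dom(\rho) \neq \emptyset$ and that $\rho(X) > -\infty$ for every $X \in \cX$. The first task is immediate: Proposition~\ref{prop: properties rho}(ii) gives $\rho(0) \leq 0$, hence $\rho(0) \in \R$ and $0 \in \dom(\rho)$. The core of the proof is therefore to exclude the value $-\infty$ globally.

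To this end I argue by contradiction: suppose $\rho(X) = -\infty$ for some $X \in \cX$. Using the convexity of $\rho$ (Proposition~\ref{prop: properties rho}(ii)) together with the finiteness of $\rho(0)$, I get, for every $\lambda \in (0,1)$,
\[
\rho(\lambda X) = \rho(\lambda X + (1-\lambda)\cdot 0) \leq \lambda\rho(X) + (1-\lambda)\rho(0) = -\infty.
\]
Since $\sigma(\cX,\cX')$ is a vector topology, scalar multiplication is continuous, so $\lambda X \to 0$ in $\sigma(\cX,\cX')$ as $\lambda \to 0^+$ (concretely, for each $Z \in \cX'$ one has $(\lambda X \vert Z) = \lambda(X\vert Z) \to 0$). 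The assumed $\sigma(\cX,\cX')$-lower semicontinuity of $\rho$ then yields
\[
\rho(0) \leq \liminf_{\lambda\to 0^+}\rho(\lambda X) = -\infty,
\]
contradicting the hypothesis $\rho(0) > -\infty$.

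The only subtle point is the convexity inequality above, which must be applied when one of the two values is $-\infty$; this is legitimate precisely because $\rho(0)$ is finite, so the right-hand side is unambiguously $-\infty$. Beyond that, the argument is essentially a standard convex-analysis observation that a proper convex lower semicontinuous function cannot attain $-\infty$ unless it is identically $-\infty$ on the relative interior of its domain, adapted to the specific point $0$ where we already control the value of $\rho$.
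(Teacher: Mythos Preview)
Your proof is correct and follows essentially the same approach as the paper: both rely on the standard convex-analysis fact that a lower semicontinuous convex function taking the value $-\infty$ cannot take any finite value. The only difference is that the paper cites this fact (Proposition~2.2.5 in Z\u{a}linescu~\cite{Zalinescu2002}), whereas you spell out its short proof directly.
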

\begin{proof}
We know that $\rho(0)<\infty$ by Proposition~\ref{prop: properties rho}. As a result, the above equivalence follows from the fact that a $\sigma(\cX,\cX')$-lower semicontinuous convex map that assumes the value $-\infty$ cannot assume any finite value; see e.g.\ Proposition 2.2.5 in Z\u{a}linescu~\cite{Zalinescu2002}.
\end{proof}

\smallskip

In contrast to the standard univariate (cash-additive) case, the closedeness of $\agg^{-1}(\cA)$ does not imply the lower semicontinuity of $\rho$; see Example 1 in Farkas et al.~\cite{kochmunari_multiasset}. The purpose of the next result is to provide a number of sufficient conditions for $\rho$ to be lower semicontinuous. The last two conditions are particularly easy to verify and often satisfied in the literature.

\begin{proposition}
\label{prop: lsc rho}
The following statements hold:
\begin{enumerate}[(i)]
\item Assume that $\Omega$ is finite. If $\rho(0)>-\infty$, then $\rho$ is finite valued and continuous.
\item Assume that $\cX'_i$ is the norm dual of $\cX_i$ for every $i\in\{1,\dots,d\}$. If $\rho$ is finite valued, then $\rho$ is $\sigma(\cX,\cX')$-lower semicontinuous.
\item Assume that $\cX'_i$ is the norm dual of $\cX_i$ for every $i\in\{1,\dots,d\}$. If $\agg^{-1}(\cA)$ has nonempty interior in the norm topology and $\rho(0)>-\infty$, then $\rho$ is finite valued and $\sigma(\cX,\cX')$-lower semicontinuous.
\item Set $\cM_0:=\big\{m\in\R^d \,; \ \sum_{i=1}^dm_i=0\big\}$. If $\agg^{-1}(\cA)\cap\cM_0=\{0\}$, then $\rho$ is proper and $\sigma(\cX,\cX')$-lower semicontinuous.
\item If $\cA\cap\R_-=\{0\}$ and $\agg(m)\in(-\infty,0)$ for every nonzero $m\in\cM_0$, then $\rho$ is proper and $\sigma(\cX,\cX')$-lower semicontinuous.
\end{enumerate}
\end{proposition}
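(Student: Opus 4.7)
The plan is to view $\rho$ through its multi-asset representation~(\ref{eq: multi-asset representation}), so that the general machinery of Farkas et al.~\cite{kochmunari_multiasset} drives parts (i)--(iv), and to reduce (v) to (iv) by a short direct verification. Throughout, the systemic-side input is Proposition~\ref{prop: properties rho}: $\rho$ is convex, monotone, cash-additive with $\rho(0)\leq 0$, and $\agg^{-1}(\cA)$ is monotone, convex, $\sigma(\cX,\cX')$-closed, and contains $0$.

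For (i), finiteness of $\Omega$ makes $\cX$ finite-dimensional and every element essentially bounded, so for any $X$ one picks a scalar $c$ with $X+ce\geq 0$; monotonicity and normalization of $\agg$ give $\agg(X+ce)\geq\agg(0)=0\in\cA$, hence $\rho(X)\leq cd<\infty$. Convexity then yields $\rho(X)+\rho(-X)\geq 2\rho(0)>-\infty$, so $\rho(X)>-\infty$, and a finite-valued convex function on a finite-dimensional space is automatically continuous. For (ii), the Extended Namioka--Klee theorem of~\cite{BiaginiFrittelli2009} turns finite-valued convex monotone maps on Banach lattices into norm-continuous maps, and Corollary 5.99 of~\cite{AliprantisBorder2006} promotes norm-continuity to $\sigma(\cX,\cX')$-lower semicontinuity when $\cX'$ is the norm dual. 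For (iii), a norm-interior point $X_0$ of $\agg^{-1}(\cA)$ provides a ball on which $\rho\leq 0$, so $\rho$ is bounded above on a nonempty open set; using $\rho(0)>-\infty$, cash-additivity, and the multi-asset finite-valuedness criterion of~\cite{kochmunari_multiasset} one concludes $\dom(\rho)=\cX$, after which (ii) applies.

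For (iv), the hypothesis $\agg^{-1}(\cA)\cap\cM_0=\{0\}$ is exactly the non-degeneracy condition in~\cite{kochmunari_multiasset} ensuring properness and $\sigma(\cX,\cX')$-lower semicontinuity of the multi-asset risk measure $\rho$. Geometrically, a minimizing sequence $(m^n)\subset\R^d$ with $\pi(m^n)=\sum_im_i^n\to-\infty$ witnessing $\rho(X)=-\infty$ would, after normalization by $\|m^n\|$ and extraction of a limit direction $\hat m\in\R^d\setminus\{0\}$, produce $\hat m\in\agg^{-1}(\cA)$ with $\pi(\hat m)\leq 0$; a monotone adjustment by a nonnegative multiple of $e$ then lands a nonzero element in $\agg^{-1}(\cA)\cap\cM_0$, contradiction. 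Lower semicontinuity follows analogously from the corresponding multi-asset result.

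Finally, (v) reduces directly to (iv) by verifying $\agg^{-1}(\cA)\cap\cM_0=\{0\}$. Let $m\in\agg^{-1}(\cA)\cap\cM_0$ and suppose $m\neq 0$. Then $m\in\cM_0\setminus\{0\}$, so the second hypothesis of (v) yields $\agg(m)\in(-\infty,0)$, i.e., $\agg(m)$ is a strictly negative constant in $\cE$. Simultaneously $m\in\agg^{-1}(\cA)$ forces $\agg(m)\in\cA$, whence $\agg(m)\in\cA\cap\R_-=\{0\}$ by the first hypothesis of (v), contradicting $\agg(m)<0$. Hence the hypothesis of (iv) is met and the conclusion follows. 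The main obstacle of the full proposition sits in (iv), where the recession-cone manipulation in the multi-asset framework of~\cite{kochmunari_multiasset} does the heavy lifting; once (iv) is granted, part (v) is essentially immediate.
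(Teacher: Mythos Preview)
Your proof is largely correct and follows the same overall strategy as the paper: reduce $\rho$ to the multi-asset framework of~\cite{kochmunari_multiasset}, use Proposition~\ref{prop: properties rho} for the basic convex-analytic properties, and for (v) verify the hypothesis of (iv). Parts (ii), (iii), and (v) match the paper almost verbatim; your treatment of (i) is in fact more self-contained than the paper's (which simply invokes Proposition~1 in~\cite{kochmunari_multiasset} via the observation that $e$ lies in the interior of $\cX_+$).

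The one place where your argument is thinner than the paper's is (iv). The paper does \emph{not} simply cite~\cite{kochmunari_multiasset} for the conclusion; it uses Lemma~3 from that reference to rewrite $\rho(X)=\inf\{r\in\R:X+\tfrac{r}{d}e\in\agg^{-1}(\cA)+\cM_0\}$, observes that the sublevel sets of $\rho$ are sandwiched between $\agg^{-1}(\cA)+\cM_0-\tfrac{r}{d}e$ and its closure, and then proves $\agg^{-1}(\cA)+\cM_0$ is $\sigma(\cX,\cX')$-closed via Dieudonn\'e's closedness criterion (closed convex set plus finite-dimensional subspace with trivial intersection). Properness is then read off via Proposition~\ref{prop: properness rho}. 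Your recession-direction sketch for properness, by contrast, has a gap: if the normalized limit $\hat m$ happens to be a negative multiple of $e$, then your ``monotone adjustment by a nonnegative multiple of $e$'' sends it to $0$, and you get no contradiction with $\agg^{-1}(\cA)\cap\cM_0=\{0\}$. One has to argue further (e.g.\ that $\R_-\!\cdot e\subset\agg^{-1}(\cA)$ together with monotonicity forces $\cM_0\subset\agg^{-1}(\cA)$ when $d\geq2$) to close this case. The Dieudonn\'e route avoids this case analysis entirely, which is the main technical advantage of the paper's approach.
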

\begin{proof}
{\em (i)} Since $\Omega$ is finite, $e$ is an interior point of $\cX_+$. Then, the desired result follows from Proposition 1 in Farkas et al.~\cite{kochmunari_multiasset}.

\smallskip

{\em (ii)} It follows from the Extended Namioka-Klee Theorem in Biagini and Frittelli~\cite{BiaginiFrittelli2009} that $\rho$ is lower semicontinuous (in fact, continuous) with respect to the norm topology. Then, $\rho$ is also lower semicontinuous with respect to $\sigma(\cX,\cX')$ by virtue of Corollary 5.99 in Aliprantis and Border~\cite{AliprantisBorder2006}.

\smallskip

{\em (iii)} Note that $e$ is a strictly-positive element of $\cX$, i.e.\ for every $Z\in\cX'_+\setminus\{0\}$ we have
\[
\E[\langle e,Z\rangle] = \sum_{i=1}^d\E[Z_i] > 0.
\]
Proposition 2 in~\cite{kochmunari_multiasset} implies that $\rho$ is finite valued so that {\em (ii)} can be applied.

\smallskip

{\em (iv)} For every $X\in\cX$ it is not difficult to show that
\[
\rho(X) = \inf\left\{r\in\R \,; \ X+\frac{r}{d}e\in\agg^{-1}(\cA)+\cM_0\right\};
\]
see Lemma 3 in~\cite{kochmunari_multiasset}. Then, it follows from Proposition~\ref{prop: properties rho} that
\[
\agg^{-1}(\cA)+\cM_0-\frac{r}{d}e \subset \{X\in\cX \,; \ \rho(X)\leq r\} \subset \cl\bigg(\agg^{-1}(\cA)+\cM_0-\frac{r}{d}e\bigg)
\]
for every $r\in\R$, where $\cl$ denotes the closure operator with respect to $\sigma(\cX,\cX')$. To establish the desired lower semicontinuity, we show that $\agg^{-1}(\cA)+\cM_0$ is $\sigma(\cX,\cX')$-closed. To this effect, recall from Proposition~\ref{prop: properties rho} that $\agg^{-1}(\cA)$ is convex and $\sigma(\cX,\cX')$-closed. Moreover, $\cM_0$ is a finite-dimensional vector space and $\agg^{-1}(\cA)\cap\cM_0=\{0\}$. The closedness criterion in Dieudonn\'{e}~\cite{Dieudonne1966} now implies that $\agg^{-1}(\cA)+\cM_0$ is $\sigma(\cX,\cX')$-closed. Properness follows from Proposition~\ref{prop: properness rho}.

\smallskip

{\em (v)} Let $m\in\cM_0$. By assumption, we have $\agg(m)\in\cA$ if and only if $m=0$. This yields $\agg^{-1}(\cA)\cap\cM_0=\{0\}$ and the desired statement immediately follows from point {\em (iv)}.
\end{proof}


\subsection{The dual representation of $\rho$}

We have already mentioned that, in view of~\eqref{eq: multi-asset representation}, the risk measure $\rho$ belongs to the class of risk measures studied in Farkas et al.~\cite{kochmunari_multiasset}. The general results established in that paper can be exploited to derive a dual representation for $\rho$. This also follows from the general dual representation in Frittelli and Scandolo~\cite{FrittelliScandolo2006}.

\begin{definition}
We denote by $\cC$ the convex subset of $\cX'_+$ defined by
\[
\cC := \{Z\in\cX'_+ \,; \ \E[Z_1]=\cdots=\E[Z_d]=1\}.
\]
\end{definition}

\smallskip

\begin{theorem}
\label{theo: dual representation}
If $\rho$ is proper and $\sigma(\cX,\cX')$-lower semicontinuous, then $\cB(\agg^{-1}(\cA))\cap\cC\neq\emptyset$ and
\[
\rho(X) = \sup_{Z\in\cC}\{\sigma_{\agg^{-1}(\cA)}(Z)-\E[\langle X,Z\rangle]\}
\]
for every $X\in\cX$. The supremum can be restricted to $\cC\cap\cX'_{++}$ provided that $\cB(\agg^{-1}(\cA))\cap\cC\cap\cX'_{++}\neq\emptyset$.
\end{theorem}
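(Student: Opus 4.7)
The plan is to obtain the representation by computing the concave conjugate of $-\rho$ directly, exploiting the cash-additivity of $\rho$ and the definition in~\eqref{eq: multi-asset representation}, and then invoking the Fenchel-Moreau theorem in its concave form, which was recorded in the preliminaries. Set $g:=-\rho$. By Proposition~\ref{prop: properties rho} together with the hypotheses, $g$ is concave, proper, and $\sigma(\cX,\cX')$-upper semicontinuous, and $g(X+m)=g(X)+\sum_{i=1}^d m_i$ for all $X\in\cX$ and $m\in\R^d$.

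Next I would compute $g^\bullet$. For $Z\in\cX'$, using the definition of $\rho$ and the substitution $Y=X+m$,
\[
g^\bullet(Z)=\inf_{X\in\cX}\{\E[\langle X,Z\rangle]+\rho(X)\}=\inf_{\substack{Y\in\agg^{-1}(\cA)\\ m\in\R^d}}\Big\{\E[\langle Y,Z\rangle]+\sum_{i=1}^d m_i(1-\E[Z_i])\Big\}.
\]
The inner infimum over $m\in\R^d$ equals $0$ if $\E[Z_i]=1$ for every $i$ and equals $-\infty$ otherwise, so the map $Z\mapsto g^\bullet(Z)$ is $-\infty$ off the affine set $\{Z\in\cX' \,;\ \E[Z_i]=1\ \forall i\}$. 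Moreover, monotonicity of $\agg^{-1}(\cA)$ gives $\cB(\agg^{-1}(\cA))\subset\cX'_+$, so $g^\bullet(Z)=-\infty$ unless $Z\in\cX'_+$ as well. Combining these two restrictions, $\dom(g^\bullet)\subset\cC$, and on $\cC$ one has $g^\bullet(Z)=\sigma_{\agg^{-1}(\cA)}(Z)$. Applying the Fenchel-Moreau identity for concave upper-semicontinuous maps yields $-\rho(X)=\inf_{Z\in\cC}\{\E[\langle X,Z\rangle]-\sigma_{\agg^{-1}(\cA)}(Z)\}$, which is the desired representation. Nonemptiness of $\cB(\agg^{-1}(\cA))\cap\cC$ is then automatic: if this intersection were empty, then $g^\bullet\equiv-\infty$, forcing $\rho\equiv+\infty$, contradicting properness.

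For the refinement to $\cC\cap\cX'_{++}$, the idea is a standard convex-combination density argument. Fix $Z_0\in\cB(\agg^{-1}(\cA))\cap\cC\cap\cX'_{++}$ (existing by hypothesis) and any $Z\in\cB(\agg^{-1}(\cA))\cap\cC$. Set $Z_\lambda:=(1-\lambda)Z+\lambda Z_0$ for $\lambda\in(0,1)$. Then $Z_\lambda\in\cC\cap\cX'_{++}$, and since $\sigma_{\agg^{-1}(\cA)}$ is concave (as an infimum of linear functionals),
\[
\sigma_{\agg^{-1}(\cA)}(Z_\lambda)-\E[\langle X,Z_\lambda\rangle]\geq(1-\lambda)\bigl(\sigma_{\agg^{-1}(\cA)}(Z)-\E[\langle X,Z\rangle]\bigr)+\lambda\bigl(\sigma_{\agg^{-1}(\cA)}(Z_0)-\E[\langle X,Z_0\rangle]\bigr).
\]
Letting $\lambda\downarrow 0$ shows that the supremum over $\cC\cap\cX'_{++}$ dominates each term of the supremum over $\cC$, giving equality.

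The step I expect to be the trickiest is the reduction $\dom(g^\bullet)\subset\cC$: one must justify switching the order of the infima and, in particular, recognize that the $m$-infimum is a separate linear program whose finiteness forces $\E[Z_i]=1$. Once this is in place, the remainder is essentially bookkeeping and an application of Fenchel-Moreau, with the $\cX'_{++}$ refinement following from a direct convexity argument.
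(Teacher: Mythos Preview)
Your argument is correct, with one small slip: in the nonemptiness step you write that $g^\bullet\equiv-\infty$ would force $\rho\equiv+\infty$, but the correct conclusion is $\rho\equiv-\infty$ (equivalently $g\equiv+\infty$), which is what actually contradicts properness. Everything else --- the substitution $Y=X+m$, the splitting of the double infimum, the reduction $\dom(g^\bullet)\subset\cC$ via cash-additivity and monotonicity of $\agg^{-1}(\cA)$, and the appeal to the concave Fenchel--Moreau theorem --- is clean.

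The route you take is more direct than the paper's. The paper views $\rho$ as a multi-asset risk measure in the sense of Farkas--Koch-Medina--Munari and imports both the nonemptiness of $\cB(\agg^{-1}(\cA))\cap\cC$ and the dual representation from results in that reference (their Proposition~6 and Theorem~3). You instead compute the concave conjugate of $-\rho$ from scratch and invoke Fenchel--Moreau once, which makes the argument self-contained and shows transparently why the constraint $\E[Z_i]=1$ emerges (it is exactly the condition that the linear functional $\E[\langle\cdot,Z\rangle]$ extend the cost functional $\pi$). The paper itself records this Fenchel--Moreau viewpoint a posteriori in Remark~\ref{rem: fenchel moreau for rho}(i). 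Your convex-combination argument for the refinement to $\cC\cap\cX'_{++}$ is identical to the paper's.
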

\begin{proof}
Note that the cost functional $\pi$ in equation \eqref{eq: multi-asset representation} is defined on $\R^d\subset\cX$. It is easy to see that, for every $Z\in\cX'$, the functional $X\mapsto\E[\langle X,Z\rangle]$ is a positive extension of $\pi$ to $\cX$ if and only if $Z$ belongs to $\cC$. Since $\rho$ is proper and $\sigma(\cX,\cX')$-lower semicontinuous, it follows from Proposition~6 in~\cite{kochmunari_multiasset} that the barrier cone of $\agg^{-1}(\cA)$ contains positive linear extensions of the cost functional $\pi$ to $\cX$, i.e.\ we have $\cB(\agg^{-1}(\cA))\cap\cC\neq\emptyset$. The desired representation is now a consequence of Theorem~3 in~\cite{kochmunari_multiasset}.

\smallskip

Now, assume we find $Z^\ast\in\cB(\agg^{-1}(\cA))\cap\cC\cap\cX'_{++}$ and take any element $Z\in\cB(\agg^{-1}(\cA))\cap\cC$. For every $X\in\cX$ and every $\lambda\in(0,1)$ we have $\lambda Z^\ast+(1-\lambda)Z\in\cC$ and
\[
\lambda(\sigma_{\agg^{-1}(\cA)}(Z^\ast)-\E[\langle X,Z^\ast\rangle])+
(1-\lambda)(\sigma_{\agg^{-1}(\cA)}(Z)-\E[\langle X,Z\rangle])
\leq
\sup_{Z'\in\cC\cap\cX'_{++}}\{\sigma_{\agg^{-1}(\cA)}(Z')-\E[\langle X,Z'\rangle]\}
\]
by concavity of $\sigma_{\agg^{-1}(\cA)}$. Letting $\lambda$ tend to $0$ and taking a supremum over $Z$ yields
\[
\rho(X) \leq \sup_{Z'\in\cC\cap\cX'_{++}}\{\sigma_{\agg^{-1}(\cA)}(Z')-\E[\langle X,Z'\rangle]\}.
\]
The converse inequality is clear. This establishes the last assertion and concludes the proof.
\end{proof}

\smallskip

\begin{remark}
\label{rem: fenchel moreau for rho}
(i) We highlight the link between the dual representation in Theorem~\ref{theo: dual representation} and the standard Fenchel-Moreau representation; see also Remark~17 in Farkas et al.~\cite{kochmunari_multiasset}. To see it, note that the map $-\sigma_{\agg^{-1}(\cA)}(-\cdot)+\ind_{\cC}(-\cdot)$ is convex and lower semicontinuous and, if $\rho$ is proper and $\sigma(\cX,\cX')$-lower semicontinuous, it satisfies
\[
\rho(X) = \sup_{Z\in\cX'}\{\E[\langle X,Z\rangle]+\sigma_{\agg^{-1}(\cA)}(-Z)-\ind_{\cC}(-Z)\}
\]
for every $X\in\cX$ by Theorem~\ref{theo: dual representation}. From the Fenchel-Moreau Theorem it follows that for every $Z\in\cX'$
\[
\rho^\ast(Z) = -\sigma_{\agg^{-1}(\cA)}(-Z)+\ind_{\cC}(-Z) =
\begin{cases}
\sup_{X\in\agg^{-1}(\cA)}\E[\langle X,Z\rangle] & \mbox{if $Z\in-\cC$},\\
\infty & \mbox{otherwise}.
\end{cases}
\]

\smallskip

(ii) The dual elements in $\cC$ can be naturally identified with $d$-dimensional vectors of probability measures on $(\Omega,\cF)$ that are absolutely continuous with respect to $\probp$ or, in case they have strictly-positive components, equivalent to $\probp$. This allows to reformulate the above dual representation in terms of probability measures. More concretely, denote by $\cQ(\probp)$, respectively $\cQ_e(\probp)$, the set of all $d$-dimensional vectors of probability measures over $(\Omega,\cF)$ that are absolutely continuous with respect to $\probp$, respectively equivalent to $\probp$. For every $\probq=(\probq_1,\dots,\probq_d)\in\cQ(\probp)$ and for every $X\in\cX$ we set
\[
\frac{d\probq}{d\probp} := \left(\frac{d\probq_i}{d\probp},\dots,\frac{d\probq_d}{d\probp}\right), \ \ \ \ \E_\probq[X] := \E\bigg[\bigg\langle X,\frac{d\probq}{d\probp}\bigg\rangle\bigg] = \sum_{i=1}^d\E_{\probq_i}[X_i].
\]
Moreover, for every $\probq\in\cQ(\probp)$ we define
\[
\sigma(\probq) := \sigma_{\agg^{-1}(\cA)}\bigg(\frac{d\probq}{d\probp}\bigg)=\inf_{X\in\agg^{-1}(\cA)}\E_\probq[X].
\]
If $\rho$ is proper and $\sigma(\cX,\cX')$-lower semicontinuous, then for every $X\in\cX$ we can write
\[
\rho(X) = \sup_{\probq\in\cQ(\probp),\,\frac{d\probq}{d\probp}\in\cX'}\{\sigma(\probq)-\E_\probq[X]\}.
\]
We can replace $\cQ(\probp)$ by $\cQ_e(\probp)$ in the above supremum provided that $\cB(\agg^{-1}(\cA))\cap\cC\cap\cX'_{++}\neq\emptyset$.
\end{remark}

\medskip

The condition $\cB(\agg^{-1}(\cA))\cap\cC\cap\cX'_{++}\neq\emptyset$ is necessary to be able to restrict the domain in the above dual representation to strictly-positive dual elements. In the terminology of convex analysis, this condition requires that the convex set $\agg^{-1}(\cA)$ admits a strictly-positive supporting functional that belongs to the special set $\cC$. In the next proposition we show that this always holds if the acceptance set $\cA$ is supported by a strictly-positive functional and the impact map $\agg$ is bounded above by a strictly-increasing affine function of the consolidated capital position.

\begin{proposition}
\label{prop: cond str pos}
Assume that $\cX_i=\cE$ for every $i\in\{1,\dots,d\}$. Moreover, suppose that $\cB(\cA)\cap\cE'_{++}\neq\emptyset$ and there exist $a\in(0,\infty)$ and $b\in\R$ such that
\[
\agg(X) \leq a\sum_{i=1}^dX_i+b
\]
for every $X\in\cX$. Then, $\cB(\agg^{-1}(\cA))\cap\cC\cap\cX'_{++}\neq\emptyset$.
\end{proposition}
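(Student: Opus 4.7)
The plan is to construct an explicit $Z\in\cB(\agg^{-1}(\cA))\cap\cC\cap\cX'_{++}$ by pulling a strictly-positive support functional of $\cA$ back through the inequality that bounds $\agg$ from above. The natural candidate is the ``diagonal'' lift of a rescaled version of such a support functional.

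Concretely, I would start by fixing $W\in\cB(\cA)\cap\cE'_{++}$, so that $W>0$ almost surely and $\sigma_\cA(W)>-\infty$. Since $W>0$ a.s., the constant $c:=\E[W]$ is strictly positive. Because $\cX_i=\cE$ for every $i$, we have $\cX'=\cE'\times\cdots\times\cE'$, so it makes sense to define
\[
Z := \frac{1}{c}(W,W,\dots,W) \in \cX'.
\]
Each component $Z_i=W/c$ is strictly positive a.s. with $\E[Z_i]=1$, hence $Z\in\cC\cap\cX'_{++}$ by the very definition of $\cC$.

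It remains to show $Z\in\cB(\agg^{-1}(\cA))$, i.e.\ that $\sigma_{\agg^{-1}(\cA)}(Z)>-\infty$. For any $X\in\agg^{-1}(\cA)$, the assumed affine upper bound on $\agg$ rearranges to $\sum_{i=1}^d X_i\geq \frac{\agg(X)-b}{a}$, and multiplying by the nonnegative random variable $W$ and taking expectations gives
\[
\E[\langle X,Z\rangle] = \frac{1}{c}\E\Big[W\sum_{i=1}^d X_i\Big] \geq \frac{1}{ac}\bigl(\E[W\agg(X)]-b\,\E[W]\bigr) \geq \frac{1}{ac}\bigl(\sigma_\cA(W)-b\,c\bigr),
\]
where the last step uses $\agg(X)\in\cA$ and $W\in\cB(\cA)$. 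Taking the infimum over $X\in\agg^{-1}(\cA)$ shows $\sigma_{\agg^{-1}(\cA)}(Z)\geq \frac{1}{ac}(\sigma_\cA(W)-bc)>-\infty$, which is exactly the required membership.

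I do not anticipate a genuine obstacle here: the only slightly subtle point is making sure the chosen $Z$ lies in all three sets simultaneously, and the diagonal rescaling of $W$ handles $\cC$ and $\cX'_{++}$ automatically while the affine domination of $\agg$ transfers $W\in\cB(\cA)$ to $Z\in\cB(\agg^{-1}(\cA))$. The positivity of $W$ is crucial for the inequality step and for $c>0$, and the uniformity of the spaces $\cX_i=\cE$ is what allows the same $W$ to be placed into every coordinate.
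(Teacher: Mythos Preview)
Your argument is correct and follows essentially the same route as the paper: both take a strictly-positive $W\in\cB(\cA)$, form the diagonal vector $Z=(cW,\dots,cW)$ for an appropriate scalar $c$, and use the affine bound $\agg(X)\leq a\sum_iX_i+b$ together with $W\geq0$ to transfer the lower bound $\sigma_\cA(W)>-\infty$ into $\sigma_{\agg^{-1}(\cA)}(Z)>-\infty$. The only cosmetic difference is the choice of scalar: you take $c=1/\E[W]$ so that $Z\in\cC$ is immediate, whereas the paper writes $Z=(aW,\dots,aW)$ and implicitly relies on the fact that $\cB(\cA)$ is a cone to normalise $W$ beforehand; your version makes this step explicit.
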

\begin{proof}
Take $W\in\cB(\cA)\cap\cE'_{++}$ and set $Z=(aW,\dots,aW)\in\cC\cap\cX'_{++}$. Then, we easily see that
\[
\sigma_{\agg^{-1}(\cA)}(Z)
=
\inf_{X\in\agg^{-1}(\cA)}\E[\langle X,Z\rangle]
\geq
\inf_{X\in\agg^{-1}(\cA)}\E[(\agg(X)-b)W]
\geq
\sigma_\cA(W)-b\E[W]
>
-\infty.
\]
This delivers the desired assertion.
\end{proof}


\subsection{Characterizing the systemic acceptance set $\agg^{-1}(\cA)$}

Through the support function of the ``systemic acceptance set'' $\agg^{-1}(\cA)$, the dual representation of the systemic risk measure $\rho$ in Theorem~\ref{theo: dual representation} highlights the dependence on the two fundamental underlying ingredients: The impact map $\agg$ and the acceptance set $\cA$. The aim of this subsection is to provide a dual description of the systemic acceptance set by using ``penalty functions'' that are related to (but different from) the support function $\sigma_{\agg^{-1}(\cA)}$ and to investigate the main properties of these maps. Our analysis is based on the following definition.

\begin{definition}
\label{def: penalties}
We define two maps $\alpha,\alpha^+:\cX'\to[-\infty,+\infty]$ by setting
\[
\alpha(Z) := \sup_{W\in\cB(\cA)}\Big\{\sigma_\cA(W)+
\inf_{X\in\cX}\{\E[\langle X,Z\rangle]-\E[\agg(X)W]\}\Big\},
\]
\[
\alpha^+(Z) := \sup_{W\in\cB(\cA)\cap(\cE'_{++}\cup\{0\})}\Big\{\sigma_\cA(W)+
\inf_{X\in\cX}\{\E[\langle X,Z\rangle]-\E[\agg(X)W]\}\Big\}.
\]
\end{definition}

\smallskip

\begin{remark}
\label{rem: alpha K}
(i) It is easy to see that $\alpha$ and $\alpha^+$ are different in general. For example, if $d>1$ and $\cX_i=\cE$ for every $i\in\{1,\dots,d\}$ and we set $\agg(X) = \sum_{i=1}^dX_i$ for every $X\in\cX$ and $\cA=\cE_+$, then we have
\[
\alpha = -\delta_\cD \neq -\delta_{\cD\cap(\cX'_{++}\cup\{0\})} = \alpha^+
\]
where $\cD=\{Z\in\cX'_+ \,; \ Z_1=\cdots=Z_d\}$.

\smallskip

(ii) The above maps belong to the class of maps $\alpha_\cK:\cX'\to[-\infty,+\infty]$ defined by
\[
\alpha_\cK(Z) := \sup_{W\in\cK}\Big\{\sigma_\cA(W)+\inf_{X\in\cX}\{\E[\langle X,Z\rangle]-\E[\agg(X)W]\}\Big\},
\]
where $\cK$ is a convex cone in $\cB(\cA)$ such that $\lambda\cK+(1-\lambda)\cB(\cA)\subset\cK$ for every $\lambda\in[0,1]$. This will allow us to prove properties for $\alpha$ and $\alpha^+$ simultaneously. In fact, all properties of $\alpha$ and $\alpha^+$ we will consider are shared by the entire class.
\end{remark}

\smallskip

The next theorem records the announced dual representation of the systemic acceptance set and shows why the above maps are natural ``penalty functions''.

\begin{theorem}
\label{prop:Lambda-1(A)_dualrepr}
The systemic acceptance set $\agg^{-1}(\cA)$ can be represented as
\[
\agg^{-1}(\cA) = \bigcap_{Z\in\cX'}\{X\in\cX \,; \ \E[\langle X,Z\rangle]\geq\alpha(Z)\}.
\]
If $\cB(\cA)\cap\cE'_{++}\neq\emptyset$, then $\agg^{-1}(\cA)$ can also be represented as
\[
\agg^{-1}(\cA) = \bigcap_{Z\in\cX'}\{X\in\cX \,; \ \E[\langle X,Z\rangle]\geq\alpha^+(Z)\}.
\]
\end{theorem}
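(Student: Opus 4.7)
The plan is to combine two duality principles: the external characterization~\eqref{eq: external characterization} of the closed convex set $\cA$ (through its support function $\sigma_\cA$ and barrier cone $\cB(\cA)$), and the concave Fenchel-Moreau Theorem applied, for each fixed $W\in\cB(\cA)$, to the auxiliary functional $h_W:\cX\to\R$ defined by $h_W(X):=\E[\agg(X)W]$. Since $\cB(\cA)\subset\cE'_+$ by monotonicity of $\cA$, the elements $W$ in play are nonnegative, so (S4) makes $h_W$ concave and (S5) makes it $\sigma(\cX,\cX')$-upper semicontinuous; the concave Fenchel-Moreau Theorem then yields
\[
h_W(X) = \inf_{Z\in\cX'}\{\E[\langle X,Z\rangle]-h_W^\bullet(Z)\}, \qquad h_W^\bullet(Z) = \inf_{Y\in\cX}\{\E[\langle Y,Z\rangle]-\E[\agg(Y)W]\}.
\]
By~\eqref{eq: external characterization} applied to $\cA$, the condition $X\in\agg^{-1}(\cA)$ is equivalent to $h_W(X)\geq\sigma_\cA(W)$ for every $W\in\cB(\cA)$, which by the Fenchel-Moreau identity above becomes $\E[\langle X,Z\rangle]\geq\sigma_\cA(W)+h_W^\bullet(Z)$ for all $Z\in\cX'$ and $W\in\cB(\cA)$. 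Taking the supremum over $W$, this is precisely $\E[\langle X,Z\rangle]\geq\alpha(Z)$ for every $Z$, giving the first representation.

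For the $\alpha^+$-representation, the inclusion $\agg^{-1}(\cA)\subset\bigcap_{Z\in\cX'}\{X\in\cX \,; \ \E[\langle X,Z\rangle]\geq\alpha^+(Z)\}$ is immediate from $\alpha^+\leq\alpha$ and the first representation. For the reverse inclusion, assume $\E[\langle X,Z\rangle]\geq\alpha^+(Z)$ for every $Z\in\cX'$. Running the argument of the previous paragraph with $\cB(\cA)$ replaced by $\cB(\cA)\cap(\cE'_{++}\cup\{0\})$ yields $\E[\agg(X)W]\geq\sigma_\cA(W)$ for every such $W$. To upgrade this to an arbitrary $W\in\cB(\cA)$, fix $W^*\in\cB(\cA)\cap\cE'_{++}$ (existing by hypothesis) and set $W_\lambda:=(1-\lambda)W+\lambda W^*$ for $\lambda\in(0,1)$; convexity of $\cB(\cA)$ together with strict positivity of $W^*$ gives $W_\lambda\in\cB(\cA)\cap\cE'_{++}$. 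Applying the established inequality at $W_\lambda$ and using superlinearity of $\sigma_\cA$ (concavity plus positive homogeneity),
\[
(1-\lambda)\E[\agg(X)W]+\lambda\E[\agg(X)W^*] = \E[\agg(X)W_\lambda] \geq \sigma_\cA(W_\lambda) \geq (1-\lambda)\sigma_\cA(W)+\lambda\sigma_\cA(W^*),
\]
and letting $\lambda\to0^+$ yields $\E[\agg(X)W]\geq\sigma_\cA(W)$. By~\eqref{eq: external characterization} I conclude $X\in\agg^{-1}(\cA)$.

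The delicate point is the final limit. One might hope that $\sigma_\cA(W_\lambda)\to\sigma_\cA(W)$, but $\sigma_\cA$ is only upper semicontinuous in general and can jump downward at the boundary of $\cB(\cA)$. The workaround is the observation that superlinearity of $\sigma_\cA$ already supplies the one-sided lower bound needed; no further continuity of $\sigma_\cA$ is required, and both sides of the chain converge cleanly, using only that $\E[\agg(X)W^*]$ (since $\agg(X)\in\cE$ and $W^*\in\cE'$) and $\sigma_\cA(W^*)$ are finite.
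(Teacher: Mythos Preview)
Your proof is correct and follows essentially the same approach as the paper: both combine the external characterization~\eqref{eq: external characterization} of $\cA$ with the concave Fenchel-Moreau representation of $X\mapsto\E[\agg(X)W]$ for $W\in\cE'_+$, and both handle the passage from strictly positive $W$ to general $W\in\cB(\cA)$ via the convex combination $W_\lambda=(1-\lambda)W+\lambda W^\ast$ together with the superlinearity of $\sigma_\cA$. The only difference is organizational: the paper performs the concavity step upfront (noting that the external characterization of $\cA$ already holds when the intersection is restricted to any cone $\cK$ with $\lambda\cK+(1-\lambda)\cB(\cA)\subset\cK$), and then treats $\alpha$ and $\alpha^+$ simultaneously, whereas you prove the $\alpha$ case first and make the convexity argument explicit at the end for $\alpha^+$.
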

\begin{proof}
Let $\cK\subset\cB(\cA)$ be a convex cone as in Remark~\ref{rem: alpha K}. Note that, by concavity of $\sigma_\cA$, we can equivalently rewrite the representation~\eqref{eq: external characterization} applied to $\cA$ as
\[
\cA = \bigcap_{W\in\cK}\{U\in\cE \,; \ \E[UW]\geq\sigma_\cA(W)\}.
\]
Now, for each $W\in\cK\subset\cE'_+$ we consider the functional $\varphi_W:\cX\to\R$ defined by
\[
\varphi_W(X):=\E[\agg(X)W].
\]
As remarked in the proof of Proposition~\ref{prop: conditions for S5}, the functional $\varphi_W$ is concave by (S3) and (S4) and $\sigma(\cX,\cX')$-upper semicontinuous by (S5). Hence, it follows from the Fenchel-Moreau Theorem that
\[
\varphi_W(X) = \inf_{Z\in\cX'}\{\E[\langle X,Z\rangle]-(\varphi_W)^\bullet(Z)\}
\]
for every $X\in\cX$. As a result, we obtain
\begin{eqnarray*}
\agg^{-1}(\cA)
&=&
\{X\in\cX \,; \ \agg(X)\in\cA\} \\
&=&
\{X\in\cX \,; \ \E[\agg(X)W]\geq\sigma_\cA(W), \ \forall W\in\cK\} \\
&=&
\{X\in\cX \,; \ \E[\langle X,Z\rangle]-(\varphi_W)^\bullet(Z)\geq\sigma_\cA(W), \ \forall W\in\cK, \ \forall Z\in\cX'\} \\
&=&
\bigcap_{Z\in\cX'}\Big\{X\in\cX \,; \ \E[\langle X,Z\rangle] \geq\sup_{W\in\cK}\{\sigma_\cA(W)+(\varphi_W)^\bullet(Z)\}\Big\} \\
&=&
\bigcap_{Z\in\cX'}\{X\in\cX \,; \ \E[\langle X,Z\rangle] \geq\alpha_\cK(Z)\}.
\end{eqnarray*}
This delivers the desired representation when applied to $\cK=\cB(\cA)$ and $\cK=\cB(\cA)\cap(\cE'_{++}\cup\{0\})$.
\end{proof}

\smallskip

The next proposition collects some properties of the maps $\alpha$ and $\alpha^+$ and shows the relation between them. Here, we denote by $\dom(\alpha)$ the domain of finiteness of $\alpha$ (similarly for $\alpha^+$). In addition, we denote by $\cl$ the closure operator with respect to the topology $\sigma(\cX',\cX)$.

\begin{proposition}
\label{prop:properties_alpha}
The maps $\alpha,\alpha^+:\cX'\to[-\infty,\infty]$ satisfy the following properties (the statements about $\alpha^+$ require that $\cB(\cA)\cap\cE'_{++}\neq\emptyset$):
\begin{enumerate}[(i)]
    \item $\alpha$ and $\alpha^+$ take values in the interval $[-\infty,0]$.
    \item $\alpha$ and $\alpha^+$ are concave and positively homogeneous.
    \item $\alpha^+\leq\alpha$ with equality on $\dom(\alpha^+)$.
    \item $\dom(\alpha^+)\subset\dom(\alpha)\subset\cl(\dom(\alpha^+))\subset\cX'_+$.
\end{enumerate}
\end{proposition}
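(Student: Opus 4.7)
The plan is to prove all four items by working with the unified family $\alpha_\cK$ from Remark \ref{rem: alpha K} and introducing the shorthand $\beta(Z, W) := \sigma_\cA(W) + \inf_{X \in \cX}\{\E[\langle X, Z\rangle] - \E[\agg(X) W]\}$, so that $\alpha_\cK(Z) = \sup_{W \in \cK}\beta(Z, W)$. For (i), evaluating the inner infimum at $X = 0$ with $\agg(0) = 0$ (property (S2)) gives $\inf_X\{\cdots\} \leq 0$, while $0 \in \cA$ together with $W \in \cB(\cA) \subset \cE'_+$ gives $\sigma_\cA(W) \leq 0$; hence $\beta \leq 0$ pointwise. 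For (ii), positive homogeneity follows from the change of variable $W \mapsto tW$, the cone property of $\cK$, and positive homogeneity of $\sigma_\cA$. Concavity follows by mixing any $W_1, W_2 \in \cK$ into $W_\lambda := \lambda W_1 + (1-\lambda) W_2 \in \cK$, applying concavity of $\sigma_\cA$, and invoking $\inf_X(f + g) \geq \inf_X f + \inf_X g$ on the linear decomposition of $Z$.

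For (iii), $\alpha^+ \leq \alpha$ is immediate by comparing domains. For the reverse inequality on $\dom(\alpha^+)$, I would fix $Z \in \dom(\alpha^+)$, choose $W^+ \in \cB(\cA) \cap \cE'_{++}$ with $B^+ := \beta(Z, W^+) > -\infty$, and form the mixture $W_\lambda = \lambda W^+ + (1-\lambda) W \in \cB(\cA) \cap \cE'_{++}$ for any $W \in \cB(\cA)$ and $\lambda \in (0, 1]$. The concavity of $\sigma_\cA$ and the inf-splitting then give
\[
\alpha^+(Z) \geq \beta(Z, W_\lambda) \geq \lambda B^+ + (1-\lambda)\beta(Z, W),
\]
and sending $\lambda \to 0^+$ (using finiteness of $B^+$) followed by a supremum over $W$ yields $\alpha^+(Z) \geq \alpha(Z)$. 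For (iv), the first inclusion is immediate from (iii). For the third, if $Z \notin \cX'_+$ one can pick $Y \in \cX_+$ with $\E[\langle Y, Z\rangle] < 0$; since $\agg^{-1}(\cA)$ is monotone and contains $0$, the ray $\{tY \,;\, t \geq 0\}$ lies in $\agg^{-1}(\cA)$, and Theorem \ref{prop:Lambda-1(A)_dualrepr} forces $\alpha(Z) \leq t \E[\langle Y, Z\rangle] \to -\infty$.

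The middle inclusion $\dom(\alpha) \subset \cl(\dom(\alpha^+))$ is the main technical step. First I would exhibit a nonzero $Z^+ \in \dom(\alpha^+)$: starting from any $W^+ \in \cB(\cA) \cap \cE'_{++}$, the functional $\varphi_{W^+}(X) := \E[\agg(X) W^+]$ is concave, $\sigma(\cX, \cX')$-upper semicontinuous, and nonconstant (using (S1), (S3), and strict positivity of $W^+$), so Fenchel-Moreau ensures that $\dom(\varphi_{W^+}^\bullet)$ contains a nonzero $Z^+$ with $B^+ := \beta(Z^+, W^+) > -\infty$. Given $Z \in \dom(\alpha)$, I would set $Z_\lambda = \lambda Z^+ + (1-\lambda) Z$ and, for arbitrary $W_2 \in \cB(\cA)$, form $W = \lambda W^+ + (1-\lambda) W_2 \in \cB(\cA) \cap \cE'_{++}$. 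Splitting simultaneously in $Z$ and $W$ yields $\beta(Z_\lambda, W) \geq \lambda B^+ + (1-\lambda) \beta(Z, W_2)$; taking a supremum over $W_2$ gives
\[
\alpha^+(Z_\lambda) \geq \lambda B^+ + (1-\lambda) \alpha(Z) > -\infty,
\]
so $Z_\lambda \in \dom(\alpha^+)$, and $Z_\lambda \to Z$ in $\sigma(\cX', \cX)$ as $\lambda \to 0^+$. The hard part is precisely this coupling: a single convex combination on the $W$-side must simultaneously preserve strict positivity (to remain in the smaller cone) and finiteness of $\beta$ (to remain in the domain), which is why mixing with a carefully chosen witness pair $(Z^+, W^+)$ is needed rather than approximating $Z$ purely within $\dom(\alpha^+)$.
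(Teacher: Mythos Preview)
Your proposal is correct and follows essentially the same strategy as the paper: work with the unified family $\alpha_\cK$, establish joint concavity of the bivariate functional $\beta(Z,W)$ (the paper's $\Phi$), and use convex mixing on the $W$-side for (iii) and simultaneous mixing in $(Z,W)$ for the middle inclusion in (iv).

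There are two minor differences worth noting. For (i), the paper appeals to the representation of $\agg^{-1}(\cA)$ in Theorem~\ref{prop:Lambda-1(A)_dualrepr} to get $\alpha_\cK\leq\sigma_{\agg^{-1}(\cA)}\leq 0$, whereas your direct evaluation at $X=0$ together with $0\in\cA$ is more elementary and avoids the forward reference. For (iv), the paper simply writes ``take $Z^\ast\in\dom(\alpha^+)$ and $W^\ast\in\cB(\cA)\cap\cE'_{++}$ such that $\Phi(Z^\ast,W^\ast)$ is finite'' without justifying existence; you actually supply the argument via Fenchel--Moreau applied to $\varphi_{W^+}$. Your insistence on a \emph{nonzero} $Z^+$ is unnecessary, however: any $Z^+\in\dom(\varphi_{W^+}^\bullet)$ (including possibly $0$) works in the convex-combination argument, since $Z_\lambda=\lambda Z^++(1-\lambda)Z\to Z$ regardless. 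The nonconstancy claim for $\varphi_{W^+}$ happens to be true (combine (S1), (S3), and $W^+>0$ via $X_3=X_1\vee X_2$), but it is extra work you do not need. Finally, for the inclusion $\cl(\dom(\alpha^+))\subset\cX'_+$ you should explicitly invoke the $\sigma(\cX',\cX)$-closedness of $\cX'_+$, as the paper does.
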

\begin{proof}
Throughout the proof we fix a convex cone $\cK\subset\cB(\cA)$ as in Remark~\ref{rem: alpha K}. The desired assertions will follow by taking $\cK=\cB(\cA)$ and $\cK=\cB(\cA)\cap(\cE'_{++}\cup\{0\})$.

\smallskip

{\em (i)} The representation of the systemic acceptance set $\agg^{-1}(\cA)$ established in the proof of Theorem~\ref{prop:Lambda-1(A)_dualrepr} yields $\alpha_\cK(Z)\leq\sigma_{\agg^{-1}(\cA)}(Z)\leq0$ for every $Z\in\cX'$.

\smallskip

{\em (ii)} To show that $\alpha_\cK$ is concave, set for all $Z\in\cX'$ and $W\in\cE'$
\[
\Phi(Z,W):=\inf_{X\in\cX}\{\sigma_\cA(W)+\E[\langle X,Z\rangle]-\E[\agg(X)W]\}.
\]
Being the infimum over the parameter $X$ of a function that is clearly jointly concave in $Z$ and $W$, we see that $\Phi$ is itself jointly concave. Since
\[
\alpha_\cK(Z) = \sup_{W\in\cK}\Phi(Z,W)
\]
for every $Z\in\cX'$, we infer that $\alpha_\cK$ is concave. To show that $\alpha_\cK$ is positively homogeneous, note first that $0$ always belongs to $\cK$, so that $\alpha_\cK(0)\geq0$. Together with point {\em (i)}, this implies that $\alpha_\cK(0)=0$. Finally, for $Z\in\cX'$ and $\lambda\in(0,\infty)$ we have
\begin{eqnarray*}
\alpha_\cK(\lambda Z)
&=&
\sup_{W\in\cK}\Big\{\sigma_\cA(W)+\inf_{X\in\cX}\{\lambda\E[\langle X,Z\rangle] -\E[\agg(X)W]\}\Big\} \\
&=&
\lambda\sup_{W\in\cK}\left\{\sigma_\cA\left(\frac{1}{\lambda}W\right)+\inf_{X\in\cX}\left\{ \E[\langle X,Z\rangle]-\E\left[\agg(X)\frac{1}{\lambda}W\right]\right\}\right\} \\
&=&
\lambda\sup_{W\in\cK}\Big\{\sigma_\cA(W)+\inf_{X\in\cX}\{\E[\langle X,Z\rangle] -\E[\agg(X)W]\}\Big\} \\
&=&
\lambda\alpha_\cK(Z),
\end{eqnarray*}
where we used that $\cK$ is a cone. This shows that $\alpha_\cK$ is positively homogeneous.

\smallskip

{\em (iii)} It is clear that $\alpha^+\leq\alpha$. To show that $\alpha^+=\alpha$ on $\dom(\alpha^+)$, take $Z\in\dom(\alpha^+)$ and note that
\[
\alpha(Z) = \sup_{W\in\cB(\cA)}\Phi(Z,W), \ \ \ \ \alpha^+(Z) = \sup_{W\in\cB(\cA)\cap\cE'_{++}}\Phi(Z,W).
\]
Take $W^\ast\in\cB(\cA)\cap\cE'_{++}$ such that $\Phi(Z,W^\ast)$ is finite. For each $W\in\cB(\cA)$ set $W_\lambda=\lambda W+(1-\lambda)W^\ast$ for $\lambda\in[0,1)$. Note that $(W_\lambda)\subset\cB(\cA)\cap\cE'_{++}$, so that
\[
\alpha^+(Z) \geq \Phi(Z,W_\lambda) \geq \lambda\Phi(Z,W)+(1-\lambda)\Phi(Z,W^\ast) \xrightarrow{\lambda\uparrow1}\Phi(Z,W).
\]
Taking a supremum over $W$ delivers $\alpha^+(Z)\geq\alpha(Z)$.

\smallskip

{\em (iv)} Note that $\dom(\alpha^+)\subset\dom(\alpha)$ by point {\em (iii)}. Since $\alpha\leq\sigma_{\agg^{-1}(\cA)}$ as proved in point {\em (i)}, we also have $\dom(\alpha)\subset\cB(\agg^{-1}(\cA))\subset\cX'_+$. As $\cX'_+$ is $\sigma(\cX',\cX)$-closed, it remains to show that $\dom(\alpha)\subset\cl(\dom(\alpha^+))$. To this effect, let $Z\in\dom(\alpha)$ and note that $\Phi(Z,W)$ must be finite for some $W\in\cB(\cA)$. Take $Z^\ast\in\dom(\alpha^+)$ and $W^\ast\in\cB(\cA)\cap\cE'_{++}$ such that $\Phi(Z^\ast,W^\ast)$ is finite. Then, for every $\lambda\in[0,1]$ we have
\[
\alpha^+(\lambda Z+(1-\lambda)Z^\ast) \geq \Phi(\lambda Z+(1-\lambda)Z^\ast,\lambda W+(1-\lambda)W^\ast) \geq \lambda\Phi(Z,W)+(1-\lambda)\Phi(Z^\ast,W^\ast) > -\infty
\]
by the joint convexity of $\Phi$. The claim follows by letting $\lambda\uparrow1$.
\end{proof}


\subsubsection*{The case where $\agg$ is induced by $\Lambda$}

As mentioned in Remark~\ref{ex: Lambda}, the bulk of the literature has focused on the case where the impact function is based on an aggregation function $\Lambda:\R^d\to\R$. The last part of this subsection is devoted to provide an equivalent formulation of $\alpha$ and $\alpha^+$ in this situation. We focus on the positive cone $\cX'_+$ because both maps take nonfinite values elsewhere. For ease of notation, for every $Z\in\cX'_+$ we set
\[
E_+(Z) := \bigcup_{i=1}^d\{Z_i>0\} \in \cF.
\]

\smallskip

\begin{proposition}
\label{prop: alpha in the Lambda case}
Assume that $\cX$ is closed with respect to multiplications by characteristic functions, i.e.\ for every $X\in\cX$ and $E\in\cF$ we have $(\one_EX_1,\dots,\one_EX_d)\in\cX$. Moreover, consider a nonconstant, nondecreasing, concave function $\Lambda:\R^d\to\R$ satisfying $\Lambda(0)=0$ and assume that $\agg(X) = \Lambda(X)$ for every $X\in\cX$. Then, the following statements hold for every nonzero $Z\in\cX'_+$:
\begin{enumerate}[(i)]
  \item We have $\cB(\cA)\cap\{W\in\cE'_+ \,; \ \mbox{$W>0$ on $E_+(Z)$}\}\neq\emptyset$ and
\[
\alpha(Z) = \sup_{W\in\cB(\cA),\,W>0 \,\mbox{\footnotesize on}\, E_+(Z)}
\bigg\{\sigma_\cA(W)+\E\bigg[\one_{\{W>0\}}\Lambda^\bullet\bigg(\frac{Z}{W}\bigg)W\bigg]\bigg\}.
\]
  \item If $\cB(\cA)\cap\cE'_{++}\neq\emptyset$, then
\[
\alpha^+(Z) = \sup_{W\in\cB(\cA)\cap\cE'_{++}}\bigg\{\sigma_\cA(W)+
\E\bigg[\Lambda^\bullet\bigg(\frac{Z}{W}\bigg)W\bigg]\bigg\}
\]
\end{enumerate}
In both cases, the ratio $\frac{Z}{W}$ is understood component by component.
\end{proposition}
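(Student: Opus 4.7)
The plan is to compute, for each fixed $W\in\cB(\cA)$, the inner infimum
\[
I(Z,W) := \inf_{X\in\cX}\E[\langle X,Z\rangle-\Lambda(X)W]
\]
appearing in the definitions of $\alpha(Z)$ and $\alpha^+(Z)$, and then substitute the explicit value back. Monotonicity of $\cA$ gives $W\in\cE'_+$. For each $\omega\in\Omega$ the pointwise infimum over $x\in\R^d$ of $\langle x,Z(\omega)\rangle-\Lambda(x)W(\omega)$ equals $W(\omega)\Lambda^\bullet(Z(\omega)/W(\omega))$ on $\{W>0\}$, equals $0$ on $\{W=0\}\cap\{Z=0\}$, and equals $-\infty$ on $\{W=0\}\cap E_+(Z)$: the last case uses $Z(\omega)\geq 0$ with some coordinate strictly positive, so sending $x$ to $-\infty$ in that coordinate drives $\langle x,Z(\omega)\rangle$ to $-\infty$ while $\Lambda(x)W(\omega)=0$.

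Using that $\cX\supset\R^d$ and is closed under multiplications by indicator functions, I would identify $I(Z,W)$ with the integral of the pointwise infimum. The inequality $I(Z,W)\leq\int\inf_x\{\langle x,Z\rangle-\Lambda(x)W\}\,d\probp$ follows from approximating a pointwise minimizer by $\cX$-valued step functions and using dominated convergence. For the reverse inequality when $\probp(\{W=0\}\cap E_+(Z))>0$, select an index $i$ and $E\subset\{W=0,\,Z_i>0\}$ with $\probp(E)>0$, and test with $X^n:=-n\,\one_E\,e\in\cX$: then
\[
\E[\langle X^n,Z\rangle-\Lambda(X^n)W]=-n\,\E[\one_E Z_i]\longrightarrow-\infty,
\]
since $W\equiv 0$ on $E$ kills the $\Lambda$-term. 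Hence
\[
I(Z,W)=\begin{cases}\E\big[\one_{\{W>0\}}\Lambda^\bullet(Z/W)W\big] & \mbox{if }W>0 \mbox{ on } E_+(Z),\\ -\infty & \mbox{otherwise.}\end{cases}
\]
Plugging this into $\alpha(Z)=\sup_{W\in\cB(\cA)}\{\sigma_\cA(W)+I(Z,W)\}$ restricts the effective supremum to those $W\in\cB(\cA)$ with $W>0$ on $E_+(Z)$, proving the formula in (i). Part (ii) follows from the same computation under the extra constraint $W\in\cB(\cA)\cap(\cE'_{++}\cup\{0\})$; every strictly positive $W$ automatically satisfies $W>0$ on $E_+(Z)$ and $\one_{\{W>0\}}=1$ almost surely, so the indicator simply disappears.

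The main obstacle will be the nonemptiness claim in (i), which requires producing a concrete $W\in\cB(\cA)$ strictly positive on $E_+(Z)$. My approach would be to start from any nonzero $W_0\in\cB(\cA)$ (which exists because $\cA$ is a proper, closed, convex, monotone subset of $\cE$) and additively perturb it inside $\cE'_+$ to enforce strict positivity on $E_+(Z)$ while keeping $\sigma_\cA$ finite, exploiting concavity of $\sigma_\cA$ together with the structure of admissible spaces (in particular that $L^\infty\subset\cE'$, so bounded strictly positive densities supported on measurable subsets exhausting $E_+(Z)$ are available). This is where the interplay between the pairing $(\cE,\cE')$, the monotonicity of $\cA$, and the geometry of $Z$ matters most delicately.
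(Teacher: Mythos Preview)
Your approach to the formulas in (i) and (ii) is essentially that of the paper: both reduce to computing the pointwise infimum $\inf_{x\in\R^d}\{\langle x,Z(\omega)\rangle-\Lambda(x)W(\omega)\}$ in the three cases you list and then interchanging infimum and expectation. The paper dispatches the interchange in one line by invoking Theorem~14.60 in Rockafellar and Wets~\cite{RockafellarWets2009}, which applies precisely because $\cX$ is decomposable (contains $L^\infty(\R^d)$ and is closed under multiplication by indicators). Your hand-rolled version has the right idea but some imprecision: the inequality $I(Z,W)\geq\E[\inf_x\{\ldots\}]$ is the trivial direction, while $I(Z,W)\leq\E[\inf_x\{\ldots\}]$ is the substantive one and needs a measurable $\varepsilon$-selector rather than just ``approximating a pointwise minimizer''. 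Your $X^n$ construction is not a ``reverse inequality'' but the same ($\leq$) direction in the degenerate case; also $X^n=-n\one_E e$ gives $\E[\langle X^n,Z\rangle]=-n\,\E[\one_E\langle e,Z\rangle]$, not $-n\,\E[\one_E Z_i]$ (use $e_i$ or fix the expression---the conclusion is unaffected).

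On the nonemptiness claim in (i): you are right to single this out, and in fact the paper's own proof does \emph{not} address it---the argument there establishes only the formula, which remains correct (both sides equal $-\infty$) when the set is empty. Your additive-perturbation idea will not work in general, because $\cB(\cA)$ need not be stable under adding bounded nonnegative functions. Concretely, with $d=1$, $\cX=\cE=L^\infty$, $\Lambda(x)=x$, and $\cA=\{U:\E[U\one_A]\geq 0\}$ for some $A\in\cF$ with $0<\probp(A)<1$, one checks that $\cB(\cA)=\R_+\one_A$; taking $Z=\one_{A^c}$ gives $E_+(Z)=A^c$, and no element of $\cB(\cA)$ is strictly positive there. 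So the obstacle you identified is genuine, and neither your sketch nor the paper resolves it.
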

\begin{proof}
Let $\cK\subset\cB(\cA)$ be a convex cone as in Remark~\ref{rem: alpha K} and fix a nonzero element $Z\in\cX'_+$. Inspired by Ararat and Rudloff~\cite{AraratRudloff}, we invoke Theorem 14.60 in Rockafellar and Wets~\cite{RockafellarWets2009} to get
\begin{equation}
\label{eq: auxiliary Lambda 0}
\inf_{X\in\cX}\{\E[\langle X,Z\rangle]-\E[\Lambda(X)W]\} = \E\Big[\inf_{x\in\R^d}\{xZ-\Lambda(x)W\}\Big]
\end{equation}
for every $W\in\cK$ (this result requires that $\cX$ be closed with respect to multiplications by characteristic functions). Recall that $\cK\subset\cE'_+$ and note that for every $W\in\cK$ we have
\begin{equation}
\label{eq: auxiliary Lambda}
\inf_{x\in\R^d}\{xZ-\Lambda(x)W\}=
\begin{cases}
\Lambda^\bullet\big(\frac{Z}{W}\big)W & \mbox{on} \ \{W>0\},\\
0 & \mbox{on} \ \{W=0\}\cap(E_+(Z))^c,\\
-\infty & \mbox{on} \ \{W=0\}\cap E_+(Z).
\end{cases}
\end{equation}
It follows from the definition of $\alpha_\cK$ and \eqref{eq: auxiliary Lambda 0} that
\[
\alpha_\cK(Z) = \sup_{W\in\cK}\Big\{\sigma_\cA(W)+\E\Big[\inf_{x\in\R^d}\{xZ-\Lambda(x)W\}\Big]\Big\}.
\]
Clearly, no $W\in\cB(\cA)$ with $\probp(\{W=0\}\cap E_+(Z))>0$ contributes to the above supremum, so that
\begin{eqnarray*}
\alpha_\cK(Z)
&=&
\sup_{W\in\cK,\,W>0 \,\mbox{\em\footnotesize on}\, E_+(Z)}\Big\{\sigma_\cA(W)+\E\Big[\inf_{x\in\R^d}\{xZ-\Lambda(x)W\}\Big]\Big\} \\
&=&
\sup_{W\in\cK,\,W>0 \,\mbox{\em\footnotesize on}\, E_+(Z)}\Big\{\sigma_\cA(W)+\E\Big[\one_{\{W>0\}}\inf_{x\in\R^d}\{xZ-\Lambda(x)W\}\Big]\Big\} \\
&=&
\sup_{W\in\cK,\,W>0 \,\mbox{\footnotesize on}\, E_+(Z)}
\Big\{\sigma_\cA(W)+\E\Big[\one_{\{W>0\}}\Lambda^\bullet\big(\tfrac{Z}{W}\big)W\Big]\Big\},
\end{eqnarray*}
where we used~\eqref{eq: auxiliary Lambda} in the last equality. The desired assertions follow by taking $\cK=\cB(\cA)$ and $\cK=\cB(\cA)\cap(\cE'_{++}\cup\{0\})$.
\end{proof}


\subsection{Characterizing the support function $\sigma_{\agg^{-1}(\cA)}$}

As we have already noticed, the dual representation in Theorem~\ref{theo: dual representation} depends on the impact map $\agg$ and the acceptance set $\cA$ through the support function of the systemic acceptance set $\agg^{-1}(\cA)$. The goal of this subsection is to provide an equivalent description of the support function that relies on the ``penalty functions'' $\alpha$ and $\alpha^+$. This is a direct consequence of the results in the preceding subsection. Here, we denote by $\ucl(\alpha)$ the $\sigma(\cX',\cX)$-upper semicontinuous hull of $\alpha$, i.e.\ the smallest $\sigma(\cX',\cX)$-upper semicontinuous map dominating $\alpha$ (similarly for $\alpha^+$).

\begin{theorem}
\label{theo: representation support counterimage}
The support function $\sigma_{\agg^{-1}(\cA)}$ can be represented as
\[
\sigma_{\agg^{-1}(\cA)}=\ucl(\alpha).
\]
If $\cB(\cA)\cap\cE'_{++}\neq\emptyset$, then $\sigma_{\agg^{-1}(\cA)}$ can also be represented as
\[
\sigma_{\agg^{-1}(\cA)}=\ucl(\alpha^+).
\]
\end{theorem}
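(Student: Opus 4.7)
The plan is to prove both identities in parallel by writing $\beta$ for either $\alpha$ or $\alpha^+$ (the latter under $\cB(\cA)\cap\cE'_{++}\neq\emptyset$) and exploiting that, by Proposition~\ref{prop:properties_alpha}, $\beta$ is concave, positively homogeneous, bounded above by $0$, and satisfies $\beta(0)=0$. Theorem~\ref{prop:Lambda-1(A)_dualrepr} expresses $\agg^{-1}(\cA)$ as the intersection $\bigcap_{Z\in\cX'}\{X\in\cX \,;\ \E[\langle X,Z\rangle]\geq\beta(Z)\}$, so it suffices to show $\sigma_{\agg^{-1}(\cA)}=\ucl(\beta)$. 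One inequality is immediate: the proof of Proposition~\ref{prop:properties_alpha}(i) already yields $\beta\leq\sigma_{\agg^{-1}(\cA)}$, and since a support function is automatically concave and $\sigma(\cX',\cX)$-upper semicontinuous, passing to the upper semicontinuous hull gives $\ucl(\beta)\leq\sigma_{\agg^{-1}(\cA)}$.

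For the reverse inequality I would set $\gamma:=\ucl(\beta)$. The usc-hull operation preserves concavity, positive homogeneity and the bound $\gamma\leq 0$; since $\gamma(0)\geq\beta(0)=0$, the function $\gamma$ is proper, concave, $\sigma(\cX',\cX)$-upper semicontinuous, and positively homogeneous. The concave Fenchel--Moreau Theorem then yields
\[
\gamma(Z)=\inf_{X\in\cX}\{\E[\langle X,Z\rangle]-\gamma^\bullet(X)\}.
\]
Positive homogeneity of $\gamma$ pins down $\gamma^\bullet$ to take only the values $0$ and $-\infty$: if $\E[\langle X,Z_0\rangle]<\gamma(Z_0)$ for some $Z_0$ (which forces $\gamma(Z_0)$ to be finite), substituting $\lambda Z_0$ and sending $\lambda\to\infty$ drives the concave conjugate to $-\infty$; in the opposite case $\E[\langle X,Z\rangle]\geq\gamma(Z)$ for every $Z\in\cX'$, and evaluating at $Z=0$ forces $\gamma^\bullet(X)=0$. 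Setting $C:=\{X\in\cX\,;\ \E[\langle X,Z\rangle]\geq\gamma(Z),\ \forall Z\in\cX'\}$, the Fenchel--Moreau identity collapses to $\gamma=\sigma_C$.

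To conclude I would identify $C$ with $\agg^{-1}(\cA)$. The inclusion $C\subseteq\agg^{-1}(\cA)$ is immediate from $\gamma\geq\beta$ combined with Theorem~\ref{prop:Lambda-1(A)_dualrepr}. For the reverse, fix $X\in\agg^{-1}(\cA)$ and $Z_0\in\cX'$, and note that for every open neighborhood $U$ of $Z_0$ one has $\sup_{Z\in U}\beta(Z)\leq\sup_{Z\in U}\E[\langle X,Z\rangle]$, so $\gamma(Z_0)=\ucl(\beta)(Z_0)\leq\ucl(\E[\langle X,\cdot\rangle])(Z_0)=\E[\langle X,Z_0\rangle]$ by continuity of the linear functional. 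Hence $\sigma_{\agg^{-1}(\cA)}=\sigma_C=\gamma=\ucl(\beta)$, establishing both identities. The most delicate ingredient I anticipate is the positive-homogeneity argument that determines $\gamma^\bullet$, together with the verification of properness of $\gamma$ needed to invoke the concave Fenchel--Moreau Theorem; everything else reduces to routine manipulations with support functions and upper semicontinuous hulls.
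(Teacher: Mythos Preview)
Your proposal is correct and follows essentially the same route as the paper: both arguments establish $\ucl(\beta)\leq\sigma_{\agg^{-1}(\cA)}$, verify that $\ucl(\beta)$ is proper, concave, positively homogeneous and $\sigma(\cX',\cX)$-upper semicontinuous, observe that the intersection representation of $\agg^{-1}(\cA)$ from Theorem~\ref{prop:Lambda-1(A)_dualrepr} persists with $\beta$ replaced by $\ucl(\beta)$, and then invoke uniqueness of the support function. The only difference is cosmetic: the paper cites this uniqueness directly (Theorem~7.51 in Aliprantis--Border), whereas you reprove it by computing $\gamma^\bullet$ via positive homogeneity and applying the concave Fenchel--Moreau Theorem.
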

\begin{proof}
Let $\cK\subset\cB(\cA)$ be a convex cone as in Remark~\ref{rem: alpha K}. In the proof of Theorem~\ref{prop:Lambda-1(A)_dualrepr} we established that
\begin{equation}
\label{eq: auxiliary representation with alpha K}
\agg^{-1}(\cA) = \bigcap_{Z\in\cX'}\{X\in\cX \,; \ \E[\langle X,Z\rangle]\geq\alpha_\cK(Z)\}.
\end{equation}
This implies that $\alpha_\cK\leq\sigma_{\agg^{-1}(\cA)}$. It follows from the $\sigma(\cX',\cX)$-upper semicontinuity of $\sigma_{\agg^{-1}(\cA)}$ that we also have $\ucl(\alpha_\cK)\leq\sigma_{\agg^{-1}(\cA)}$. In particular, $\ucl(\alpha_\cK)$ never takes the value $\infty$. Moreover, note that $\ucl(\alpha_\cK)(0)\geq\alpha_\cK(0)=0$. As a result, Proposition 2.2.7 in Z\u{a}linescu~\cite{Zalinescu2002} tells us that $\ucl(\alpha_\cK)$ inherits concavity and positive homogeneity from $\alpha_\cK$. Note that $\alpha_\cK$ can be replaced by $\ucl(\alpha_\cK)$ in~\eqref{eq: auxiliary representation with alpha K}. Since the only $\sigma(\cX',\cX)$-upper semicontinuous map $\sigma:\cX'\to[-\infty,\infty)$ that is concave and positively homogeneous and satisfies
\[
\agg^{-1}(\cA) = \bigcap_{Z\in\cX'}\{X\in\cX \,; \ \E[\langle X,Z\rangle]\geq\sigma(Z)\}
\]
is precisely the support function of $\agg^{-1}(\cA)$, see e.g.\ Theorem 7.51 in Aliprantis and Border~\cite{AliprantisBorder2006}, we conclude that $\ucl(\alpha_\cK)=\sigma_{\agg^{-1}(\cA)}$ must hold. The desired assertions follow by taking $\cK=\cB(\cA)$ and $\cK=\cB(\cA)\cap(\cE'_{++}\cup\{0\})$.
\end{proof}

\smallskip

It is natural to ask whether taking the upper semicontinuous hull in Theorem~\ref{theo: representation support counterimage} is redundant in the sense that $\alpha$ and/or $\alpha^+$ are upper semicontinuous in the first place and, hence, coincide with the support function $\sigma_{\agg^{-1}(\cA)}$. As illustrated by the following example, the answer is negative in general.

\begin{example}
\label{ex:alpha_not_usc}
Let $(\Omega,\cF,\probp)$ be nonatomic and consider the pairs given by $(\cX,\cX')=(L^\infty(\R^d),L^1(\R^d))$ and $(\cE,\cE')=(L^\infty(\R),L^1(\R))$. Fix $\lambda\in(0,1)$ and for every $U\in L^0(\R)$ define the {\em Value at Risk} and {\em Expected Shortfall} of $U$ at level $\lambda$ by
\[
\VaR_\lambda(U) := \inf\{m\in\R \,; \ \probp(U+m<0)\leq\lambda\}, \ \ \ \ \ES_\lambda(U) := \frac{1}{\lambda}\int_0^\lambda\VaR_\mu(U)\,d\mu.
\]
Define $\agg:\cX\to\cE$ and $\cA\subset\cE$ by setting
\[
S(X) = \sum_{i=1}^d\min(X_i,0), \ \ \ \ \cA = \{U\in\cE \,; \ \ES_\lambda(U)\leq0\}.
\]
It is immediate to see that $\agg^{-1}(\cA)=\cX_+$, so that
\[
\sigma_{\agg^{-1}(\cA)} = -\ind_{\cX'_+} = -\ind_{L^1_+(\R^d)}.
\]
To determine $\alpha$, take any $Z\in\cX'_+$ and recall from Theorem 4.52 in F\"{o}llmer and Schied~\cite{FoellmerSchied2016} that
\[
\sigma_\cA=-\ind_{\cB(\cA)}, \ \ \ \cB(\cA)=\left\{W\in\cE'_+ \,; \ W\leq\frac{1}{\lambda}\E[W]\right\}.
\]
As a result, we infer that
\[
\alpha(Z)
=
\sup_{W\in\cE'_+,\,W\leq\frac{\E[W]}{\lambda}}\inf_{X\in\cX}\E\left[\sum_{i=1}^d (X_iZ_i-\min(X_i,0)W)\right]
=
\sup_{W\in\cE'_+,\,W\leq\frac{\E[W]}{\lambda}}\inf_{X\in\cX_+}\E\left[\sum_{i=1}^d X_i(W-Z_i)\right].
\]
Now, if $Z_j$ is not bounded for some $j\in\{1,\dots,d\}$, then $\probp(W-Z_j<0)>0$ for every $W\in\cB(\cA)$ and
\[
\inf_{X\in\cX_+}\E\left[\sum_{i=1}^d X_i(W-Z_i)\right] \leq \inf_{n\in\N}\E[n\mathbbm{1}_{\{W-Z_j<0\}}(W-Z_j)] = -\infty.
\]
In this case, we have $\alpha(Z)=-\infty$. Otherwise, if $Z$ is bounded, set $W=\max_{i\in\{1,\dots,d\}}\|Z_i\|_\infty\in\cB(\cA)$ and observe that
\[
0 \geq \alpha(Z) \geq \inf_{X\in\cX_+}\E\left[\sum_{i=1}^d X_i(W-Z_i)\right] = 0.
\]
In conclusion, we have
\[
\alpha = -\ind_{\cX_+} = -\ind_{L^\infty_+(\R^d)}.
\]
Since $L^\infty(\R)\neq L^1(\R)$ when the underlying probability space is nonatomic, we conclude that $\sigma_{\agg^{-1}(\cA)}\neq\alpha$. The same conclusion holds for $\alpha^+$ as well (note that $\cB(\cA)\cap\cE'_{++}\neq\emptyset$). This follows because, by Proposition~\ref{prop:properties_alpha}, we always have $\alpha^+\leq\alpha$ . Alternatively, we can repeat the above argument and find that $\alpha^+=\alpha$ in our situation.
\end{example}

\smallskip

\begin{remark}
By combining the dual representation in Theorem~\ref{theo: dual representation} and the representation of $\sigma_{\agg^{-1}(\cA)}$ obtained in Theorem~\ref{theo: representation support counterimage}, we see that
\begin{equation}
\label{eq: dual representation usc}
\rho(X) = \sup_{Z\in\cC}\{\ucl(\alpha)(Z)-\E[\langle X,Z\rangle]\} = \sup_{Z\in\cX'}\{\ucl(\alpha)(Z)-\ind_{\cC}(Z)-\E[\langle X,Z\rangle]\}
\end{equation}
for every $X\in\cX$. If the equality $\sigma_{\agg^{-1}(\cA)}=\alpha$ holds, then we can drop the upper-semicontinuous hull in the representation~\eqref{eq: dual representation usc} and obtain
\begin{equation}
\label{eq: dual representation usc simplified}
\rho(X) = \sup_{Z\in\cC}\{\alpha(Z)-\E[\langle X,Z\rangle]\} = \sup_{Z\in\cX'}\{\alpha(Z)-\ind_{\cC}(Z)-\E[\langle X,Z\rangle]\}
\end{equation}
for every $X\in\cX$. One may wonder whether the ``simplified'' representation~\eqref{eq: dual representation usc simplified} holds even if the equality $\sigma_{\agg^{-1}(\cA)}=\alpha$ does not hold. Note that $\ucl(\alpha)-\ind_{\cC}$ is concave and $\sigma(\cX',\cX)$-upper semicontinuous and that $\alpha-\ind_{\cC}$ is concave. As a result, the ``simplified'' representation holds if and only if
\[
\ucl(\alpha-\ind_{\cC}) = \ucl(\alpha)-\ind_{\cC}.
\]
The same holds with $\alpha^+$ instead of $\alpha$ (provided that $\cB(\cA)\cap\cE'_{++}\neq\emptyset$). It is unclear whether this equality holds without additional assumptions on $\agg$ and $\cA$ because, in general, it is not possible to take an indicator function out of an upper-semicontinuous hull. For example, consider the simple situation where $\Omega=\{\omega\}$ and $d=2$. In this case, we have the identification $(\cX,\cX')=(\R^2,\R^2)$. Consider the concave and positively homogeneous function $f$ and the convex closed set $\cC$ defined by
\[
f=-\ind_{\cD}, \ \ \ \ \cD=\{z\in\R^2 \,; \ 0\leq z_1<z_2\}\cup\{(0,0)\}, \ \ \ \ \cC=\{z\in\R^2 \,; \ z_1=z_2=1\}=\{(1,1)\}.
\]
Then, it is easy to see that
\[
\ucl(f-\ind_{\cC}) = -\ind_{\emptyset} \neq -\ind_{\{(1,1)\}} = \ucl(f)-\ind_{\cC}.
\]
\end{remark}


\subsection{Conditions for the identity $\sigma_{\agg^{-1}(\cA)}=\alpha$ to hold}

We know from Theorem~\ref{theo: representation support counterimage} that the support function of the systemic acceptance set $\agg^{-1}(\cA)$ always coincides with the upper semicontinuous hull of the penalty function $\alpha$. However, as illustrated by Example~\ref{ex:alpha_not_usc}, there are simple situations where the map $\alpha$ fails to be upper semicontinuous and, hence, the equality $\sigma_{\agg^{-1}(\cA)}=\alpha$ does not hold. In this subsection we establish a variety of sufficient conditions for this equality to hold. Clearly, one could also ask when $\sigma_{\agg^{-1}(\cA)}=\alpha^+$, which would automatically imply the statement for $\alpha$. While it is easy to find examples where this holds, none of the conditions in this section apply to $\alpha^+$.

\smallskip

As a first step, we highlight that the desired equality can be equivalently expressed in terms of a suitable minimax problem.

\begin{lemma}
\label{lem: minimax}
Let $Z\in\cX'$ and define a map $K:\cX\times\cE'\to[-\infty,\infty]$ by setting
\[
K_Z(X,W) := \sigma_\cA(W)+\E[\langle X,Z\rangle]-\E[\agg(X)W].
\]
The following statements are equivalent:
\begin{enumerate}[(a)]
  \item $\sigma_{\agg^{-1}(\cA)}=\alpha$.
  \item $\alpha$ is $\sigma(\cX',\cX)$-upper semicontinuous.
  \item For every $Z\in\cX'$ we have
\[
\inf_{X\in\cX}\sup_{W\in\cE'}K_Z(X,W) = \sup_{W\in\cE'}\inf_{X\in\cX}K_Z(X,W).
\]
\end{enumerate}
\end{lemma}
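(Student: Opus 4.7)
The plan is to prove (a) $\Leftrightarrow$ (b) using the already-established representation $\sigma_{\agg^{-1}(\cA)} = \ucl(\alpha)$, and to prove (a) $\Leftrightarrow$ (c) by computing both sides of the minimax problem explicitly in terms of the two objects $\sigma_{\agg^{-1}(\cA)}(Z)$ and $\alpha(Z)$.

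For (a) $\Leftrightarrow$ (b): Theorem~\ref{theo: representation support counterimage} gives $\sigma_{\agg^{-1}(\cA)} = \ucl(\alpha)$. Since any support function is $\sigma(\cX',\cX)$-upper semicontinuous, the identity $\sigma_{\agg^{-1}(\cA)} = \alpha$ holds if and only if $\alpha$ already equals its upper semicontinuous hull, i.e.\ if and only if $\alpha$ is $\sigma(\cX',\cX)$-upper semicontinuous. This step is essentially immediate.

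For (a) $\Leftrightarrow$ (c): The key computation is to recognize the two iterated extrema as $\sigma_{\agg^{-1}(\cA)}(Z)$ and $\alpha(Z)$ respectively. Fix $Z\in\cX'$. For the inner supremum in the $\inf\sup$ order, I would apply the concave Fenchel-Moreau theorem to the indicator $\ind_\cA$ (using that $\cA$ is convex and $\sigma(\cE,\cE')$-closed by (A4)--(A5)) to obtain the identity
\[
\ind_\cA(U) = \sup_{W\in\cE'}\{\sigma_\cA(W)-\E[UW]\}
\]
for every $U\in\cE$. Applying this to $U=\agg(X)$ yields
\[
\sup_{W\in\cE'}K_Z(X,W) = \E[\langle X,Z\rangle]+\ind_{\agg^{-1}(\cA)}(X),
\]
and taking the infimum over $X\in\cX$ gives exactly $\sigma_{\agg^{-1}(\cA)}(Z)$. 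For the $\sup\inf$ order, I would observe that $\inf_X K_Z(X,W) = -\infty$ whenever $W\notin\cB(\cA)$, so the outer supremum reduces to $W\in\cB(\cA)$, where it matches the definition of $\alpha(Z)$ verbatim. Thus (c) is the pointwise statement $\sigma_{\agg^{-1}(\cA)}(Z)=\alpha(Z)$ for all $Z\in\cX'$, i.e.\ (a).

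There is no genuine obstacle here: the proof is a matching exercise. The only point requiring a little care is the application of Fenchel--Moreau to $\ind_\cA$ in the concave form, which must be set up so that the dual variables enter through $\sigma_\cA(W)$ rather than through $\ind_\cA^\ast(W)$; this is where properties (A4)--(A5) of $\cA$ are genuinely used. Once the two computations are in place, the chain (a) $\Leftrightarrow$ (c) follows by definition, and combined with (a) $\Leftrightarrow$ (b) it closes the loop.
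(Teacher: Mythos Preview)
Your proposal is correct and follows essentially the same approach as the paper. For (a)\,$\Leftrightarrow$\,(b) both you and the paper invoke Theorem~\ref{theo: representation support counterimage}; for (a)\,$\Leftrightarrow$\,(c) both recognize the $\sup\inf$ side as $\alpha(Z)$ by definition and identify the $\inf\sup$ side with $\sigma_{\agg^{-1}(\cA)}(Z)$ via convex duality for $\cA$. The only cosmetic difference is that the paper packages the latter computation through an auxiliary perturbation function $F_Z(X,U)=\E[\langle X,Z\rangle]+\ind_{\cA-\agg(X)}(U)$ and applies Fenchel--Moreau to $F_Z(X,\cdot)$ at $U=0$, whereas you apply the dual characterization $\ind_\cA(U)=\sup_{W\in\cE'}\{\sigma_\cA(W)-\E[UW]\}$ directly at $U=\agg(X)$; these are the same computation, and your route is arguably more transparent. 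One small wording issue: $\ind_\cA$ is convex, not concave, so what you are really using is the (convex) Fenchel--Moreau representation of $\ind_\cA$, equivalently the external characterization~\eqref{eq: external characterization}.
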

\begin{proof}
The equivalence between {\em (a)} and {\em (b)} is clear by Theorem~\ref{theo: representation support counterimage}. To establish equivalence with {\em (c)}, fix $Z\in\cX'$ and note that
\[
\alpha(Z)=\sup_{W\in\cE'}\inf_{X\in\cX}K_Z(X,W)
\]
by definition of $\alpha$. It remains to show that
\[
\sigma_{\agg^{-1}(\cA)}(Z)=\inf_{X\in\cX}\sup_{W\in\cE'}K_Z(X,W).
\]
Consider the auxiliary functions $f_Z:\cX\to(-\infty,\infty]$ defined by
\[
f_Z(X) := \E[\langle X,Z\rangle]+\ind_{\agg^{-1}(\cA)}(X)
\]
and $F_Z:\cX\times\cE\to(-\infty,\infty]$ defined by
\[
F_Z(X,U) := \E[\langle X,Z\rangle]+\ind_{\cA-\agg(X)}(U).
\]
Note that for every $X\in\cX$ the map $F_Z(X,\cdot)$ is convex and lower semicontinuous and satisfies
\begin{eqnarray*}
(F_Z(X,\cdot))^\ast(W)
&=&
\sup_{U\in\cE}\{\E[UW]-F_Z(X,U)\} \\
&=&
\sup_{U\in\cE,\,U+\agg(X)\in\cA}\{\E[UW]-\E[\langle X,Z\rangle]\} \\
&=&
\sup_{V\in\cE}\{\E[(V-\agg(X))W]-\E[\langle X,Z\rangle]\} \\
&=&
-\sigma_\cA(-W)-\E[\langle X,Z\rangle]+\E[\agg(X)(-W)] \\
&=&
-K_Z(X,-W)
\end{eqnarray*}
for every $W\in\cE'$. As $F_Z(X,0)=f_Z(X)$ for every $X\in\cX$, we can apply Fenchel-Moreau to get
\[
\sigma_{\agg^{-1}(\cA)}(Z)
=
\inf_{X\in\cX}f_Z(X)
=
\inf_{X\in\cX}\sup_{W\in\cE'}\{\E[0W]-(F_Z(X,\cdot))^\ast(W)\}
=
\inf_{X\in\cX}\sup_{W\in\cE'}K_Z(X,W).
\]
This concludes the proof.
\end{proof}

\smallskip

The preceding lemma shows that, for every $Z\in\cX'$, the identity $\sigma_{\agg^{-1}(\cA)}(Z)=\alpha(Z)$ is equivalent to the existence of a saddle value for the function $K_Z$. Unfortunately, the standard minimax theorems, see e.g.\ Fan~\cite{Fan1953}, rely on compactness assumptions that do not hold in our setting. The remainder of this subsection is devoted to showing a number of situations where the identity holds or, equivalently, the above minimax problem has a solution.


\subsubsection*{The linear case}

We start by proving the desired equality in the simple case where the impact map is given by the aggregated, or consolidated, capital position of all the $d$ financial institutions. In this case, there is no restriction on the acceptance set.

\begin{proposition}
Assume that $\cX_i=\cE$ for every $i\in\{1,\dots,d\}$. If $\agg(X)=\sum_{i=1}^dX_i$ for every $X\in\cX$, then $\alpha=\sigma_{\agg^{-1}(\cA)}$.
\end{proposition}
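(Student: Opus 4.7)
The plan is to explicitly compute both $\alpha(Z)$ and $\sigma_{\agg^{-1}(\cA)}(Z)$ for each $Z\in\cX'$ and verify that they coincide. The crux is that since $\agg$ is linear, its kernel $\cN:=\{Y\in\cX \,; \ \sum_i Y_i=0\}$ is a linear subspace of $\cX$, and one has the decomposition $\agg^{-1}(\cA)=\{(U,0,\dots,0) \,; \ U\in\cA\}+\cN$. This, together with linearity, collapses both the inner infimum defining $\alpha$ and the infimum defining $\sigma_{\agg^{-1}(\cA)}$ to expressions involving only $\sigma_\cA$.

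For $\alpha(Z)$, I would rewrite the inner infimum in Definition~\ref{def: penalties} as
\[
\inf_{X\in\cX}\E\Big[\sum_{i=1}^d X_i(Z_i-W)\Big].
\]
Because $\cX=\cE^d$ is a vector space containing $L^\infty(\R^d)$, a standard scaling argument (for some index $j$ with $Z_j\neq W$ on a set of positive probability, set $X_j^t=-t\,\mathrm{sign}(Z_j-W)\one_{\{Z_j\neq W\}}$ with the other coordinates zero and let $t\to\infty$) shows this infimum equals $0$ if $Z_1=\cdots=Z_d=W$ almost surely and $-\infty$ otherwise. Therefore the supremum over $W\in\cB(\cA)$ collapses: $\alpha(Z)=\sigma_\cA(Z_1)$ whenever $Z_1=\cdots=Z_d$ (with value $-\infty$ if this common element fails to lie in $\cB(\cA)$), and $\alpha(Z)=-\infty$ otherwise.

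For $\sigma_{\agg^{-1}(\cA)}(Z)$, I would split the same two cases. If $Z_1=\cdots=Z_d=:W$, then $\E[\langle X,Z\rangle]=\E[\agg(X)W]$, so
\[
\sigma_{\agg^{-1}(\cA)}(Z)=\inf_{X\in\agg^{-1}(\cA)}\E[\agg(X)W]=\inf_{U\in\cA}\E[UW]=\sigma_\cA(W),
\]
using that $\agg$ maps $\agg^{-1}(\cA)$ onto $\cA$ because $(U,0,\dots,0)\in\agg^{-1}(\cA)$ for every $U\in\cA$. Otherwise, pick indices $i\neq j$ with $Z_i\neq Z_j$ on a set of positive probability and a bounded $V$ with $\E[V(Z_i-Z_j)]<0$; the perturbations $X^t$ given by $X_i^t=tV$, $X_j^t=-tV$, and all other coordinates zero lie in $\cN\subset\agg^{-1}(\cA)$, and $\E[\langle X^t,Z\rangle]\to-\infty$ as $t\to\infty$, so $\sigma_{\agg^{-1}(\cA)}(Z)=-\infty$.

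A case-by-case comparison now yields $\alpha=\sigma_{\agg^{-1}(\cA)}$. I expect no real obstacle: the linearity of $\agg$ trivialises the minimax obstruction highlighted in Lemma~\ref{lem: minimax}, since the inner infimum in the definition of $\alpha$ either vanishes (returning exactly $\sigma_\cA(W)$) or diverges to $-\infty$, leaving no semicontinuity gap to bridge via an upper-semicontinuous hull.
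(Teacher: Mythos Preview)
Your proposal is correct and follows essentially the same approach as the paper: both compute $\alpha(Z)$ and $\sigma_{\agg^{-1}(\cA)}(Z)$ explicitly via the same case split (all $Z_i$ equal versus not), using the same scaling/perturbation arguments to force $-\infty$ in the unequal case and reducing to $\sigma_\cA(Z_1)$ in the equal case. Your decomposition $\agg^{-1}(\cA)=\{(U,0,\dots,0):U\in\cA\}+\cN$ is a clean way to phrase the surjectivity of $\agg$ onto $\cA$ and the inclusion $\cN\subset\agg^{-1}(\cA)$ (the latter relying on $0\in\cA$), but it does not represent a different route.
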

\begin{proof}
First of all, we show that for every $Z\in\cX'_+$ we have
\[
\sigma_{\agg^{-1}(\cA)}(Z) =
\begin{cases}
\sigma_\cA(Z_1) & \mbox{if \ $Z_1=\cdots=Z_d$},\\
-\infty & \mbox{otherwise}.
\end{cases}
\]
To see this, assume first that $\probp(Z_i>Z_j)>0$ for some distinct $i,j\in\{1,\dots,d\}$ and for every $n\in\N$ define a random vector $X^n\in\cX$ by
\[
X^n_k =
\begin{cases}
-n\mathbbm{1}_{\{Z_i>Z_j\}} & \mbox{if \ $k=i$},\\
n\mathbbm{1}_{\{Z_i>Z_j\}} & \mbox{if \ $k=j$},\\
0 & \mbox{otherwise}.
\end{cases}
\]
Since $\agg(X^n)=0\in\cA$ for every $n\in\N$, we clearly have
\[
\sigma_{\agg^{-1}(\cA)}(Z) \leq \inf_{n\in\N}\E[\langle X^n,Z\rangle] = \inf_{n\in\N}n\E[\mathbbm{1}_{\{Z_i>Z_j\}}(Z_j-Z_i)] = -\infty.
\]
Next, assume that $Z_1=\cdots=Z_d$ and note that, in this case, we have
\[
\sigma_{\agg^{-1}(\cA)}(Z) = \inf_{X\in\agg^{-1}(\cA)}\E[\agg(X)Z_1] = \sigma_\cA(Z_1).
\]
This proves the above claim. Now, for every $Z\in\cX'_+$ note that
\[
\alpha(Z)
=
\sup_{W\in\cB(\cA)}\left\{\sigma_\cA(W)+\inf_{X\in\cX}\E\left[\sum_{i=1}^dX_i(Z_i-W)\right]\right\}
=
\begin{cases}
\sigma_\cA(Z_1) & \mbox{if \ $Z_1=\cdots=Z_d\in\cB(\cA)$},\\
-\infty & \mbox{otherwise}.
\end{cases}
\]
This yields the desired assertion.
\end{proof}


\subsubsection*{The conic case}

Next, we deal with the case where $\agg$ is positively homogeneous and $\cA$ is a cone. In this case, we first show that $\alpha$ is given by a suitable indicator function and provide a general sufficient condition for the equality between $\sigma_{\agg^{-1}(\cA)}$ and $\alpha$. At a later stage, we apply this general condition to a variety of concrete situations.

\begin{lemma}
\label{lem: conic case}
Assume that $\agg$ is positively homogeneous and $\cA$ is a cone. Then, we have $\alpha=-\ind_\cD$ for
\[
\cD := \{Z\in\cX'_+ \,; \ \exists W\in\cB(\cA) \,:\, \E[\langle X,Z\rangle]\geq\E[\agg(X)W], \ \forall X\in\cX\}.
\]
\end{lemma}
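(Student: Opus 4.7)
The plan is to exploit the positive homogeneity of the two ingredients to collapse every summand in the definition of $\alpha$ to a value in $\{0,-\infty\}$, after which matching the set on which $\alpha(Z)=0$ with $\cD$ is immediate.

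First, I would observe that, because $\cA$ is a cone containing $0$ (by (A2)), the support function $\sigma_\cA$ takes only the values $0$ and $-\infty$. Indeed, $\sigma_\cA(W)\leq 0$ since $0\in\cA$, while for any $U\in\cA$ the relation $\lambda U\in\cA$ for all $\lambda>0$ together with finiteness of $\sigma_\cA(W)$ forces $\E[UW]\geq 0$; hence $\sigma_\cA(W)=0$ for every $W\in\cB(\cA)$. Consequently,
\[
\alpha(Z) = \sup_{W\in\cB(\cA)}\inf_{X\in\cX}\{\E[\langle X,Z\rangle]-\E[\agg(X)W]\}.
\]

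Next, I would exploit positive homogeneity of $\agg$: for each $W\in\cB(\cA)$ the map $X\mapsto\E[\agg(X)W]$ is positively homogeneous, so its concave conjugate at $Z$ is either $0$ (when $\E[\langle X,Z\rangle]\geq\E[\agg(X)W]$ for every $X\in\cX$) or $-\infty$ otherwise. (One writes $\E[\langle\lambda X,Z\rangle]-\E[\agg(\lambda X)W]=\lambda(\E[\langle X,Z\rangle]-\E[\agg(X)W])$ and lets $\lambda\to\infty$ or $\lambda\to 0^+$; note that $X=0$ yields the value $0$, so the infimum is $\leq 0$.) Thus every term inside the supremum lies in $\{0,-\infty\}$, which forces $\alpha(Z)\in\{0,-\infty\}$.

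It then remains to identify the $Z$'s at which $\alpha(Z)=0$. By the dichotomy just established, $\alpha(Z)=0$ if and only if there exists some $W\in\cB(\cA)$ with $\E[\langle X,Z\rangle]\geq\E[\agg(X)W]$ for every $X\in\cX$. Combined with the fact that $\dom(\alpha)\subset\cX'_+$ (Proposition~\ref{prop:properties_alpha}(iv)), this is precisely the condition $Z\in\cD$. Hence $\alpha=-\ind_\cD$, as required. There is no real obstacle here; the only subtlety to articulate cleanly is that the positive-cone restriction $Z\in\cX'_+$ built into the definition of $\cD$ is automatic whenever $\alpha(Z)$ is finite, so the two characterizations match exactly.
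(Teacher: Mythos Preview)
Your proposal is correct and follows essentially the same approach as the paper: reduce $\sigma_\cA$ to $0$ on $\cB(\cA)$ using conicity, then use the positive homogeneity of $\agg$ to show that each inner infimum equals $0$ or $-\infty$, and finally identify the $0$-set with $\cD$. The paper phrases the dichotomy slightly differently (it splits into $Z\in\cD$ and $Z\notin\cD$ and scales by integers $n$ rather than a continuous $\lambda$, handling the $\cX'_+$ restriction directly via the scaling argument instead of citing Proposition~\ref{prop:properties_alpha}), but the substance is the same.
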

\begin{proof}
Clearly, for every $Z\in\cD$ there exists $W_Z\in\cB(\cA)$ such that
\[
\inf_{X\in\cX}\{\E[\langle X,Z\rangle]-\E[\agg(X)W_Z]\} = \E[\langle 0,Z\rangle]-\E[\agg(0)W_Z] = 0.
\]
As a result, for every $Z\in\cD$ we have $0 \geq \alpha(Z) \geq \sigma_\cA(W_Z)+0 = 0$, showing that $\alpha(Z)=0$. Now, fix $Z\in\cX'\setminus\cD$ and observe that, for every $W\in\cB(\cA)$, we find $X_W\in\cX$ such that $\E[\langle X_W,Z\rangle]<\E[\agg(X_W)W]$. Then,
\begin{eqnarray*}
\inf_{X\in\cX}\{\E[\langle X,Z\rangle]-\E[\agg(X)W]\}
&\leq&
\inf_{n\in\N}\{\E[\langle nX_W,Z\rangle]-\E[\agg(nX_W)W]\} \\
&=&
\inf_{n\in\N}\{n(\E[\langle X_W,Z\rangle]-\E[\agg(X_W)W])\} \\
&=&
-\infty.
\end{eqnarray*}
This implies that $\alpha(Z)=-\infty$ and concludes the proof.
\end{proof}

\smallskip

\begin{lemma}
\label{lemma:conic_Lbd(X*)>0}
Assume that $\agg$ is positively homogeneous and $\cA$ is a cone. Moreover, assume that $\agg(e)\in\R_+\setminus\{0\}$ and that $\cB(\cA)\cap\{W\in L^1(\R) \,; \ \|W\|_1\leq1\}$ is $\sigma(\cE',\cE)$-compact. Then, $\sigma_{\agg^{-1}(\cA)}=\alpha$.
\end{lemma}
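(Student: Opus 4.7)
The plan is to show that $\alpha$ itself is $\sigma(\cX',\cX)$-upper semicontinuous; combined with Theorem~\ref{theo: representation support counterimage}, which identifies $\sigma_{\agg^{-1}(\cA)}$ with the upper semicontinuous hull of $\alpha$, this immediately yields the desired equality. By Lemma~\ref{lem: conic case}, $\alpha$ takes only the values $0$ and $-\infty$, so I only need to verify $\limsup_{Z\to Z_0}\alpha(Z)\leq\alpha(Z_0)$ at every $Z_0\in\cX'$.

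The first step is truncation. For every $c\geq0$ set $\cK_c:=\cB(\cA)\cap\{W \,; \ \|W\|_1\leq c\}$; since $\cB(\cA)$ is a cone, $\cK_c=c\cdot\cK_1$ is $\sigma(\cE',\cE)$-compact by the standing hypothesis. Introduce the truncated map
\[
\alpha_c(Z):=\sup_{W\in\cK_c}g(Z,W),\qquad g(Z,W):=\inf_{X\in\cX}\{\E[\langle X,Z\rangle]-\E[\agg(X)W]\}.
\]
For each fixed $X$ the integrand is jointly $\sigma(\cX',\cX)\times\sigma(\cE',\cE)$-continuous (because $X\in\cX$ and $\agg(X)\in\cE$), so $g$ is jointly upper semicontinuous. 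A standard Berge-type argument then shows that $\alpha_c$ is $\sigma(\cX',\cX)$-upper semicontinuous for every $c\geq0$.

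The second step is localization, powered by $a:=\agg(e)\in(0,\infty)$. For $Z\in\cX'_+$ put $c(Z):=\tfrac{1}{a}\sum_{i=1}^d\E[Z_i]$. I would show that $\alpha(Z)=\alpha_{c(Z)}(Z)$: if $\alpha(Z)=0$, Lemma~\ref{lem: conic case} furnishes $W\in\cB(\cA)$ with $\E[\langle X,Z\rangle]\geq\E[\agg(X)W]$ for every $X\in\cX$; testing with $X=e$ gives $\|W\|_1=\E[W]\leq c(Z)$, so $W\in\cK_{c(Z)}$ and $\alpha_{c(Z)}(Z)=0$; the case $\alpha(Z)=-\infty$ is trivial since $\alpha_{c(Z)}\leq\alpha$. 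Given $Z_0\in\cX'_+$, choose $c^\ast>c(Z_0)$; since $Z\mapsto c(Z)$ is $\sigma(\cX',\cX)$-continuous (because $1\in\cX_i$ for each $i$), there is a neighborhood $U$ of $Z_0$ on which $c(Z)\leq c^\ast$. Monotonicity of $c\mapsto\alpha_c$ combined with the previous identification then gives $\alpha=\alpha_{c^\ast}$ on $U\cap\cX'_+$, while on $U\setminus\cX'_+$ both maps equal $-\infty$ by Proposition~\ref{prop:properties_alpha}(iv). Upper semicontinuity of $\alpha_{c^\ast}$ delivers
\[
\limsup_{Z\to Z_0}\alpha(Z)=\limsup_{Z\to Z_0}\alpha_{c^\ast}(Z)\leq\alpha_{c^\ast}(Z_0)\leq\alpha(Z_0).
\]
For $Z_0\notin\cX'_+$ the set $\cX'\setminus\cX'_+$ is $\sigma(\cX',\cX)$-open and $\alpha\equiv-\infty$ there, so upper semicontinuity is automatic.

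The main obstacle is that there is no $Z$-independent compact set over which $\alpha$ can be written as a supremum: the only a priori bound on admissible $W$ is $\|W\|_1\leq c(Z)$, which scales with $Z$, so Berge's theorem cannot be applied globally. The compact-base hypothesis on $\cB(\cA)$ together with the bound furnished by $\agg(e)>0$ is precisely what allows the localization to a neighborhood $U$ and thereby the upper semicontinuity of $\alpha$.
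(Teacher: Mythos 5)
Your proof is correct and rests on the same two ingredients as the paper's own argument: the $L^1$-bound on the dual witness $W$ obtained by testing the supporting inequality with $X=e$ (which is where $\agg(e)>0$ enters), and the compactness hypothesis to pass to a limiting witness in $\cB(\cA)$. The paper packages this as a direct net argument showing that the set $\cD$ from Lemma~\ref{lem: conic case} is $\sigma(\cX',\cX)$-closed, whereas you reorganize it as upper semicontinuity of truncated suprema over compact slices plus a localization step; the two are mathematically the same mechanism.
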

\begin{proof}
Recall that $\sigma_{\agg^{-1}(\cA)}=\alpha$ holds if and only if $\alpha$ is $\sigma(\cX',\cX)$-upper semicontinuous. Hence, by Lemma~\ref{lem: conic case}, it suffices to show that $\cD$ is $\sigma(\cX',\cX)$-closed. To this effect, take a net $(Z_\gamma)\subset\cD$ converging to some $Z\in\cX'$ in the topology $\sigma(\cX',\cX)$. Note that $Z\in\cX'_+$. By definition of $\cD$, for each $\gamma$ we find $W_\gamma\in\cB(\cA)$ such that
\[
\E[\langle X,Z_\gamma\rangle] \geq \E[\agg(X)W_\gamma]
\]
for every $X\in\cX$. To establish the desired closedness, it is enough to show that $(W_\gamma)$ admits a subnet that converges to some element of $\cB(\cA)$ in the topology $\sigma(\cE',\cE)$. Note that $\cB(\cA)=\{\sigma_\cA\geq0\}$ by conicity of $\cA$, showing that $\cB(\cA)$ is $\sigma(\cE',\cE)$-closed. Since $\cB(\cA)\subset\cE'_+$, we see that
\[
\E[\langle e,Z_\gamma\rangle] \geq \E[\agg(e)W_\gamma] \geq 0,
\]
or equivalently
\[
\frac{\E[\langle e,Z_\gamma\rangle]}{\agg(e)} \geq \E[W_\gamma] \geq 0,
\]
for every $\gamma$. Since $\E[\langle e,Z_\gamma\rangle]\to\E[\langle e,Z\rangle]$, the net $(W_\gamma)$ is bounded in $L^1(\R)$ and, hence, by using the compactness assumption, it admits a convergent subnet in the topology $\sigma(\cE',\cE)$. In view of the $\sigma(\cE',\cE)$-closedness of $\cB(\cA)$, we infer that the limit belongs to $\cB(\cA)$. This concludes the proof.
\end{proof}

\smallskip

The next proposition describes a number of situations where we can ensure the above compactness condition and, thus, we can establish that $\sigma_{\agg^{-1}(\cA)}=\alpha$.

\begin{proposition}
\label{prop:ph_alphausc}
Assume that $\agg$ is positively homogeneous and $\cA$ is a cone. Moreover, assume that $\agg(e)\in\R_+\setminus\{0\}$. Then, $\sigma_{\agg^{-1}(\cA)}=\alpha$ in each of the following cases:
\begin{enumerate}[(i)]
    \item $\Omega$ is finite.
    \item $\cA$ is polyhedral, i.e.\ there exist $W_1,\dots,W_n\in\cE'_+$ and $a_1,\dots,a_n\in\R$ such that
\[
\cA = \bigcap_{i=1}^n\{U\in\cE \,; \ \E[UW_i]\geq a_i\}.
\]
    \item $\cA$ is induced by Expected Shortfall, i.e.\ there exists $\lambda\in(0,1)$ such that
\[
\cA = \{U\in\cE \,; \ \ES_\lambda(U)\leq0\}.
\]
\end{enumerate}
\end{proposition}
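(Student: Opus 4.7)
The strategy is to reduce all three cases to Lemma~\ref{lemma:conic_Lbd(X*)>0}, so that in each situation it is enough to verify that the set
\[
K := \cB(\cA) \cap \{W \in L^1(\R) \,; \ \|W\|_1 \leq 1\}
\]
is $\sigma(\cE',\cE)$-compact. In each case, the $\sigma(\cE',\cE)$-closedness of $K$ is straightforward once we have a concrete description of $\cB(\cA)$, so the whole effort reduces to exhibiting enough compactness.

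For case \textit{(i)} I would observe that if $\Omega$ is finite then $\cE'$ is finite dimensional; the set $K$ is closed (intersection of two $\sigma(\cE',\cE)$-closed sets) and norm-bounded, hence compact, and in finite dimensions the norm and $\sigma(\cE',\cE)$ topologies coincide. For case \textit{(ii)}, the first step is to show that, because $\cA$ is a cone and $0\in\cA$, one may replace each $a_i$ by $0$ (the argument uses that the recession cone of $\bigcap_i\{U\in\cE\,;\, \E[UW_i]\geq a_i\}$ equals $\bigcap_i\{U\in\cE\,;\, \E[UW_i]\geq 0\}$, which must coincide with $\cA$ itself). Then $\cA$ is the polar of $\cone\{W_1,\dots,W_n\}$ in the dual pair $(\cE,\cE')$, and by the bipolar theorem
\[
\cB(\cA) = \overline{\cone\{W_1,\dots,W_n\}},
\]
where the closure is in $\sigma(\cE',\cE)$. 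Since $\Span\{W_1,\dots,W_n\}$ is finite dimensional and therefore closed in $\cE'$, and since a finitely generated convex cone in a finite-dimensional space is closed, one obtains $\cB(\cA) = \cone\{W_1,\dots,W_n\}$. Intersecting this finite-dimensional cone with $\{\|W\|_1\leq 1\}$ yields a norm-closed and norm-bounded subset of the finite-dimensional subspace $\Span\{W_1,\dots,W_n\}$, which is therefore compact in the unique Hausdorff vector topology on this subspace and, a fortiori, $\sigma(\cE',\cE)$-compact.

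For case \textit{(iii)} I would first invoke the well-known Kusuoka-type representation (cf.\ Theorem~4.52 in F\"ollmer and Schied~\cite{FoellmerSchied2016}, as already used in Example~\ref{ex:alpha_not_usc}) to get
\[
\cB(\cA) = \left\{W\in\cE'_+ \,; \ W\leq \tfrac{1}{\lambda}\E[W]\right\}.
\]
For $W\in K$ this yields $0\leq W\leq \tfrac{1}{\lambda}$ almost surely, so $K$ is contained in the $L^\infty$-ball of radius $1/\lambda$, which is $\sigma(L^\infty,L^1)$-compact by the Banach-Alaoglu theorem. Since the constraints defining $K$ (nonnegativity, the pointwise bound $W\leq \tfrac{1}{\lambda}\E[W]$, and $\E[W]\leq 1$) can all be tested against elements of $L^\infty\subset\cE$, the set $K$ is $\sigma(L^\infty,L^1)$-closed; hence $K$ is $\sigma(L^\infty,L^1)$-compact.

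The main obstacle, and the only genuinely delicate point, is bridging two different weak topologies in case \textit{(iii)}: $K$ lives in $\cE'$ and we want $\sigma(\cE',\cE)$-compactness, but the Banach-Alaoglu argument naturally produces $\sigma(L^\infty,L^1)$-compactness. I would resolve this by noting that all elements of $K$ lie in $L^\infty\subset\cE'$, and that on $L^\infty$ the topology $\sigma(\cE',\cE)$ coincides with $\sigma(L^\infty,\cE)$, which is coarser than $\sigma(L^\infty,L^1)$ because $\cE\subset L^1$. Hence the identity map from $(K,\sigma(L^\infty,L^1))$ to $(K,\sigma(\cE',\cE))$ is continuous, and the continuous image of a compact set is compact, giving $\sigma(\cE',\cE)$-compactness of $K$ and completing the verification of the hypothesis of Lemma~\ref{lemma:conic_Lbd(X*)>0}.
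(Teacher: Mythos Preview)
Your proposal is correct and follows essentially the same route as the paper: all three cases are reduced to Lemma~\ref{lemma:conic_Lbd(X*)>0} by verifying $\sigma(\cE',\cE)$-compactness of $\cB(\cA)\cap\{\|W\|_1\le 1\}$, with finite-dimensionality handling (i) and (ii) and a Banach--Alaoglu argument plus the inclusion $\cE\subset L^1$ handling (iii). Your treatment of (ii) via the recession-cone/bipolar argument is simply a more explicit version of what the paper asserts without detail, and your discussion of the topology transfer in (iii) spells out exactly the step the paper summarizes in one line.
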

\begin{proof}
{\em (i)} In the case that $\Omega$ is finite, the space $\cE'$ is finite dimensional and the compactness condition in Lemma~\ref{lemma:conic_Lbd(X*)>0} is clearly satisfied because $\cB(\cA)=\{\sigma_\cA\geq0\}$ is always $\sigma(\cE',\cE)$-closed.

\smallskip

{\em (ii)} If $\cA$ is polyhedral, then it is easy to see that $\cB(\cA)$ is a finitely-generated convex cone, i.e.\ there exist $W_1,\dots,W_n\in\cE'_+$ such that
\[
\cB(\cA) = \left\{\sum_{i=1}^n\lambda_iW_i \,; \ \lambda_1,\dots,\lambda_n\in[0,\infty)\right\}.
\]
Note that for all $\lambda_1,\dots,\lambda_n\in[0,\infty)$ we have
\[
\left\|\sum_{i=1}^n\lambda_iW_i\right\|_1 = \sum_{i=1}^n\lambda_i\|W_i\|_1.
\]
As a result, $\cB(\cA)\cap\{W\in L^1(\R) \,; \ \|W\|_1\leq1\}$ is easily seen to be $\sigma(\cE',\cE)$-compact and we can apply Lemma~\ref{lemma:conic_Lbd(X*)>0} to get the desired result.

\smallskip

{\em (iii)} If $\cA$ is induced by Expected Shortfall as in Example~\ref{ex:alpha_not_usc}, then
\[
\cB(\cA) = \left\{W\in\cE'_+ \,; \ W\leq\frac{1}{\lambda}\E[W]\right\}.
\]
As a result, we easily see that
\[
\cB(\cA)\cap\{W\in L^1(\R) \,; \ \|W\|_1\leq1\} \subset \{W\in L^\infty_+(\R) \,; \ W\leq\lambda^{-1}\}.
\]
Since the set $\cB(\cA)\cap\{W\in L^1(\R) \,; \ \|W\|_1\leq1\}$ is $\sigma(L^\infty(\R),L^1(\R))$-closed, it follows from the Banach-Alaoglu Theorem that it is even $\sigma(L^\infty(\R),L^1(\R))$-compact. As $\cE\subset L^1(\R)$, we automatically have $\sigma(\cE',\cE)$-compactness and we may conclude by applying Lemma~\ref{lemma:conic_Lbd(X*)>0}.
\end{proof}


\subsubsection*{The case where the image of $\agg$ intersects the interior of $\cA$}

As a final step, we follow Rockafellar~\cite{Rockafellar_Conjugate} to establish the identity $\sigma_{\agg^{-1}(\cA)}=\alpha$ under a suitable interiority condition, which also appears in Armenti et al.~\cite{ArmentiCrepeyDrapeauPapapantoleon2018} and Biagini et al.~\cite{BiaginiFouqueFrittelliMeyerBrandis2019b}.

\begin{proposition}
\label{prop:S(X)interiorA}
\begin{enumerate}[(i)]
  \item If there exists $X^\ast\in\cX$ such that $\agg(X^\ast)$ belongs to the $\sigma(\cE,\cE')$-interior of $\cA$, then $\alpha=\sigma_{\agg^{-1}(\cA)}$.
  \item Assume that $\cE'$ is the norm dual of $\cE$. If there exists $X^\ast\in\cX$ such that $\agg(X^\ast)$ belongs to the norm interior of $\cA$, then $\alpha=\sigma_{\agg^{-1}(\cA)}$.
\end{enumerate}
\end{proposition}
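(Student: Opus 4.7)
The plan is to apply Lemma~\ref{lem: minimax}, which reduces the identity $\sigma_{\agg^{-1}(\cA)}=\alpha$ to the minimax equality $\inf_{X\in\cX}\sup_{W\in\cE'}K_Z(X,W)=\sup_{W\in\cE'}\inf_{X\in\cX}K_Z(X,W)$ for every $Z\in\cX'$. Following Rockafellar~\cite{Rockafellar_Conjugate}, I would derive this equality by showing that a certain value function is subdifferentiable at the origin, with the interiority hypothesis providing the crucial regularity.

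Fix $Z\in\cX'$, recall the convex map $F_Z$ from the proof of Lemma~\ref{lem: minimax} (whose joint convexity follows from concavity of $\agg$, convexity of $\cA$, and $\cA+\cE_+\subset\cA$), and define the convex value function $v:\cE\to[-\infty,\infty]$ by
\[
v(U) := \inf_{X\in\cX}F_Z(X,U)=\inf_{X\in\cX}\{\E[\langle X,Z\rangle] \,; \ \agg(X)+U\in\cA\}.
\]
Note that $v(0)=\sigma_{\agg^{-1}(\cA)}(Z)$. Using the computation $(F_Z(X,\cdot))^\ast(W)=-K_Z(X,-W)$ carried out in the proof of Lemma~\ref{lem: minimax}, the conjugate of $v$ satisfies $v^\ast(W)=\sup_{X\in\cX}(F_Z(X,\cdot))^\ast(W)=-\inf_{X\in\cX}K_Z(X,-W)$, whence
\[
v^{\ast\ast}(0)=\sup_{W\in\cE'}\{-v^\ast(W)\}=\sup_{W\in\cE'}\inf_{X\in\cX}K_Z(X,W)=\alpha(Z).
\]
Consequently, the desired identity reduces to the equality $v(0)=v^{\ast\ast}(0)$.

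The interiority condition is precisely what delivers this. Under (i), pick a $\sigma(\cE,\cE')$-open neighborhood $V$ of $0$ with $\agg(X^\ast)+U\in\cA$ for all $U\in V$; under (ii), pick $V$ norm-open with the same property. In either case $v(U)\leq\E[\langle X^\ast,Z\rangle]$ for $U\in V$. If $v(0)=-\infty$, then the inequality $\alpha\leq\sigma_{\agg^{-1}(\cA)}$ from Theorem~\ref{prop:Lambda-1(A)_dualrepr} yields $\alpha(Z)=-\infty$ and we are done. Otherwise $v(0)\in(-\infty,0]$ (since $0\in\agg^{-1}(\cA)$), and a short convexity argument on a symmetric subneighborhood of $0$ rules out $v$ attaining $-\infty$ nearby; by a classical result (e.g.\ Proposition~2.2.9 in Z\u{a}linescu~\cite{Zalinescu2002}), a proper convex function bounded above on a neighborhood of a point where it is finite is continuous there, so $v$ is continuous at $0$ in the relevant topology. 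Since $\cE'$ is the topological dual of $\cE$ for that topology (by construction in (i), by assumption in (ii)), the subdifferential $\partial v(0)\subset\cE'$ is nonempty, and any $W^\ast\in\partial v(0)$ yields $v^{\ast\ast}(0)\geq-v^\ast(W^\ast)\geq v(0)$; the reverse inequality is automatic. The main obstacle is the propriety bookkeeping around the potential value $v(U)=-\infty$ near $0$, which is resolved by the symmetric-neighborhood trick combined with lower semicontinuity at $0$.
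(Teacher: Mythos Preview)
Your proposal is correct and follows essentially the same route as the paper: both fix $Z$, introduce the convex value function $\psi_Z(U)=\inf_{X}F_Z(X,U)$ (your $v$), use the interiority hypothesis to bound it above on a neighborhood of $0$, and then invoke Rockafellar-type perturbation duality to conclude $\psi_Z(0)=\psi_Z^{\ast\ast}(0)$, which by Lemma~\ref{lem: minimax} is exactly the minimax identity. The only difference is cosmetic---the paper dispatches the last step by a direct citation of Theorem~17 in Rockafellar~\cite{Rockafellar_Conjugate}, whereas you unpack that theorem (boundedness above $\Rightarrow$ continuity $\Rightarrow$ nonempty subdifferential $\Rightarrow$ $v(0)=v^{\ast\ast}(0)$); your aside about ``lower semicontinuity at $0$'' is unnecessary, as the symmetric-neighborhood argument alone already rules out $v=-\infty$ near $0$.
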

\begin{proof}
{\em (i)} By assumption, we find a $\sigma(\cE,\cE')$-neighborhood of zero $\cU\subset\cE$ such that $\agg(X^\ast)+\cU\subset\cA$. Now, fix an element $Z\in\cX'$ and define a map $\psi_Z:\cE\to[-\infty,\infty]$ by setting
\[
\psi_Z(U) := \inf_{X\in\cX}F_Z(X,U).
\]
Here, we have adopted the notation introduced in the proof of Lemma~\ref{lem: minimax}. It is easy to verify that $F_Z$ is jointly convex and, hence, $\psi_Z$ is convex. Note that
\[
\psi_Z(U) \leq F_Z(X^\ast,U) = \E[\langle X^\ast,Z\rangle]
\]
for every $U\in\cU$, so that $\psi_Z$ is bounded from above on $\cU$. In view of Lemma~\ref{lem: minimax}, the desired assertion follows from Theorem 17 in Rockafellar~\cite{Rockafellar_Conjugate} (by taking $\varphi=\psi_Z$ and $F=F_Z$ in the notation of that result).

\smallskip

{\em (ii)} Since the norm topology on $\cE$ is compatible with our bilinear form on $\cE\times\cE'$ under the assumption that $\cE'$ is the norm dual of $\cE$, we can repreat the same argument as in {\em (i)} by exploiting the fact that Theorem 17 in~\cite{Rockafellar_Conjugate} holds under any compatible topology.
\end{proof}


\section{``First aggregate, then allocate''-type systemic risk measures}

In this short section we turn to systemic risk measures of ``first aggregate, then allocate'' type and their dual representations. Throughout the section we fix an admissible impact map $\agg$ and an admissible acceptance set $\cA$.


\subsection{The systemic risk measure $\widetilde{\rho}$}

``First aggregate, then allocate''-type systemic risk measures are defined as follows.

\begin{definition}
We define a map $\widetilde{\rho}:\cX\to[-\infty,\infty]$ by setting
\[
\widetilde{\rho}(X) = \inf\{m\in\R \,; \ \agg(X)+m\in\cA\}.
\]
\end{definition}

\smallskip

The difference with respect to $\rho$ is that, instead of injecting capital into the system in order to reach an acceptable level of systemic risk, one looks at the minimum level of the chosen systemic risk indicator that ensures acceptability. In particular, if the impact function is expressed in monetary terms, then $\widetilde{\rho}(X)$ can be interpreted as a bail-out cost for the ``aggregated position'' $\agg(X)$. For a thorough presentation of this type of systemic risk measures we refer to the literature cited in the introduction.

\smallskip

In what follows, we exploit the fact that $\widetilde{\rho}$ can be expressed as the composition between the impact map and the standard cash-additive risk measure $\rho_\cA:\cE\to[-\infty,\infty]$ given by
\[
\rho_\cA(X):=\inf\{m\in\R \,; \ X+m\in\cA\}.
\]

\smallskip

The next result records the key properties of $\widetilde{\rho}$. In particular, differently from the systemic risk measure $\rho$, we show that $\widetilde{\rho}$ is always lower semicontinuous under our standing assumptions on the impact map and the acceptance set.

\begin{proposition}
\label{prop: properties rho tilde}
The systemic risk measure $\widetilde{\rho}$ is convex, $\sigma(\cX,\cX')$-lower semicontinuous, and satisfies $\widetilde{\rho}(0)\leq0$. Moreover, $\widetilde{\rho}$ is proper if and only if $\widetilde{\rho}(0)>-\infty$ if and only if $\cA\cap(-\R_+)\neq-\R_+$.
\end{proposition}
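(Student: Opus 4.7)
The strategy is to exploit the factorization $\widetilde{\rho}(X) = \rho_\cA(\agg(X))$ and transport properties of the univariate $\rho_\cA$ through $\agg$. As a preliminary observation, I would record that $\rho_\cA$ is convex (directly from convexity of $\cA$) and nonincreasing (since $\cA+\cE_+\subset\cA$). Convexity of $\widetilde{\rho}$ then follows from the standard fact that composing a convex nonincreasing map with a concave map yields a convex map, using (S3)-(S4). The inequality $\widetilde{\rho}(0)\leq 0$ is immediate from (S2) and (A2), since $\agg(0)=0$ and $0\in\cA$.

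For the $\sigma(\cX,\cX')$-lower semicontinuity, I would first exploit monotonicity of $\cA$ to rewrite each sublevel set as
\[
\{X\in\cX \,; \ \widetilde{\rho}(X)\leq r\} = \bigcap_{\e>0}\{X\in\cX \,; \ \agg(X)+(r+\e)\in\cA\}.
\]
Then, invoking the Hahn-Banach representation~\eqref{eq: external characterization} applied to $\cA$, each set in the intersection equals
\[
\bigcap_{W\in\cB(\cA)}\{X\in\cX \,; \ \E[\agg(X)W]\geq \sigma_\cA(W)-(r+\e)\E[W]\}.
\]
Because $\cB(\cA)\subset\cE'_+$ by the monotonicity of $\cA$, each such set is $\sigma(\cX,\cX')$-closed by (S5). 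Hence the sublevel set is closed, and $\widetilde{\rho}$ is $\sigma(\cX,\cX')$-lower semicontinuous. This is essentially the same argument used in Proposition~\ref{prop: properties rho}~(i) to establish the closedness of $\agg^{-1}(\cA)$.

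For the equivalences on properness, being convex and $\sigma(\cX,\cX')$-lower semicontinuous, $\widetilde{\rho}$ cannot attain both the value $-\infty$ and a finite value (Proposition 2.2.5 in Z\u{a}linescu~\cite{Zalinescu2002}). Combined with $\widetilde{\rho}(0)\leq 0$, this gives the first equivalence ``proper iff $\widetilde{\rho}(0)>-\infty$''. For the second, observe that $\widetilde{\rho}(0)=\inf\{m\in\R \,; \ m\in\cA\}$, where $m$ is viewed as a constant element of $\cE$; the set of such admissible constants is upward closed in $\R$ by (A3). Therefore $\widetilde{\rho}(0)=-\infty$ if and only if every negative real belongs to $\cA$, i.e.\ $-\R_+\subset\cA$, which is precisely $\cA\cap(-\R_+)=-\R_+$. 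Negating yields the stated equivalence.

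The main technical hurdle is the lower semicontinuity step, which rests on translating closedness of $\cA$ in $\cE$ into closedness in $\cX$; the rewriting via the external representation of $\cA$ combined with (S5) is what makes this go through cleanly, and it is the place where the semicontinuity assumption on $\agg$ is genuinely used.
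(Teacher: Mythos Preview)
Your proposal is correct and follows essentially the same approach as the paper: convexity by composition of the convex nonincreasing $\rho_\cA$ with the concave $\agg$, lower semicontinuity by expressing sublevel sets via the external representation~\eqref{eq: external characterization} of $\cA$ and invoking (S5) (exactly the mechanism behind Proposition~\ref{prop: properties rho}(i)), and the properness equivalences via Proposition~2.2.5 in Z\u{a}linescu~\cite{Zalinescu2002}. The only cosmetic difference is that the paper writes the sublevel set as $\agg^{-1}(\{U\in\cE : \rho_\cA(U)\leq r\})$ and appeals to the closedness of this set in $\cE$, whereas you insert the $\varepsilon$-intersection and apply the external representation of $\cA$ directly; both unwind to the same intersection of sets $\{X : \E[\agg(X)W]\geq c\}$ closed by (S5). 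One minor remark: your citation of (S3) in the convexity step is superfluous---what you actually use there is (S4) for concavity of $\agg$ together with (A3)-(A4) for the properties of $\rho_\cA$.
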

\begin{proof}
Convexity is clear by composition. To show lower semicontinuity, note that $\rho_\cA$ is $\sigma(\cE,\cE')$-lower semicontinuous by the $\sigma(\cE,\cE')$-closedness of $\cA$. Now, take $r\in\R$ and note that
\[
\{X\in\cX \,; \ \widetilde{\rho}(X)\leq r\} = \agg^{-1}(\{U\in\cE \,; \ \rho_\cA(U)\leq r\}).
\]
Following the argument in the proof of Proposition~\ref{prop: properties rho} we can show that the above set is $\sigma(\cX,\cX')$-closed, which delivers the desired lower semicontinuity. To show properness, observe first that $\widetilde{\rho}(0)\leq0$ because $\agg(0)=0\in\cA$. The above equivalence can now be established as in the proof of Proposition~\ref{prop: properness rho}.
\end{proof}


\subsection{The dual representation of $\widetilde{\rho}$}

The purpose of this subsection is to derive a dual representation of $\widetilde{\rho}$ and to compare it with the dual representation of $\rho$. In this case, the acceptability test is performed on $\agg(X)$ and the acceptance set is $\cA$. This suggests to rely on the dual representation of $\rho_\cA$ in order to achieve in a straightforward way the desired dual representation of $\widetilde{\rho}$. The following maps are the fundamental ingredients of the desired representation.

\begin{definition}
We define two maps $\widetilde{\alpha},\widetilde{\alpha}^+:\cX'\to[-\infty,+\infty]$ by setting
\[
\widetilde{\alpha}(Z) := \sup_{W\in\cB(\cA),\,\E[W]=1}\Big\{\sigma_\cA(W)+\inf_{X\in\cX}\{\E[\langle X,Z\rangle]-\E[\agg(X)W]\}\Big\},
\]
\[
\widetilde{\alpha}^+(Z) := \sup_{W\in\cB(\cA)\cap(\cE'_{++}\cup\{0\}),\,\E[W]=1}\Big\{\sigma_\cA(W)+\inf_{X\in\cX}\{\E[\langle X,Z\rangle]-\E[\agg(X)W]\}\Big\}.
\]
\end{definition}

\smallskip

\begin{remark}
\label{rem: alpha tilde K}
The above maps belong to the class of maps $\widetilde{\alpha}_\cK:\cX'\to[-\infty,+\infty]$ defined by
\[
\widetilde{\alpha}_\cK(Z) := \sup_{W\in\cK,\,\E[W]=1}\Big\{\sigma_\cA(W)+\inf_{X\in\cX}\{\E[\langle X,Z\rangle]-\E[\agg(X)W]\}\Big\},
\]
where $\cK$ is a convex cone in $\cB(\cA)$ such that $\lambda\cK+(1-\lambda)\cB(\cA)\subset\cK$ for every $\lambda\in[0,1]$; see also Remark~\ref{rem: alpha K}. This will allow us to prove properties for $\widetilde{\alpha}$ and $\widetilde{\alpha}^+$ simultaneously. In fact, all properties of $\widetilde{\alpha}$ and $\widetilde{\alpha}^+$ we will consider are shared by the entire class.
\end{remark}

\smallskip

Before we establish the desired dual representation we highlight some relevant properties of the above maps and point out their relationship with the penalty functions $\alpha$ and $\alpha^+$. Here, we denote by $\dom(\widetilde{\alpha})$ the domain of finiteness of $\widetilde{\alpha}$ (similarly for $\widetilde{\alpha}^+$). Moreover, we denote by $\cl$ the closure operator with respect to the topology $\sigma(\cX',\cX)$.

\begin{proposition}
\label{prop: properties alpha tilde}
The maps $\widetilde{\alpha},\widetilde{\alpha}^+:\cX'\to[-\infty,\infty]$ satisfy the following properties (the statements about $\widetilde{\alpha}^+$ require that $\cB(\cA)\cap\cE'_{++}\neq\emptyset$):
\begin{enumerate}[(i)]
  \item $\widetilde{\alpha}$ and $\widetilde{\alpha}^+$ take values in the interval $[-\infty,0]$.
  \item $\widetilde{\alpha}$ and $\widetilde{\alpha}^+$ are concave.
  \item $\dom(\widetilde{\alpha}^+)\subset\dom(\widetilde{\alpha})\subset \cl(\dom(\widetilde{\alpha}^+))\subset\cX'_+$.
  \item $\alpha$ is the smallest positively homogeneous map dominating $\widetilde{\alpha}$, i.e.\ for every $Z\in\cX'$
\[
\alpha(Z) = \sup_{\lambda>0}\frac{\widetilde{\alpha}(\lambda Z)}{\lambda}.
\]
  \item $\alpha^+$ is the smallest positively homogeneous map dominating $\widetilde{\alpha}^+$, i.e.\ for every $Z\in\cX'$
\[
\alpha^+(Z) = \sup_{\lambda>0}\frac{\widetilde{\alpha}^+(\lambda Z)}{\lambda}.
\]
\end{enumerate}
\end{proposition}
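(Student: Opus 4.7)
The plan is to establish all five properties simultaneously by introducing the auxiliary map $\widetilde{\alpha}_\cK$ from Remark~\ref{rem: alpha tilde K} for a convex cone $\cK\subset\cB(\cA)$ satisfying $\lambda\cK+(1-\lambda)\cB(\cA)\subset\cK$, then specializing to $\cK=\cB(\cA)$ and $\cK=\cB(\cA)\cap(\cE'_{++}\cup\{0\})$; this mirrors the strategy used for $\alpha,\alpha^+$ in Proposition~\ref{prop:properties_alpha}. Throughout, set $\Phi(Z,W):=\sigma_\cA(W)+\inf_{X\in\cX}\{\E[\langle X,Z\rangle]-\E[\agg(X)W]\}$, so that $\widetilde{\alpha}_\cK(Z)=\sup_{W\in\cK,\,\E[W]=1}\Phi(Z,W)$.

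For (i), plugging $X=0$ into the inner infimum bounds it by $-\E[\agg(0)W]=0$, and $0\in\cA$ forces $\sigma_\cA(W)\leq 0$. For (ii), use the joint concavity of $\Phi$ in $(Z,W)$, inherited from the infimum in $X$ of an expression affine in $(Z,W)$, together with the convexity of the affine slice $\E[W]=1$. For (iii), the inclusion $\dom(\widetilde{\alpha}^+)\subset\dom(\widetilde{\alpha})$ is immediate from $\widetilde{\alpha}^+\leq\widetilde{\alpha}$, and $\dom(\widetilde{\alpha})\subset\cX'_+$ follows from $\widetilde{\alpha}\leq\alpha$ combined with Proposition~\ref{prop:properties_alpha}(iv). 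For the density inclusion, I would take witnesses $W_0$ for $Z\in\dom(\widetilde{\alpha})$ and $(Z^*,W^*)$ with $W^*\in\cB(\cA)\cap\cE'_{++}$ anchoring some point of $\dom(\widetilde{\alpha}^+)$, and observe that $\lambda W_0+(1-\lambda)W^*\in\cB(\cA)\cap\cE'_{++}$ with unit mean for $\lambda\in[0,1)$; the joint concavity of $\Phi$ then yields $\widetilde{\alpha}^+(\lambda Z+(1-\lambda)Z^*)>-\infty$, and letting $\lambda\uparrow 1$ puts $Z\in\cl(\dom(\widetilde{\alpha}^+))$.

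The core of parts (iv) and (v) is a rescaling of the dual variable. Since $\cB(\cA)$ is a cone and $\sigma_\cA$ is positively homogeneous, the substitution $W=\lambda W'$ turns the constraint $\E[W]=1$ into $\E[W']=1/\lambda$ and rescales the whole expression by $\lambda$, giving
\[
\frac{\widetilde{\alpha}(\lambda Z)}{\lambda}=\sup_{W\in\cB(\cA),\,\E[W]=1/\lambda}\Phi(Z,W)
\]
for every $\lambda>0$. Taking the supremum over $\lambda>0$ then sweeps through all $W\in\cB(\cA)$ with $\E[W]>0$. Compared with $\alpha(Z)=\sup_{W\in\cB(\cA)}\Phi(Z,W)$, the only missing element is $W=0$ (since $\cB(\cA)\subset\cE'_+$ forces $\E[W]=0$ to mean $W=0$); but for $Z\neq 0$ a scaling $X\mapsto nX$ with $\E[\langle X,Z\rangle]<0$ gives $\Phi(Z,0)=\inf_X\E[\langle X,Z\rangle]=-\infty$, so the $W=0$ contribution is inactive, while for $Z=0$ the identity collapses to $\alpha(0)=0$ by positive homogeneity of $\alpha$. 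The argument for (v) is the same verbatim with $\cK=\cB(\cA)\cap(\cE'_{++}\cup\{0\})$, using the precondition $\cB(\cA)\cap\cE'_{++}\neq\emptyset$ to rule out vacuous suprema.

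The main obstacle is the boundary bookkeeping in (iv)--(v): the rescaling identity is transparent away from the origin, but at $Z=0$ (and in edge cases where the inner infimum is $-\infty$) one must appeal to positive homogeneity of $\alpha$ together with the observation that $\Phi(Z,0)=-\infty$ for $Z\neq 0$ in order to align $\sup_\lambda\widetilde{\alpha}(\lambda Z)/\lambda$ with $\alpha(Z)$. The rest of the proof is a routine exercise in joint concavity and cone-rescaling, following the template established by Proposition~\ref{prop:properties_alpha}.
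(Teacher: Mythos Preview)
Your proposal is correct and follows essentially the same approach as the paper: working with the auxiliary $\widetilde{\alpha}_\cK$, invoking the joint concavity of $\Phi$ as in Proposition~\ref{prop:properties_alpha}, and obtaining (iv)--(v) by the cone-rescaling $W\mapsto\lambda W'$. The paper's own proof is in fact terser---it simply asserts the identity $\alpha_\cK(Z)=\sup_{\lambda>0}\widetilde{\alpha}_\cK(\lambda Z)/\lambda$ as ``clear'' and then reads off $\widetilde{\alpha}_\cK\leq\alpha_\cK$ to reduce (i) to Proposition~\ref{prop:properties_alpha}, whereas you prove (i) directly by plugging $X=0$ and spell out the rescaling and the $W=0$ bookkeeping; both routes are valid and amount to the same argument.
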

\begin{proof}
{\em (i)-(ii), (iv)-(v)} Let $\cK\subset\cB(\cA)$ be a convex cone as in Remark~\ref{rem: alpha tilde K}. It is clear that
\[
\alpha_\cK(Z) = \sup_{\lambda>0}\frac{\widetilde{\alpha}_\cK(\lambda Z)}{\lambda}
\]
for every $Z\in\cX'$. In particular, $\widetilde{\alpha}_\cK\leq\alpha_\cK$. It follows from the proof of Proposition~\ref{prop:properties_alpha} that $\widetilde{\alpha}_\cK$ takes value into $[-\infty,0]$. Moreover, the proof of the concavity of $\alpha_\cK$ in that result can be repeated to establish the concavity of $\widetilde{\alpha}_\cK$. The desired assertions follow by taking $\cK=\cB(\cA)$ and $\cK=\cB(\cA)\cap(\cE'_{++}\cup\{0\})$.

\smallskip

{\em (iii)} The assertion can be proved by repeating the proof of the corresponding statement in Proposition~\ref{prop:properties_alpha}.
\end{proof}

\smallskip

We record the announced dual representation of $\widetilde{\rho}$ in the next result.

\begin{theorem}
\label{theo: dual representation rho tilde}
(i) If $\widetilde{\rho}$ is proper, then we have
\[
\widetilde{\rho}(X) = \sup_{Z\in\cX'_+}\{\widetilde{\alpha}(Z)-\E[\langle X,Z\rangle]\}
\]
for every $X\in\cX$. The supremum can be restricted to $\cX'_{++}$ provided that $\dom(\widetilde{\alpha})\cap\cX'_{++}\neq\emptyset$.

\smallskip

(ii) Assume that $\cB(\cA)\cap\cE'_{++}\neq\emptyset$. If $\widetilde{\rho}$ is proper, then we have
\[
\widetilde{\rho}(X) = \sup_{Z\in\cX'_+}\{\widetilde{\alpha}^+(Z)-\E[\langle X,Z\rangle]\}
\]
for every $X\in\cX$. The supremum can be restricted to $\cX'_{++}$ provided that $\dom(\widetilde{\alpha}^+)\cap\cX'_{++}\neq\emptyset$.
\end{theorem}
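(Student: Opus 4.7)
The plan is to reduce the claim to the standard dual representation of the univariate cash-additive risk measure $\rho_\cA$ on $\cE$, and then to unfold the dependence on $\agg$ by applying Fenchel--Moreau a second time. First, I would exploit the factorisation $\widetilde\rho=\rho_\cA\circ\agg$. Properness of $\widetilde\rho$ forces properness of $\rho_\cA$, which is convex, $\sigma(\cE,\cE')$-lower semicontinuous by (A5), monotone, and cash-additive on $\cE$. The standard duality theory for such functionals---Fenchel--Moreau, with monotonicity delivering positivity of the dual variable and cash-additivity delivering the normalisation $\E[W]=1$---then yields
\[
\rho_\cA(U)=\sup\{\sigma_\cA(W)-\E[UW] \,; \ W\in\cB(\cA)\cap\cE'_+, \ \E[W]=1\}
\]
for every $U\in\cE$, which I apply at $U=\agg(X)$.

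Second, for each admissible $W$ the map $X\mapsto\E[\agg(X)W]$ is concave by (S4) and $\sigma(\cX,\cX')$-upper semicontinuous by (S5), so its concave Fenchel--Moreau representation reads
\[
\E[\agg(X)W]=\inf_{Z\in\cX'}\Big\{\E[\langle X,Z\rangle]-\inf_{Y\in\cX}\{\E[\langle Y,Z\rangle]-\E[\agg(Y)W]\}\Big\}.
\]
Substituting this into the representation of $\rho_\cA(\agg(X))$, swapping the two resulting suprema (always legitimate), and recognising the definition of $\widetilde\alpha$, I obtain
\[
\widetilde\rho(X)=\sup_{Z\in\cX'}\{\widetilde\alpha(Z)-\E[\langle X,Z\rangle]\}.
\]
Proposition~\ref{prop: properties alpha tilde}(iii) says $\widetilde\alpha\equiv-\infty$ off $\cX'_+$, so the supremum reduces to $Z\in\cX'_+$, which gives the first assertion of (i).

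To restrict further to $\cX'_{++}$ under $\dom(\widetilde\alpha)\cap\cX'_{++}\neq\emptyset$, I would repeat verbatim the convex-combination argument from the end of the proof of Theorem~\ref{theo: dual representation}: for a fixed $Z^\ast\in\dom(\widetilde\alpha)\cap\cX'_{++}$ and any $Z\in\cX'_+\cap\dom(\widetilde\alpha)$, the vectors $\lambda Z^\ast+(1-\lambda)Z$ lie in $\cX'_{++}$ for $\lambda\in(0,1]$, and concavity of $\widetilde\alpha$ from Proposition~\ref{prop: properties alpha tilde}(ii) allows me to let $\lambda\downarrow 0$ without losing the contribution from $Z$.

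Part (ii) follows the same template, the one extra step being a preliminary observation that, under $\cB(\cA)\cap\cE'_{++}\neq\emptyset$, the representation of $\rho_\cA$ above may itself be restricted to strictly-positive $W\in\cB(\cA)\cap\cE'_{++}$ with $\E[W]=1$; this is another convex-combination argument at the dual level, based on the facts that $\cB(\cA)$ is convex and closed under positive scaling, that $\sigma_\cA$ is concave and positively homogeneous, and that any strictly positive $W\in\cB(\cA)$ can be normalised by dividing by $\E[W]>0$. Plugging this improved representation into the second Fenchel--Moreau step produces $\widetilde\alpha^+$ in place of $\widetilde\alpha$, and the restriction to $\cX'_{++}$ is handled as in part (i). I do not expect any genuine obstacle; the only non-automatic input is the $\sigma(\cX,\cX')$-upper semicontinuity of $X\mapsto\E[\agg(X)W]$ needed to apply Fenchel--Moreau on the aggregation side, which is precisely the content of axiom (S5).
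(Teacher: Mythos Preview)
Your proposal is correct and follows essentially the same route as the paper: factor $\widetilde\rho=\rho_\cA\circ\agg$, dualise $\rho_\cA$ via the support function of $\cA$ restricted to densities with unit mean, then apply Fenchel--Moreau to the concave $\sigma(\cX,\cX')$-upper semicontinuous map $X\mapsto\E[\agg(X)W]$ (using (S5)), swap the two suprema, and handle the strict-positivity restriction by the convex-combination argument already used in Theorem~\ref{theo: dual representation}. The only cosmetic difference is that the paper treats (i) and (ii) simultaneously by working from the outset with a generic cone $\cK\subset\cB(\cA)$ satisfying $\lambda\cK+(1-\lambda)\cB(\cA)\subset\cK$ (as in Remark~\ref{rem: alpha tilde K}), so that the restriction of the inner representation of $\rho_\cA$ to strictly-positive $W$ in (ii) is absorbed into the choice $\cK=\cB(\cA)\cap(\cE'_{++}\cup\{0\})$ rather than being argued separately; your separate convex-combination argument for this restriction is equally valid.
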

\begin{proof}
Let $\cK\subset\cB(\cA)$ be a convex cone as in Remark~\ref{rem: alpha tilde K}. Note that the dual representation~\eqref{eq: external characterization} applied to $\cA$ yields
\begin{equation}
\label{eq: representation rho tilde}
\cA = \bigcap_{W\in\cK}\{U\in\cE \,; \ \E[UW]\geq\sigma_\cA(W)\} = \bigcap_{W\in\cK,\,\E[W]=1}\{U\in\cE \,; \ \E[UW]\geq\sigma_\cA(W)\},
\end{equation}
where we used the positive homogeneity of $\sigma_\cA$ (together with the fact that $\cK\subset\cE'_+$). As a result, for every $U\in\cE$ we get
\[
\rho_\cA(U) = \sup_{W\in\cK,\,\E[W]=1}\{\sigma_\cA(W)-\E[UW]\}.
\]
Using the notation introduced in the proof of Theorem~\ref{prop:Lambda-1(A)_dualrepr}, we immediately get
\begin{eqnarray*}
\widetilde{\rho}(X)
&=&
\sup_{W\in\cK,\,\E[W]=1}\{\sigma_\cA(W)-\E[\agg(X)W]\} \\
&=&
\sup_{W\in\cK,\,\E[W]=1}\sup_{Z\in\cX'_+}\{\sigma_\cA(W)-\E[\langle X,Z\rangle]+(\varphi_W)^\bullet(Z)\} \\
&=&
\sup_{Z\in\cX'_+}\sup_{W\in\cK,\,\E[W]=1}\{\sigma_\cA(W)-\E[\langle X,Z\rangle]+(\varphi_W)^\bullet(Z)\} \\
&=&
\sup_{Z\in\cX'_+}\{\widetilde{\alpha}_\cK(Z)-\E[\langle X,Z\rangle]\}
\end{eqnarray*}
for every $X\in\cX$. If, in addition, $\dom(\widetilde{\alpha}_\cK)\cap\cX'_{++}\neq\emptyset$, then we get
\[
\widetilde{\rho}(X) = \sup_{Z\in\cX'_{++}}\{\widetilde{\alpha}_\cK(Z)-\E[\langle X,Z\rangle]\}
\]
for every $X\in\cX$ by the same argument used to reduce the domain of the supremum in the proof of Theorem~\ref{theo: dual representation}. The desired assertions now follow by taking $\cK=\cB(\cA)$ and $\cK=\cB(\cA)\cap(\cE'_{++}\cup\{0\})$.
\end{proof}

\smallskip

\begin{remark}
(i) As in Remark~\ref{rem: fenchel moreau for rho}, we highlight the link between the dual representation in Theorem~\ref{theo: dual representation rho tilde} and the standard Fenchel-Moreau representation. We claim that, if $\widetilde{\rho}$ is proper, then
\[
\widetilde{\rho}^\ast(Z) = -\ucl(\widetilde{\alpha})(-Z) = -\ucl(\widetilde{\alpha}^+)(-Z)
\]
for every $Z\in\cX'$ (where the last equality holds provided that $\cB(\cA)\cap\cE'_{++}\neq\emptyset$). Here, we have denoted by $\ucl(\widetilde{\alpha})$ the $\sigma(\cX',\cX)$-upper semicontinuous hull of $\widetilde{\alpha}$ (similarly for $\widetilde{\alpha}^+$). To see this, note first that
\[
\widetilde{\rho}(X) = \sup_{Z\in\cX'}\{\widetilde{\alpha}(Z)-\E[\langle X,Z\rangle]\} = \sup_{Z\in\cX'}\{\ucl(\widetilde{\alpha})(Z)-\E[\langle X,Z\rangle]\}
\]
for every $X\in\cX$. The left-hand side equality holds because $\widetilde{\alpha}=-\infty$ outside $\cX'_+$ by Proposition~\ref{prop: properties alpha tilde}. The right-hand side equality follows from Theorem 2.3.1 in Z\u{a}linescu~\cite{Zalinescu2002}. Since $\ucl(\widetilde{\alpha})$ is concave and $\sigma(\cX',\cX)$-upper semicontinuous, the desired claim is a consequence of the Fenchel-Moreau Theorem. The argument for $\widetilde{\alpha}^+$ is identical.

\smallskip

(ii) The dual elements in the above representation can be identified with $d$-dimensional vectors of probability measures on $(\Omega,\cF)$ that are absolutely continuous (or equivalent) with respect to $\probp$ up to a normalizing vector that collects their expectations. This allows to express the above representation in terms of probability measures. Indeed, for every $w\in\R^d_+$ define
\[
\cQ^w(\probp) := \{\probq\in\cQ(\probp) \,; \ \probq_i=\probp \ \mbox{if} \ w_i=0, \ \forall i\in\{1,\dots,d\}\}, \ \ \ \ \cQ^w_e(\probp)=\cQ_e(\probp)\cap\cQ^w(\probp),
\]
where we have used the notation from Remark~\ref{rem: fenchel moreau for rho}. Then, if $\widetilde{\rho}$ is proper, we easily see that
\[
\widetilde{\rho}(X) = \sup_{w\in\R^d_+,\,\probq\in\cQ^w(\probp),\,\frac{d\probq}{d\probp}\in\cX'}\bigg\{
\widetilde{\alpha}\bigg(w_1\frac{d\probq_1}{d\probp},\dots,w_d\frac{d\probq_d}{d\probp}\bigg)-
\sum_{i=1}^dw_i\E_{\probq_i}[X_i]\bigg\}
\]
for every $X\in\cX$. We can replace $\cQ^w(\probp)$ by $\cQ^w_e(\probp)$ in the above supremum provided that $\dom(\widetilde{\alpha})\cap\cX'_{++}\neq\emptyset$. The same holds with $\widetilde{\alpha}^+$ instead of $\widetilde{\alpha}$ (provided that $\cB(\cA)\cap\cE'_{++}\neq\emptyset$).
\end{remark}

\smallskip

The condition $\dom(\widetilde{\alpha})\cap\cX'_{++}\neq\emptyset$ is needed to restrict the domain in the above dual representation to strictly-positive dual elements (similarly for $\widetilde{\alpha}^+$). We conclude this section by providing a sufficient condition for this to hold; see also Proposition~\ref{prop: cond str pos}.

\begin{proposition}
Assume that $\cX_i=\cE$ for every $i\in\{1,\dots,d\}$. Moreover, suppose that $\cB(\cA)\cap\cE'_{++}\neq\emptyset$ and there exist $a\in(0,\infty)$ and $b\in\R$ such that
\[
\agg(X) \leq a\sum_{i=1}^dX_i+b
\]
for every $X\in\cX$. Then, $\dom(\widetilde{\alpha}^+)\cap\cX'_{++}\neq\emptyset$ (and, a fortiori, $\dom(\widetilde{\alpha})\cap\cX'_{++}\neq\emptyset$).
\end{proposition}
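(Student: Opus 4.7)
The plan is to mimic the proof of Proposition~\ref{prop: cond str pos}, adapting the candidate dual element to respect the additional normalization $\E[W]=1$ that appears in the definition of $\widetilde{\alpha}^+$.

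First, I would pick any $W_0\in\cB(\cA)\cap\cE'_{++}$ (such a $W_0$ exists by hypothesis). Since $W_0>0$ a.s.\ and $W_0\in\cE'\subset L^1$, we have $\E[W_0]\in(0,\infty)$, and by positive homogeneity of $\sigma_\cA$ the rescaled element $W:=W_0/\E[W_0]$ still lies in $\cB(\cA)\cap\cE'_{++}$ and additionally satisfies $\E[W]=1$. I would then take as candidate dual element
\[
Z := (aW,\dots,aW) \in \cX'_{++},
\]
exactly as in the proof of Proposition~\ref{prop: cond str pos}, the only difference being that we now care about $\widetilde{\alpha}^+(Z)$ rather than about $\sigma_{\agg^{-1}(\cA)}(Z)$.

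Next, I would evaluate the inner infimum that appears in $\widetilde{\alpha}^+(Z)$ against this specific $W$. Using the linearity of the bilinear form and the pointwise upper bound $\agg(X)\le a\sum_{i=1}^d X_i+b$, together with $W\ge 0$, we get
\[
\E[\langle X,Z\rangle]-\E[\agg(X)W] = \E\Big[\Big(a\sum_{i=1}^dX_i-\agg(X)\Big)W\Big] \geq \E[-bW] = -b
\]
for every $X\in\cX$. Hence $\inf_{X\in\cX}\{\E[\langle X,Z\rangle]-\E[\agg(X)W]\}\geq -b$, which is finite.

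Putting the two pieces together, from the definition of $\widetilde{\alpha}^+$ I would conclude
\[
\widetilde{\alpha}^+(Z) \geq \sigma_\cA(W)+\inf_{X\in\cX}\{\E[\langle X,Z\rangle]-\E[\agg(X)W]\} \geq \sigma_\cA(W)-b > -\infty,
\]
since $W\in\cB(\cA)$ forces $\sigma_\cA(W)\in\R$. Therefore $Z\in\dom(\widetilde{\alpha}^+)\cap\cX'_{++}$, which gives the claim; the fortiori statement $\dom(\widetilde{\alpha})\cap\cX'_{++}\neq\emptyset$ then follows from Proposition~\ref{prop: properties alpha tilde}\textit{(iii)}. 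There is essentially no obstacle here: the argument is structurally identical to Proposition~\ref{prop: cond str pos}, and the only small point to keep track of is the need to normalize $W_0$ to have unit expectation so that it qualifies as an admissible $W$ in the definition of $\widetilde{\alpha}^+$.
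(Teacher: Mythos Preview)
Your proof is correct and essentially identical to the paper's own argument: both pick $W\in\cB(\cA)\cap\cE'_{++}$, normalize it to have unit expectation (the paper invokes conicity of $\cB(\cA)$, you invoke positive homogeneity of $\sigma_\cA$, which is the same thing), set $Z=(aW,\dots,aW)$, and use the affine upper bound on $\agg$ to obtain $\widetilde{\alpha}^+(Z)\geq\sigma_\cA(W)-b>-\infty$. The only cosmetic difference is that the paper records the fortiori inclusion $\dom(\widetilde{\alpha}^+)\subset\dom(\widetilde{\alpha})$ via the direct inequality $\widetilde{\alpha}\geq\widetilde{\alpha}^+$ in the same display, whereas you cite Proposition~\ref{prop: properties alpha tilde}\textit{(iii)}.
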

\begin{proof}
Take $W\in\cB(\cA)\cap\cE'_{++}$ and note that we can always assume that $\E[W]=1$ by conicity of $\cB(\cA)$. Setting $Z=(aW,\dots,aW)\in\cX'_{++}$, we easily see that
\[
\widetilde{\alpha}(Z)
\geq
\widetilde{\alpha}^+(Z)
\geq
\sigma_\cA(W)+\inf_{X\in\cX}\{\E[\langle X,Z\rangle]-\E[\agg(X)W]\}
\geq
\sigma_\cA(W)-b\E[W]
>
-\infty.
\]
This delivers the desired assertion.
\end{proof}


\section{Risk measures based on univariate utility functions}
\label{sect: utility}

In this final section we provide a simple proof of the dual representation of shortfall risk measures, see Theorem 4.115 in F\"{o}llmer and Schied~\cite{FoellmerSchied2016}, that uses our general strategy to obtain dual representations. For ease of comparison, we focus on bounded positions.

\smallskip

Throughout the entire section we fix a nonconstant, concave, increasing function $u:\R\to\R$, which is interpreted as a standard von Neumann-Morgenstern utility function. We fix $u_0\in\R$ such that $u(x)>u_0$ for some $x\in\R$ and define a map $\rho_u:L^\infty\to[-\infty,\infty]$ by
\[
\rho_u(X) := \inf\{m\in\R \,; \ \E[u(X+m)]\geq u_0\}.
\]

\smallskip

\begin{theorem}
The risk measure $\rho_u$ is convex and $\sigma(L^\infty,L^1)$-lower semicontinuous. Moreover,
\[
\rho_u(X) = \sup_{\probq\ll\probp}\left\{\E_\probq[-X]+\sup_{\lambda>0}\left\{ \frac{1}{\lambda}\left(u_0+\E\left[u^\bullet\left(\lambda\frac{d\probq}{d\probp}\right)\right]\right) \right\}\right\}
\]
for every $X\in L^\infty$.
\end{theorem}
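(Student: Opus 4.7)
My plan is to view $\rho_u$ as a ``first allocate, then aggregate''-type systemic risk measure in the sense of Section 3 with $d=1$, $\cX=\cE=L^\infty$, $\cX'=\cE'=L^1$, aggregation function $\Lambda=u$, and acceptance set $\cA=\{U\in L^\infty \,;\, \E[U]\geq u_0\}$, so that $\rho_u$ coincides with the associated $\rho$. The innocuous replacement of $(u,u_0)$ by $(u-u(0),u_0-u(0))$ leaves $\rho_u$ unchanged and enforces $\Lambda(0)=0$; any further normalization needed is either automatic or can be arranged by a constant shift that does not affect $\rho_u$. Then (S1)--(S4) follow from the hypotheses on $u$, (S5) follows from Proposition~\ref{prop: conditions for S5}(ii) because $L^\infty$ is an Orlicz space whose conjugate Orlicz function is $\Delta_2$ and $\agg$ inherits the Fatou property from the continuity of $u$ by dominated convergence, and $\cA$ is admissible since it is a $\sigma(L^\infty,L^1)$-closed monotone convex half-space ($\mathbf{1}\in L^1$).

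I would next verify that $\rho_u$ is proper and $\sigma(L^\infty,L^1)$-lower semicontinuous. Convexity and cash additivity come from Proposition~\ref{prop: properties rho}; for $d=1$, cash additivity yields $\{X\in L^\infty \,;\, \rho_u(X)\leq r\}=\agg^{-1}(\cA)-r$, which is $\sigma(L^\infty,L^1)$-closed by Proposition~\ref{prop: properties rho}(i), giving lower semicontinuity. Properness is clear because the assumption $u(x^\ast)>u_0$ yields $\rho_u(X)<\infty$ for every bounded $X$, while the nonconstancy and monotonicity of $u$ give $\rho_u(0)>-\infty$. Theorem~\ref{theo: dual representation} then delivers
\[
\rho_u(X)=\sup_{Z\in\cC}\bigl\{\sigma_{\agg^{-1}(\cA)}(Z)-\E[XZ]\bigr\}, \qquad \cC=\{Z\in L^1_+ \,;\, \E[Z]=1\}.
\]

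The main remaining task is to make $\sigma_{\agg^{-1}(\cA)}$ explicit. A direct computation of $\sigma_\cA$ (comparing the constant $U=u_0$ with bounded mean-zero perturbations in $L^\infty$) shows that $\sigma_\cA(W)=u_0\E[W]$ when $W\in\R_+\mathbf{1}$ and $\sigma_\cA(W)=-\infty$ otherwise, so $\cB(\cA)=\R_+\mathbf{1}$. Substituting this into the formula of Proposition~\ref{prop: alpha in the Lambda case} (which produces the concave conjugate $u^\bullet$) and reparametrizing via $\lambda=1/c$ yields
\[
\alpha(Z)=\sup_{\lambda>0}\frac{1}{\lambda}\bigl(u_0+\E[u^\bullet(\lambda Z)]\bigr).
\]
The hypothesis $u(x^\ast)>u_0$ places $\agg(x^\ast)=u(x^\ast)$ in the $\sigma(L^\infty,L^1)$-interior of the half-space $\cA$, so Proposition~\ref{prop:S(X)interiorA}(i) gives $\sigma_{\agg^{-1}(\cA)}=\alpha$ with no upper-semicontinuous hull required; identifying each $Z\in\cC$ with the density $d\probq/d\probp$ of a probability measure $\probq\ll\probp$ then produces the stated representation. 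The delicate ingredients are the Rockafellar--Wets interchange of infimum and integral underlying Proposition~\ref{prop: alpha in the Lambda case} and the interiority argument used to drop the upper-semicontinuous hull, which in general cannot be removed (cf.\ Example~\ref{ex:alpha_not_usc}).
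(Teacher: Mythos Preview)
Your proposal is correct and follows the same overall strategy as the paper: both cast $\rho_u$ as a ``first allocate, then aggregate'' risk measure with $d=1$, $\cX=\cE=L^\infty$, $\agg=u$, and $\cA=\{U\in L^\infty \,;\, \E[U]\geq u_0\}$, compute $\cB(\cA)=\R_+\mathbf{1}$ (so that $\alpha=\alpha^+$ here), and then appeal to Proposition~\ref{prop: alpha in the Lambda case} to obtain the explicit penalty $\alpha(Z)=\sup_{\lambda>0}\lambda^{-1}(u_0+\E[u^\bullet(\lambda Z)])$.

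The one genuine difference lies in how the final dual representation is extracted. You go through Theorem~\ref{theo: dual representation}, which produces $\sigma_{\agg^{-1}(\cA)}$, and then invoke the interiority criterion of Proposition~\ref{prop:S(X)interiorA} (using that $u(x^\ast)>u_0$ places $\agg(x^\ast)$ in the $\sigma(L^\infty,L^1)$-interior of the half-space $\cA$) to replace $\sigma_{\agg^{-1}(\cA)}$ by $\alpha$. The paper instead bypasses Theorem~\ref{theo: dual representation} altogether: it uses the dual description of the \emph{acceptance set} $\agg^{-1}(\cA)$ from Theorem~\ref{prop:Lambda-1(A)_dualrepr}, which is already stated in terms of $\alpha^+$, and then reads off $\rho_u(X)=\inf\{m\in\R \,;\, X+m\in\agg^{-1}(\cA)\}$ directly from that intersection, exploiting only the positive homogeneity of $\alpha^+$ (Proposition~\ref{prop:properties_alpha}). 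This avoids the interiority argument entirely and hence does not rely on Proposition~\ref{prop:S(X)interiorA}. Your route is perfectly valid and perhaps more in line with the general duality machinery of Section~3; the paper's route is a shade more economical because it never needs to identify $\alpha$ with the support function.
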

\begin{proof}
It is well-known that $\rho_u$ is convex and $\sigma(L^\infty,L^1)$-lower semicontinuous. To establish the above representation, note that $\rho_u$ can be viewed as a ``first allocate, then aggregate''-type systemic risk measure corresponding to the specifications
\[
d=1, \ \ \ (\cX,\cX')=(L^\infty,L^1), \ \ \ (\cE,\cE')=(L^\infty,L^1), \ \ \ \agg(X)=u(X), \ \ \ \cA=\{U\in L^\infty \,; \ \E[U]\geq u_0\}.
\]
First of all, note that $\cB(\cA)=\R_+$ and $\sigma_\cA(\lambda)=\lambda u_0$ for every $\lambda\in\R_+$. Since $\cB(\cA)\cap\cE'_{++}$ is nonempty, we can work with $\alpha^+$; see Definition~\ref{def: penalties}. It follows from Proposition~\ref{prop: alpha in the Lambda case} that
\begin{eqnarray*}
\alpha^+(Z)
&=&
\sup_{W\in\cB(\cA)\cap\cE'_{++}}\bigg\{\sigma_\cA(W)+
\E\bigg[u^\bullet\bigg(\frac{Z}{W}\bigg)W\bigg]\bigg\} \\
&=&
\sup_{\lambda>0}\bigg\{\lambda u_0+ \E\bigg[u^\bullet\bigg(\frac{Z}{\lambda}\bigg)\lambda\bigg]\bigg\} \\
&=&
\sup_{\lambda>0}\bigg\{\frac{1}{\lambda}(u_0+\E[u^\bullet(\lambda Z)])\bigg\}
\end{eqnarray*}
for every nonzero $Z\in\cX'_+$. The representation of $\agg^{-1}(\cA)$ in Theorem~\ref{prop:Lambda-1(A)_dualrepr} yields
\[
\agg^{-1}(\cA) = \bigcap_{Z\in\cX'_+\setminus\{0\}}\{X\in\cX \,; \ \E[XZ]\geq\alpha^+(Z)\} = \bigcap_{\probq\ll\probp}\bigg\{X\in\cX \,; \ \E_\probq[X]\geq\alpha^+\bigg(\frac{d\probq}{d\probp}\bigg)\bigg\},
\]
where we used that $\dom(\alpha^+)\subset\cX'_+$ and that $\alpha^+$ is positively homogeneous; see Proposition~\ref{prop:properties_alpha}. It remains to observe that
\[
\rho_u(X) = \inf\{m\in\R \,; \ X+m\in\agg^{-1}(\cA)\}
\]
for every $X\in\cX$.
\end{proof}


{\footnotesize

}

\end{document}